\newtheorem{theorem}{Theorem}[section]
\newtheorem{lemma}[theorem]{Lemma}
\theoremstyle{definition}
\newtheorem{definition}[theorem]{Definition}
\newtheorem{remark}{Remark}
\newtheorem{exm}{Example}
\def\ji {\char'032}
\def\ja {\char'037}
\def\m  {\char'176}
\font\rrm=wncyr8%
\font\rit=wncyi8
\newcommand{\A}{A^{-1}}
\newcommand{\R}{\mathbb{R}}
\newcommand{\N}{\mathbb{N}}
\newcommand{\p}{\partial}
\newcommand{\ds}{\displaystyle}
\DeclareMathOperator{\Span}{\mathrm{span\,}}
\DeclareMathOperator{\diag}{\mathrm{diag}}
\title[VIRTUAL BILLIARDS IN PSEUDO--EUCLIDEAN
SPACES]{VIRTUAL BILLIARDS IN PSEUDO--EUCLIDEAN
SPACES: DISCRETE HAMILTONIAN AND CONTACT INTEGRABILITY}
\author[Bo\v zidar Jovanovi\'c \and
Vladimir Jovanovi\'c]{}
\subjclass{70H06, 37J35, 37J55, 51M05}
\keywords{Discrete integrability, billiards, Lax representation, confocal quad\-rics, Poncelet theorem, virtual reflection}
\email{bozaj@mi.sanu.ac.rs}
\email{vlajov@blic.net}
\begin{document}

\maketitle

\centerline{\scshape Bo\v zidar Jovanovi\'c}
\medskip
{\footnotesize
\centerline{Mathematical Institute SANU}
\centerline{Serbian Academy of Sciences and Arts}
\centerline{Kneza Mihaila 36, 11000 Belgrade, Serbia}
}

\medskip

\centerline{\scshape Vladimir Jovanovi\'c}
\medskip
{\footnotesize
 \centerline{Faculty of Sciences}
   \centerline{University of Banja Luka}
   \centerline{Mladena Stojanovi\'ca 2, 51000 Banja Luka, Bosnia
and Herzegovina}
}

\bigskip

\begin{abstract}
The aim of the paper is to unify the efforts in the study of integrable billiards within quadrics in flat and curved spaces
and to explore further the interplay of symplectic and contact integrability.
As a starting point in this direction, we consider virtual billiard dynamics within quadrics in
pseudo--Euclidean spaces. In contrast to the usual
billiards, the incoming velocity and the velocity after the
billiard reflection can be at opposite sides of the tangent plane
at the reflection point. In the symmetric case we prove noncommutative integrability of the
system and give a geometrical interpretation of integrals, an
analog of the classical Chasles and Poncelet theorems and
we show that the virtual billiard dynamics provides a natural framework in
the study of billiards within quadrics in projective spaces, in
particular of billiards within ellipsoids on the sphere $\mathbb
S^{n-1}$ and the Lobachevsky space $\mathbb H^{n-1}$.
\end{abstract}

\tableofcontents

\section{Introduction}

It is well known that the billiards within ellipsoids are the only known integrable billiards with smooth boundary in constant
curvature spaces \cite{AA, Bo, BM1, BM2, DrRa, KoTr, Ta, Ta1, Ves3}. The elliptical billiards in pseudo-Euclidean spaces are also integrable \cite{KT, DR}.
We will try to present all these integrable models through a unified perspective, within the framework
of the virtual billiard dynamic (see \cite{JJ}).

A pseudo--Euclidean space $\mathbb E^{k,l}$ of signature $(k,l)$,
$k,l\in\N,\, k+l=n$, is the space $\R^{n}$ endowed with the scalar
product
\[
\langle x,y\rangle = \sum_{i=1}^k x_iy_i - \sum_{i=k+1}^{n}
x_iy_i\quad (x,y\in\R^{n}).
\]

Two vectors $x,y$ are {\it orthogonal}, if $\langle x,y\rangle=0$.
A vector $x\in \mathbb E^{k,l}$ is called \emph{space--like},
\emph{time--like}, \emph{light--like}, if $\langle x,x\rangle$
is positive, negative, or $x$ is orthogonal to itself,
respectively. Denote by $(\cdot,\cdot)$ the Euclidean inner
product in $\R^{n}$ and let
$$
E=\diag(\tau_1,\dots,\tau_n)=\mbox{diag}(1,\dots,1,-1,\dots,-1),
$$
where $k$ diagonal elements are equal to 1 and $l$ to $-1$. Then
$\langle x,y\rangle=(Ex,y)$, for all $x,y\in\R^{n}$.

We consider a $n-1$--dimensional quadric
\begin{equation}\label{elipsoid}
\mathbb{Q}^{n-1}=\left\{x\in \mathbb E^{k,l}\,|\, (A^{-1}x,x)=1 \right\},
\end{equation}
where
\begin{equation}\label{A-matrix}
A=\diag(a_1,\dots,a_n), \quad a_i\ne 0, \quad i=1,\dots,n.
\end{equation}

A point $x\in\mathbb{Q}^{n-1}$ is {\it singular}, if a normal $E\A x$ at $x\in\mathbb{Q}^{n-1}$
is
light--like: $(EA^{-2}x,x)=0$, or equivalently, the induced metric is degenerate at
$x$.

In the case that $A$ is positive definite, following Khesin and
Tabachnikov \cite{KT} and Dragovi\'c and Radnovi\'c \cite{DR}, we
define a billiard flow inside the ellipsoid \eqref{elipsoid} in
$\mathbb E^{k,l}$ as follows. Between the impacts, the motion is uniform
along the straight lines. If $x\in{\mathbb Q}^{n-1}$ is
non--singular, then the normal $E\A x$ is transverse to
$T_x\mathbb Q^{n-1}$ and the incoming velocity vector $w$ can be
decomposed as $w=t+n$, where $t$ is its tangential and $n$ the
normal component in $x$. The velocity vector after reflection is
$w_1=t-n$. If $x\in{\mathbb Q}^{n-1}$ is singular, the flow stops.

Let $\phi:(x_j,y_j)\mapsto (x_{j+1},y_{j+1})$ be the billiard
mapping, where $x_j\in{\mathbb Q}^{n-1}$ is a sequence of
non--singular impact points and $y_j$ is the corresponding
sequence of outgoing velocities (in the notation we follow
\cite{Ves3, Veselov, fedo}, which slightly differs from the one
given in \cite{MV}, where $y_{j}$ is the incoming velocity).
As in the Euclidean case (see \cite{Veselov, MV, fedo}), the billiard mapping $\phi$ is given by:
\begin{align}
x_{j+1}&=x_j+\mu_j y_j,\label{1bilijar}\\
y_{j+1}&=y_j+\nu_j EA^{-1}x_{j+1}, \label{2bilijar}
\end{align}
where the multipliers
$$
\mu_j=-2\,\frac{(\A x_j,y_j)}{(\A y_j,y_j)}, \qquad
\nu_j=2\,\frac{(\A
x_{j+1},y_{j+1})}{(EA^{-2} x_{j+1},x_{j+1})}
$$
are determined from the conditions
$$
(\A x_{j+1},x_{j+1})=(\A x_j,x_j)=1, \qquad \langle y_{j+1},y_{j+1}\rangle=\langle y_j,y_j\rangle.
$$

From the definition, the Hamiltonian $H=\frac12\langle y_j,y_j\rangle$
is an invariant of the mapping $\phi$. Therefore, the lines
$l_k=\{x_k+sy_k\,\vert\,s\in\R\}$ containing segments $x_kx_{k+1}$
of a given billiard trajectory are of the same type: they are all
either space--like ($H>0$), time--like ($H<0$) or light--like
($H=0$). Also, the function $J_j=(\A x_j,y_j)$ is an invariant of the
billiard mapping (see Lemma 3.1 in \cite{JJ}).

Note that the billiard mapping \eqref{1bilijar}, \eqref{2bilijar}
is well defined for arbitrary quadric $\mathbb Q^{n-1}$ given by
\eqref{elipsoid} and not only for ellipsoids. In that case, the outgoing velocity (directed from $x_k$ to $x_{k+1}$) is either $y_k$ or $-y_k$, while
the segments
$x_{k-1}x_k$ and $x_k x_{k+1}$ determined by 3 successive points
of the mapping \eqref{1bilijar}, \eqref{2bilijar} may be:

\begin{itemize}

\item[(i)] on the same side of the tangent plane $T_{x_k}\mathbb
Q^{n-1}$;

\item[(ii)] on the opposite sides of the tangent plane
$T_{x_k}\mathbb Q^{n-1}$.

\end{itemize}

\begin{figure}[ht]
\includegraphics[height=55mm]{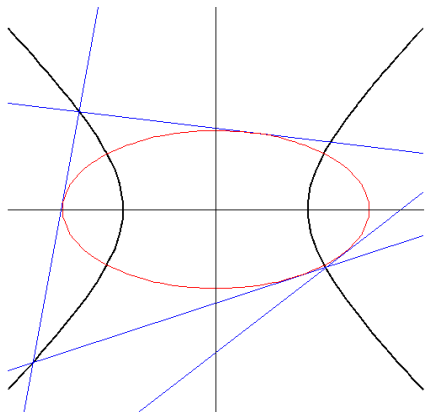}
\caption{A segment of a virtual billiard trajectory within
hyperbola ($a_1>0,a_2<0$) in the Euclidean space $\mathbb E^{2,0}$. The
caustic is an ellipse. }
\end{figure}

In the case (i) we have a part of the usual pseudo--Euclidean
billiard trajectory, while in the case (ii) the billiard
reflection corresponds to the points $x_{k-1} x_k x'_{k-1}$, where
$x'_{k+1}$ is the symmetric image of $x_{k+1}$ with respect to
$x_k$. In the three-dimensional Euclidean case, Darboux referred
to such reflection as the {\it virtual reflection} (e.g., see
\cite{DR2006} and \cite{DrRa}, Ch. 5). In  Euclidean spaces of
arbitrary dimension, such configurations were introduced by
Dragovi\'c and Radnovi\'c   in \cite{DR2006}. It appears that a
multidimensional variant of Darboux's 4--periodic virtual
trajectory with reflections on two quadrics, refereed as
double--reflection configuration \cite{DrRa}, is fundamental in
the construction of the double reflection nets in Euclidean spaces
(see \cite{DR2012}) and in pseudo-Euclidean spaces (see
\cite{DR2014}). They also played a role in a construction of the
billiard algebra in \cite{DR2008}. The 4--periodic orbits of real
and complex planar billiards with virtual reflections are also
studied in \cite{Gl}.

\begin{definition} \cite{JJ}
Let $\mathbb Q^{n-1}$ be a quadric in the pseudo--Euclidean space
$\mathbb E^{k,l}$ defined by \eqref{elipsoid}. We refer to
\eqref{1bilijar}, \eqref{2bilijar} as the {\it virtual billiard
mapping}, and to the sequence of points $x_k$ determined by
\eqref{1bilijar}, \eqref{2bilijar} as the {\it virtual billiard
trajectory} within $\mathbb Q^{n-1}$.
\end{definition}

The system is defined outside the singular set
\begin{equation}\label{singular}
\Sigma=\{(x,y)\in T\R^n\,\,\vert\,\,(EA^{-2}x,x)=0 \,\, \vee
(A^{-1}x,y)=0\,\,\vee\,\,(A^{-1} y,y)=0\}
\end{equation}
and it is invariant under the action of a discrete group $\mathbb
Z^{n}_2$ generated by the reflections
\begin{equation}\label{group}
(x_i,y_i)\,\longmapsto\, (-x_i,-y_i), \qquad i=1,\dots,n.
\end{equation}

We can interpret \eqref{1bilijar}, \eqref{2bilijar} in the case
of non--light--like billiard trajectories as the equations of a
discrete dynamical system (see \cite{Veselov, MV, Ves3}) on
$\mathbb Q^{n-1}$ described by the discrete action functional:
$$
S[\mathbf x]=\sum_k \mathbf{L}(x_k,x_{k+1}), \qquad \mathbf
L(x_k,x_{k+1})=\sqrt{\vert \langle
x_{k+1}-x_{k},x_{k+1}-x_{k}\rangle \vert },
$$
where $\mathbf x=(x_k), \, k\in\mathbb Z$ is a sequence of points
on $\mathbb Q^{n-1}$. Note that the virtual billiard dynamics on
$\mathbb Q^{n-1}$ can have both virtual and real reflections.

Motivated by the Lax reprezentation for elliptical billiards with the Hooke's potential (Fedorov \cite{fedo}, see also \cite{BozaRos, R}),
we proved in \cite{JJ} that the trajectories $(x_j,y_j)$ of
\eqref{1bilijar}, \eqref{2bilijar} outside the singular set
\eqref{singular} satisfy the matrix equation
\begin{equation}\label{LAbil}
\mathcal{L}_{x_{j+1},y_{j+1}}(\lambda)=\mathcal{A}_{x_j,y_j}(\lambda)\mathcal{L}_{x_j,y_j}(\lambda)\mathcal{A}_{x_j,y_j}^{-1}(\lambda),
\end{equation}
with $2\times2$ matrices depending on the parameter $\lambda$
\begin{align*}
\mathcal{L}_{x_j,y_j}(\lambda)&=\left(\begin{array}{cc}
q_{\lambda}(x_j,y_j) &
q_{\lambda}(y_j,y_j)\\-1-q_{\lambda}(x_j,x_j) &
-q_{\lambda}(x_j,y_j)
\end{array}\right),\\
\mathcal{A}_{x_j,y_j}(\lambda)&= \left(\begin{array}{cc}
I_j\lambda+2J_j\nu_j &-I_j\nu_j\\ -2J_j\lambda& I_j\lambda
\end{array}\right),
\end{align*}
where $q_{\lambda}$ is given by
\begin{equation}\label{bilin}
q_{\lambda}(x,y)=((\lambda
E-A)^{-1}x,y)=\sum_{i=1}^k\frac{x_iy_i}{\lambda-a_i}-\sum_{i=k+1}^n\frac{x_iy_i}{\lambda+a_i},
\end{equation}
and
\begin{equation}\label{POMOC}
J_j=(\A x_j,y_j),\quad I_j=-(\A
y_j,y_j),\quad\nu_j=2J_j/(EA^{-2}x_{j+1},x_{j+1}).
\end{equation}

For a non--symmetric case ($\tau_i a_i \ne \tau_ja_j$) the matrix
representation  is equivalent to the system up to the $\mathbb
Z_2^n$--action \eqref{group}. Further, from the expression
\begin{equation}\label{nonSymL}
\det \mathcal
L_{x,y}(\lambda)=q_\lambda(y,y)(1+q_\lambda(x,x))-q_\lambda(x,y)^2=\sum_{i=1}^n\frac{f_i(x,y)}{\lambda-\tau_ia_i},
\end{equation}
one can derive the integrals $f_i$ in the form
\begin{equation}\label{intgeo}
f_i(x,y)=\tau_i y_i^2+\sum_{j\neq
i}\frac{(x_iy_j-x_jy_i)^2}{\tau_j a_i-\tau_i a_j}\quad
(i=1,\dots,n).
\end{equation}

\subsection*{Outline and results of the paper.}
In Section 2 we describe discrete symplectic  (Theorem \ref{symplteo}) and
contact integrability in the light--like case (Theorem \ref{velika}) of the
virtual billiard dynamics directly, by the use of the
Dirac--Poisson bracket. This is slightly different from the
construction within the framework of the symplectic reduction
given by Khesin and Tabachnikov \cite{KT, KT2}.

In the symmetric case, when $a_i\tau_i=a_j\tau_j$ for some indexes
$i,j$, we further develop the analysis from \cite{JJ} of geodesic
flows on $\mathbb Q^{n-1}$ and elliptical billiards. We prove
noncommutative integrability of the system (Theorem \ref{integraliT}, Section
3) and, by a subtle estimate of the number of real zeros in the spectral parameter $\lambda$ of the rational function $\det\mathcal L_{x,y}(\lambda)$,  give a geometrical interpretation of integrals - an analog
of the classical Chasles and Poncelet theorems for symmetric
quadrics (Theorems \ref{ch1} -- \ref{poncelet}, Section 4). The Poncelet theorem is
based on a noncommutative variant of the description of Liouville
integrable symplectic correspondences given by Veselov \cite{Ves3,
Ves4} (Theorem \ref{DIS}, Section 3).

Further, in Section 5 we show that the virtual billiard dynamics
provides a natural framework in the study of billiards within
quadrics in projective spaces, in particular the billiards within
ellipsoids on the sphere  $\mathbb S^{n-1}$ and the Lobachevsky
space $\mathbb H^{n-1}$. It is well known that the ellipsoidal
billiards on $\mathbb S^{n-1}$ and $\mathbb H^{n-1}$  are
completely integrable \cite{Bo, Ves2, Ta, DGJ}. The "big" $n\times
n$--matrix representation of the ellipsoidal $\mathbb
H^{n-1}$--billiard, together with the integration of the flow is
obtained in \cite{Ves2}. In this paper we provide a "small"
$2\times 2$--matrix representation (Theorem \ref{billiardLA2}), a modification
of \eqref{LAbil}, as well as the Chasles theorem  (Theorem \ref{sfera}).

\section{Symplectic and contact properties of the virtual billiard dynamics}

\subsection{Hamiltonian description.}
In the pseudo-Euclidean case it is convenient to use the following
symplectic form on $\R^{2n}=T\mathbb E^{k,l}(x,y)$ (see \cite{KT}):
\[
\omega=Edy \wedge dx=\sum_{i=1}^k dy_i\wedge dx_i-\sum_{i=k+1}^n
dy_i\wedge dx_i,
\]
obtained after identification $T^*\mathbb E^{k,l}(x,p)\cong T\mathbb E^{k,l}(x,y)$
using the scalar product $\langle\cdot,\cdot\rangle$. The
corresponding Poisson bracket is
\begin{equation}\label{canonical}
\{f,g\}=\sum_{i=1}^k\frac{\p f}{\p x_i}\frac{\p g}{\p
y_i}-\sum_{i=k+1}^n\frac{\p f}{\p x_i}\frac{\p g}{\p y_i} -
\sum_{i=1}^k\frac{\p f}{\p y_i}\frac{\p g}{\p x_i} +
\sum_{i=k+1}^n\frac{\p f}{\p y_i}\frac{\p g}{\p x_i}.
\end{equation}

Consider a $(2n-2)$--dimensional submanifold $M_h$ of $\R^{2n}$
defined by
\begin{align*}
M_h &=\{(x,y)\in
\R^{2n}\backslash\Sigma\,\,\vert\,\,\phi_1=(A^{-1}x,x)=1, \,\,
\phi_2=2H=\langle y,y\rangle=h\}\\
   &=(\mathbb{Q}^{n-1}\times S^{n-1}_h)\backslash\Sigma,
\end{align*}
where $\Sigma$ is given by \eqref{singular} and
$S^{n-1}_h=\{y\in \R^n\,\vert\,\langle y,y\rangle=h\}$ is a
pseudosphere ($h\ne 0$) or a light--like cone ($h=0$).

Due to $\{\phi_1,\phi_2\}=4(A^{-1}x,y)\ne 0$ on $M_h$, it follows
that $M_h$ is a symplectic submanifold of $(\R^{2n},\omega)$.
Recall, for $F_1,F_2\in C^{\infty}(M_h)$, the Hamiltonian vector
field $X_{F_i}$ is defined by $i_{X_{F_i}}\omega_{M_h}=-dF_i$,
while the Poisson bracket is given by
$\{F_2,F_1\}_{M_h}=X_{F_1}(F_2)$.

Alternatively, we can define the Poisson bracket in redundant
variables by the use of Dirac's construction (e.g., see
\cite{Moser, Su}). Let $F_1=f_1\vert_{M_h}$, $F_2=f_2\vert_{M_h}$,
$f_1,f_2\in C^\infty(R^{2n})$. Then
\begin{equation}\label{PD}
\{F_1,F_2\}_{M_h}=\{f_1,f_2\}_{M_h}=\{f_1,f_2\}-\frac{\{\phi_1,f_1\}\{\phi_2,f_2\}-\{\phi_2,f_1\}\{\phi_1,f_2\}}{\{\phi_1,\phi_2\}}.
\end{equation}

The bracket is characterized by
\begin{equation}\label{symplbracket}
\{x_i,x_j\}_{M_h}=0,\quad \{x_i,y_j\}_{M_h}=\tau_i\delta_{ij}-\frac{x_jy_i\tau_j a_j^{-1}}{(A^{-1}x,y)}, \quad \{y_i,y_j\}_{M_h}=0.
\end{equation}

\begin{theorem}\label{symplteo}
{\rm (i)} The mapping $\phi:M_h\rightarrow M_h$,
$\phi(x_k,y_k)=(x_{k+1},y_{k+1})$ given by \eqref{1bilijar},
\eqref{2bilijar} is symplectic,
$$
\phi^{\ast}\omega_{M_h}=\omega_{M_h},
$$
where $\omega_{M_h}$ is
the restriction of the symplectic form $\omega$ to $M_h$.

{\rm (ii)} Assume that the quadric is not symmetric. The integrals
\eqref{intgeo} commute with respect to the Poisson bracket
$\{\cdot,\cdot\}_{M_h}$. The virtual billiard map is a completely
integrable discrete system on the phase space $M_h$, which is
almost everywhere foliated on $(n-1)$--dimensional Lagrangian
invariant manifolds.
\end{theorem}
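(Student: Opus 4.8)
For part (i) I would avoid generating functions and argue directly with the primitive $1$-form $\theta=\langle y,dx\rangle=(Ey,dx)$, whose differential is $d\theta=\sum_i\tau_i\,dy_i\wedge dx_i=\omega$; it then suffices to prove that $\phi^{\ast}\theta-\theta$ is exact on $M_h$. Substituting \eqref{2bilijar} and using $E^2=I$ gives
\[
\phi^{\ast}\theta=\langle y_{j+1},dx_{j+1}\rangle=\langle y_j,dx_{j+1}\rangle+\nu_j\,(\A x_{j+1},dx_{j+1}),
\]
and the last term equals $\tfrac12\nu_j\,d(\A x_{j+1},x_{j+1})$, which vanishes on $M_h$ since $(\A x_{j+1},x_{j+1})\equiv1$. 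Inserting $dx_{j+1}=dx_j+y_j\,d\mu_j+\mu_j\,dy_j$ from \eqref{1bilijar} and using $\langle y_j,y_j\rangle\equiv h$ (hence $\langle y_j,dy_j\rangle=\tfrac12 d\langle y_j,y_j\rangle\equiv0$) on $M_h$, I obtain $\phi^{\ast}\theta-\theta=h\,d\mu_j=d(h\mu_j)$. Being exact, this yields $\phi^{\ast}\omega_{M_h}=\omega_{M_h}$. The argument uses only the two defining conditions of the map, so it is uniform in $h$, including the light-like cone $h=0$ where $\theta$ itself is preserved.

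For part (ii) the functions \eqref{intgeo} are $\phi$-invariant because, by the Lax representation \eqref{LAbil}, $\det\mathcal L_{x,y}(\lambda)$ is conjugation-invariant and hence preserved, and by \eqref{nonSymL} its partial-fraction coefficients are exactly the $f_i$. The substantive point is involutivity with respect to $\{\cdot,\cdot\}_{M_h}$. I would reduce this to the canonical bracket via the Dirac formula \eqref{PD}: the correction to $\{f_i,f_j\}$ is a combination of $\{\phi_1,f_i\}\{\phi_2,f_j\}$ and $\{\phi_2,f_i\}\{\phi_1,f_j\}$, so it is enough to prove the two ambient facts $\{f_i,f_j\}=0$ and $\{\phi_2,f_i\}=0$, where $\phi_2=\langle y,y\rangle$. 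Since the antisymmetric denominators in \eqref{intgeo} cancel in pairs, $\sum_i f_i=\langle y,y\rangle=\phi_2$ \emph{identically}, so $\{\phi_2,f_i\}=\sum_j\{f_j,f_i\}$ and the second fact follows from the first; the Dirac correction then vanishes and $\{f_i,f_j\}_{M_h}=0$.

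The hard part is the ambient identity $\{f_i,f_j\}=0$, which I would deduce from the generating function $F_\lambda:=\det\mathcal L_{x,y}(\lambda)=q_\lambda(y,y)(1+q_\lambda(x,x))-q_\lambda(x,y)^2$ by showing $\{F_\lambda,F_\mu\}=0$ for all $\lambda,\mu$. Writing $q_\lambda(x,y)=\sum_i c_i(\lambda)x_iy_i$ with $c_i(\lambda)=(\lambda\tau_i-a_i)^{-1}$, the three blocks $P_\lambda=q_\lambda(x,x)$, $R_\lambda=q_\lambda(x,y)$, $S_\lambda=q_\lambda(y,y)$ close under \eqref{canonical}: using the resolvent identity $\tau_i c_i(\lambda)c_i(\mu)=(\mu-\lambda)^{-1}(c_i(\lambda)-c_i(\mu))$ one finds $\{P_\lambda,P_\mu\}=\{R_\lambda,R_\mu\}=\{S_\lambda,S_\mu\}=0$ together with
\[
\{P_\lambda,R_\mu\}=\tfrac{2}{\mu-\lambda}(P_\lambda-P_\mu),\quad \{R_\lambda,S_\mu\}=\tfrac{2}{\mu-\lambda}(S_\lambda-S_\mu),\quad \{P_\lambda,S_\mu\}=\tfrac{4}{\mu-\lambda}(R_\lambda-R_\mu).
\]
Substituting these into the Leibniz expansion of $\{F_\lambda,F_\mu\}$, the apparent pole at $\lambda=\mu$ cancels and the whole expression vanishes; this is the computation I expect to be the genuine obstacle. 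Conceptually it is transparent, since $\tr\mathcal L_{x,y}(\lambda)=0$ forces $F_\lambda=-\tfrac12\tr\mathcal L_{x,y}(\lambda)^2$, and the fundamental brackets above are of rational $r$-matrix type, for which spectral invariants are in involution. Extracting residues at $\lambda=\tau_i a_i$ and $\mu=\tau_j a_j$ then gives $\{f_i,f_j\}=0$.

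Finally, to conclude complete integrability I would count independent integrals. On $M_h$ one has $\dim M_h=2n-2$, and the only relation is $\sum_i f_i=h$, so at most $n-1$ of the $f_i$ are independent. In the non-symmetric case the poles $\tau_i a_i$ are pairwise distinct, and I would verify that $\rank d(f_1,\dots,f_{n-1})|_{TM_h}=n-1$ on a dense open set; by analyticity it suffices to exhibit a single regular point. Since $n-1=\tfrac12\dim M_h$ commuting, $\phi$-invariant, almost-everywhere independent functions are available, the discrete analogue of the Liouville–Arnold theorem yields the claimed almost-everywhere foliation of $M_h$ by $(n-1)$-dimensional Lagrangian $\phi$-invariant manifolds.
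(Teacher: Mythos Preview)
Your proof is correct, and for part (i) it is genuinely different from the paper's. The paper works entirely in coordinates with the Dirac--Poisson bracket: it writes down the explicit relations $\{x_i,y_j\}_{M_h}=\tau_i\delta_{ij}-x_jy_i\tau_ja_j^{-1}/(A^{-1}x,y)$, etc., and then checks by a lengthy direct computation (using the invariance $(\A\tilde x,\tilde y)=(\A x,y)$ and $(\A\tilde x,\tilde y)=-(\A\tilde x,y)$) that $\{\tilde x_i,\tilde y_j\}_{M_h}$ has the same form in the new variables. Your argument via the primitive $\theta=\langle y,dx\rangle$ is far shorter and more conceptual: it reduces everything to the two constraint identities $d(\A x_{j+1},x_{j+1})\equiv 0$ and $d\langle y_j,y_j\rangle\equiv 0$ on $M_h$, and yields as a bonus that $\phi^*\theta=\theta$ exactly when $h=0$ --- a fact the paper only obtains later (Theorem~\ref{velika}(ii)) by a separate argument. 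What the paper's approach buys instead is the intermediate relation \eqref{prva}, from which it observes (Remark~1) that the half-map $(x,y)\mapsto(\tilde x,y)$ is itself symplectic on $M_h$; your method does not isolate this.

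For part (ii) your strategy coincides with the paper's: both use the Dirac formula \eqref{PD}, the identity $\sum_i f_i=\phi_2$ to kill the correction term, and the ambient involutivity $\{f_i,f_j\}=0$. The difference is only in justification: the paper simply cites Khesin--Tabachnikov for $\{f_i,f_j\}=0$, whereas you sketch a self-contained $r$-matrix computation via the resolvent identity $\tau_ic_i(\lambda)c_i(\mu)=(\mu-\lambda)^{-1}(c_i(\lambda)-c_i(\mu))$, and you also address generic independence, which the paper leaves implicit.
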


\begin{proof} (i) Although it is straightforward, we feel
that it would be interesting to present a direct proof of the
statement. For our convenience we denote $x_k,y_k,\mu_k$,$\nu_k$,
$x_{k+1},y_{k+1}$ by $x,y,\mu,\nu,\tilde x,\tilde y$,
respectively. As earlier mentioned,
\begin{equation}\label{cons}
(A^{-1}\tilde x,\tilde y)=(A^{-1}x,y).
\end{equation}
Notice also that
\begin{equation}\label{minus}
(A^{-1}\tilde x,\tilde y)=-(A^{-1}\tilde x,y).
\end{equation}
Indeed, due to $\tilde y+y \in T_{\tilde x} \mathbb Q^{n-1}$, we
have
$$
(A^{-1}\tilde x,\tilde y)=\langle EA^{-1}\tilde x,\tilde
y\rangle=-\langle EA^{-1}\tilde x, y\rangle=-(A^{-1}\tilde x,y).
$$

According to \eqref{symplbracket} it suffices to prove that
\begin{equation}\label{symplbracketnov}
\{\tilde x_i,\tilde x_j\}_{M_h}=0,\quad \{\tilde x_i,\tilde y_j\}_{M_h}=\tau_i\delta_{ij}-\frac{\tilde x_j \tilde y_i\tau_j a_j^{-1}}{(A^{-1}\tilde x,\tilde y)}, \quad \{\tilde y_i,\tilde y_j\}_{M_h}=0.
\end{equation}

The proofs of the first and the third relation in
\eqref{symplbracketnov} are tedious and we will omit them here.
Assuming that $\{\tilde x_i,\tilde x_j\}_{M_h}=0$, we will prove
only the second relation. At the beginning let's show that
\begin{equation}\label{prva}
\{\tilde x_i,y_j\}_{M_h}=\tau_i\delta_{ij}-\frac{\tilde x_j y_i\tau_j a_j^{-1}}{(A^{-1}\tilde x, y)}.
\end{equation}

First, owing to $\{ y_i,y_j\}_{M_h}=0$ it is
\begin{align*}
\{(A^{-1}x,y),y_j\}_{M_h}=&\sum_{l=1}^ny_la_l^{-1}\{x_l,y_j\}_{M_h}\\
=&\sum_{l=1}^ny_la_l^{-1}\Big(\tau_l\delta_{lj}-
\frac{x_jy_l\tau_ja_j^{-1}}{(A^{-1}x,y)}\Big)\\
=& y_j\tau_ja_j^{-1}-\frac{x_j\tau_ja_j^{-1}}{(A^{-1}x,y)}\,(A^{-1}y,y).
\end{align*}
Consequently, from \eqref{symplbracket}, \eqref{cons}, \eqref{minus}, we have
\begin{align*}
\{\tilde x_i,y_j\}_{M_h}&= \{x_i-2\frac{(A^{-1}x,y)}{(A^{-1}y,y)}y_i,y_j\}_{M_h}\\
&= \{ x_i,y_j\}_{M_h}-\frac{2y_i}{(A^{-1}y,y)}\, \{(A^{-1}x,y),y_j\}_{M_h}\\
&= \tau_i\delta_{ij}-\frac{x_jy_i\tau_j a_j^{-1}}{(A^{-1}x,y)}-2\,\frac{y_iy_j\tau_j a_j^{-1}}{(A^{-1}y,y)}+2\,\frac{x_jy_i\tau_j a_j^{-1}}{(A^{-1}x,y)}\\
&= \tau_i\delta_{ij}+\frac{y_i\tau_j a_j^{-1}}{(A^{-1}x,y)}\Big(x_j-2\frac{(A^{-1}x,y)}{(A^{-1}y,y)}\, y_j\Big)\\
&= \tau_i\delta_{ij}-\frac{\tilde x_j y_i\tau_j a_j^{-1}}{(A^{-1}\tilde x, y)}.
\end{align*}
Now, using \eqref{prva} and \eqref{minus} we obtain
\begin{align*}
\{\tilde x_i,\nu\}_{M_h}&=\{\tilde x_i,2\frac{(A^{-1}\tilde x,\tilde y)}{(EA^{-2}\tilde x,\tilde x)}\}_{M_h}\\
&= -\frac2{(EA^{-2}\tilde x,\tilde x)}\{\tilde x_i, (A^{-1}\tilde x, y)\}_{M_h}\\
&=-\frac2{(EA^{-2}\tilde x,\tilde x)}\sum_{l=1}^n \tilde x_la_l^{-1}\{\tilde x_i, y_l\}_{M_h}\\
&= -\frac2{(EA^{-2}\tilde x,\tilde x)}\sum_{l=1}^n \tilde x_la_l^{-1}\Big(\tau_i\delta_{il}-\frac{\tilde x_l y_i\tau_l a_l^{-1}}{(A^{-1}\tilde x, y)}\Big)\\
&=  -\frac{2\tau_ia_i^{-1}\tilde x_i}{(EA^{-2}\tilde x,\tilde x)}+\frac{2y_i}{(A^{-1}\tilde x, y)}.
\end{align*}
Therefore,
\begin{align*}
\{\tilde x_i,\tilde y_j\}_{M_h} & =  \{\tilde x_i, y_j+\nu\tau_j a_j^{-1}\tilde x_j\}_{M_h}\\
&= \{\tilde x_i,y_j\}_{M_h}+\tau_j a_j^{-1}\tilde x_j \{\tilde x_i,\nu\}_{M_h}\\
&=  \tau_i\delta_{ij}-\frac{\tilde x_j y_i\tau_j a_j^{-1}}{(A^{-1}\tilde x, y)}
-2\,\frac{\tilde x_i\tilde x_j \tau_i a_i^{-1}\tau_j a_j^{-1}}{(EA^{-2}\tilde x,\tilde x)}
+2\,\frac{\tilde x_j y_i \tau_j a_j^{-1}}{(A^{-1}\tilde x, y)}\\
&=  \tau_i\delta_{ij}-\frac{\tilde x_j \tau_j
a_j^{-1}}{(A^{-1}\tilde x, \tilde y)}(y_i+\nu \tau_i
a_i^{-1}\tilde x_i)\\
&=\tau_i\delta_{ij}-\frac{\tilde x_j \tilde
y_i\tau_j a_j^{-1}}{(A^{-1}\tilde x,\tilde y)}.
\end{align*}

(ii) Note that the only relation between the integrals on $M_h$ is
\begin{equation}\label{relacija}
f_1+\dots+f_n=\langle y,y\rangle=h.
\end{equation}

Similarly as in the Euclidean space, we have $\{f_i,f_j\}=0$ (see
\cite{KT, KT2}). Further
$\{\phi_2,f_i\}=\{2H,f_i\}=\{f_1+\dots+f_n,f_i\}=0$, and therefore
$$
\{f_i,f_j\}_{M_h}=0, \quad i,j=1,\dots,n.
$$
\end{proof}

\begin{remark}{
Observe that $\{ \tilde x_i,\tilde x_j\}_{M_h}=0$, \eqref{prva},
and $\{ y_i,y_j\}_{M_h}=0$ imply that the mapping $(x,y)\mapsto
(\tilde x,y)$ is also symplectic on $M_h$.
}\end{remark}

\begin{remark}{
Note that in the virtual billiard mapping \eqref{1bilijar},
\eqref{2bilijar} we allow the trajectories both with $J>0$
and $J<0$ ($J=(A^{-1} x,y)=0$ defines the tangent space
$T_x\mathbb Q^{n-1}$). For example, in the ellipsoidal case when
$A$ is positive definite, $J>0$ means that $y$ is directed outward
$\mathbb Q^{n-1}$. It is also natural to consider the dynamics
of lines
$$
l_k=\{x_k+sy_k\,\vert\,s\in\R\}, \qquad k\in \mathbb Z,
$$
described by Khesin and Tabachnikov
within the framework of the symplectic reduction
for $A$ being positive definite \cite{KT}.
In our notation, in the space--like and time--like cases, the dynamics of lines corresponds to the virtual billiard dynamics
on $M_h/\pm 1$ with identified $y$ and $-y$, while in the light--like case it
corresponds to the induced dynamics on $\bar M=M_0/\R^*$, where we
take the projectivization of the light--like cone $S^{n-1}_0$.
The latter case will be studied in details below.
}\end{remark}

\subsection{Contact description.} In the light--like case $h=0$
we show  the existence of a contact structure associated to $M_0$.
Let us introduce an action of $\R^{\ast}=\R\backslash\{0\}$ on
$M_0$ by
$$
g_{\lambda}(x,y)=(x,\lambda y), \qquad \lambda\in\R^{\ast}.
$$
The action is evidently free and proper, from which we conclude
that the orbit space $\bar M:=M_0/\R^{\ast}$ is a smooth manifold
of dimension $\dim \bar M=\dim M_0-1=2n-3$ and the projection
$\pi:M_0\rightarrow \bar M$, $\pi(x,y)=(x,[y])$ is a surjective
submersion.

With the notation above, $(M_0,\omega_{M_0})$ is a symplectic
Liouville manifold:
$$
g_\lambda^*\omega_{M_0}=\lambda\omega_{M_0}.
$$

The associated Liouville vector field and the Liouville 1-form are
given by
$$
Z(x,y)=\frac{d}{d\lambda}g_{\lambda}(x,y)\big|_{\lambda=1}=(0,y)
$$
and
$$
\tilde\beta:=i_Z\omega_{M_0}=Ey\cdot dx \vert_{M_0},
$$
respectively. Then $d\tilde \beta=\omega_{M_0}$ and
$g_\lambda^*\tilde\beta=\lambda\tilde\beta$ (e.g, see \cite{LM}). It is well known that
the orbit space $\bar M$ carries the natural contact structure
induced by $\tilde\beta$ (Proposition 10.3, Ch. V, \cite{LM}). We
 describe this contact structure below.

Let
$$
\beta:=\frac{1}{J}\,\tilde\beta=\frac1{(A^{-1}x,y)}\,\tilde\beta.
$$

\begin{theorem}\label{velika}
{\rm (i)} There exists a unique 1-form $\bar\beta$ on $\bar M$, such
that $\beta=\pi^{\ast}\bar\beta$. Furthermore, the form
$\bar\beta$  is contact and $\bar R:=\pi_{\ast}X_J$ is the Reeb
vector field on $(\bar M,\bar\beta)$, where $X_J$ is the
Hamiltonian vector field of the function $J=(A^{-1}x,y)$ on $M_0$.

{\rm (ii)} The mapping $\bar\phi:\bar M\rightarrow\bar M$ defined by
$\bar\phi(x,[y]):=\pi(\phi(x,y))$ is contact,
$$
(\bar\phi)^{\ast}\bar\beta=\bar\beta.
$$

{\rm (iii)} Assume that the quadric is not symmetric. The functions
$f_i/J^2$ descend to the commutative integrals $\bar f_i$,
$$
[\bar f_i,\bar f_j]=0, \qquad i,j=1,\dots,n,
$$
of the contact mapping $\bar\phi$, where $[\cdot,\cdot]$ is the
Jacobi bracket on $(\bar M,\bar\beta)$. Further, $\bar f_i$ are
preserved by the Reeb vector field $\bar R$ of $(\bar
M,\bar\beta)$
$$
\bar R(\bar f_i)=0   \quad  \Longleftrightarrow \quad [1,\bar
f_i]=0, \quad i=1,\dots,n,
$$
and the contact mapping $\bar\phi$ is contact completely
integrable: the manifold $\bar M$ is almost everywhere foliated on
$(n-1)$--dimensional pre-Legendrian invariant manifolds.
\end{theorem}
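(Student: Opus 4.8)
The plan is to establish Theorem~\ref{velika} by descending the known symplectic structure on $M_0$ to the contact quotient $\bar M$, reducing the three claims to properties already proved on $M_0$. For part (i), I would first verify that $\beta = J^{-1}\tilde\beta$ is $\R^*$-invariant: since $g_\lambda^*\tilde\beta = \lambda\tilde\beta$ and $J(x,\lambda y) = \lambda J(x,y)$, the factor $\lambda$ cancels, so $g_\lambda^*\beta = \beta$. An invariant 1-form that moreover annihilates the Liouville field $Z$ (which it does, because $i_Z\tilde\beta = i_Z i_Z\omega_{M_0} = 0$) descends uniquely to a 1-form $\bar\beta$ on $\bar M$ with $\beta = \pi^*\bar\beta$; this is the standard basic-form criterion. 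To see $\bar\beta$ is contact I would compute $d\beta = d(J^{-1})\wedge\tilde\beta + J^{-1}\,\omega_{M_0}$ and check that $\beta\wedge(d\beta)^{n-2}$ restricts to a volume form on the $(2n-3)$-dimensional $\bar M$, equivalently that $\ker\beta$ is a contact distribution; the key identity is that $X_J$ is transverse to this kernel and satisfies $i_{X_J}\beta = 1$ together with $i_{X_J}d\beta = 0$ after projection, which is exactly the Reeb characterization $\bar\beta(\bar R)=1$, $i_{\bar R}d\bar\beta=0$. The computation $i_{X_J}d\beta = 0$ should follow from $L_{X_J}\tilde\beta = 0$ (since $J$ is linear in $y$, its flow commutes with the Liouville scaling) combined with $dJ(X_J)=0$.

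For part (ii), I would use that $\phi$ commutes with the $\R^*$-action, $\phi\circ g_\lambda = g_\lambda\circ\phi$, which holds because the billiard map \eqref{1bilijar}, \eqref{2bilijar} is equivariant under rescaling $y$ (the multipliers $\mu,\nu$ scale so that directions are preserved); this makes $\bar\phi$ well-defined. Then $\pi^*(\bar\phi^*\bar\beta) = \phi^*\pi^*\bar\beta = \phi^*\beta$, so it suffices to show $\phi^*\beta = \beta$ on $M_0$. Since $J$ is a billiard invariant ($J_{j+1}=J_j$, stated in the excerpt) and $\phi$ is symplectic by Theorem~\ref{symplteo}(i), one has $\phi^*\tilde\beta = \tilde\beta + dS$ for the relevant generating/primitive relation; more directly, $\phi^*\beta = J^{-1}\phi^*\tilde\beta$, and because $\phi^*\omega_{M_0}=\omega_{M_0}$ while $J$ is preserved, I expect $\phi^*\tilde\beta$ and $\tilde\beta$ to agree up to an exact form that is killed upon restriction to the contact hyperplane; the cleanest route is to show $\phi^*\beta - \beta$ is closed, basic, and annihilates $\bar R$, hence zero. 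Descending to $\bar M$ then gives $\bar\phi^*\bar\beta=\bar\beta$.

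For part (iii), the functions $f_i$ are homogeneous of degree $2$ in $y$, so $f_i/J^2$ is $\R^*$-invariant and descends to $\bar f_i$ on $\bar M$. The Jacobi bracket on a contact manifold is computed via the contact Hamiltonian vector fields, and the standard relation between the symplectic Poisson bracket on the symplectization and the Jacobi bracket on the contact base says that for functions $F,G$ pulled back and divided by the appropriate power of $J$, the Jacobi bracket $[\bar f_i,\bar f_j]$ vanishes precisely when the homogeneous degree-zero quotients Poisson-commute on $M_0$. Since $\{f_i,f_j\}=0$ (Theorem~\ref{symplteo}, using the non-symmetric hypothesis) and $\{J,f_i\}=0$ because $J$ and each $f_i$ are billiard invariants hence constant along the relevant flows, the quotients $f_i/J^2$ Poisson-commute modulo the ideal generated by the constraints, giving $[\bar f_i,\bar f_j]=0$. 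The condition $\bar R(\bar f_i)=0$ is equivalent to $[1,\bar f_i]=0$ by definition of the Reeb field as the Jacobi-Hamiltonian field of the constant $1$, and reduces to $X_J(f_i/J^2)=0$ on $M_0$, which follows from $\{J,f_i\}=0$ and $\{J,J\}=0$. Finally, contact complete integrability follows by counting: the $n$ functions $\bar f_i$ satisfy the single relation \eqref{relacija} restricted to $h=0$, namely $\sum_i f_i = 0$ forces $\sum_i \bar f_i J^2 = 0$, leaving $n-1$ independent commuting integrals plus the Reeb direction, which foliate the $(2n-3)$-dimensional $\bar M$ by $(n-1)$-dimensional pre-Legendrian invariant manifolds. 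The main obstacle I anticipate is verifying the contact condition and the invariance $\phi^*\beta=\beta$ cleanly without re-deriving the full bracket computations; the subtlety is keeping track of how the exact correction terms interact with restriction to $\ker\beta$, so I would isolate the identity $L_{X_J}\beta=0$ and the basic/horizontal nature of the relevant forms as the technical crux.
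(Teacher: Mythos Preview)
Your outline for part (i) is essentially the paper's argument and is fine. The real problems are in parts (ii) and (iii).

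\medskip

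\textbf{Part (ii): the gap.} Your proposed route ``show $\phi^*\beta-\beta$ is closed, basic, and annihilates $\bar R$, hence zero'' does not work. First, $\phi^*\beta-\beta$ need not be closed: while $\phi^*\tilde\beta-\tilde\beta$ is closed (its differential is $\phi^*\omega_{M_0}-\omega_{M_0}=0$), dividing by $J$ introduces $-J^{-2}\,dJ\wedge(\phi^*\tilde\beta-\tilde\beta)$, which has no reason to vanish before you already know $\phi^*\tilde\beta=\tilde\beta$. Second, even if a $1$-form on $\bar M$ were closed and annihilated $\bar R$, that would not force it to be zero (take $dx$ on $\R^3$ with contact form $dz-y\,dx$ and Reeb field $\partial_z$). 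So the final inference is simply false.

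The paper's argument avoids all of this with one observation you are close to but never use: from $g_\lambda\circ\phi=\phi\circ g_\lambda$, differentiating at $\lambda=1$ gives $\phi_*Z=Z$. Naturality of the interior product then yields
\[
\phi^*\tilde\beta=\phi^*(i_Z\omega_{M_0})=i_Z(\phi^*\omega_{M_0})=i_Z\omega_{M_0}=\tilde\beta,
\]
and since $\phi^*J=J$ you get $\phi^*\beta=\beta$ exactly, with no exact correction term to chase.

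\medskip

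\textbf{Part (iii): two smaller issues.} Your justification of $\{J,f_i\}_{M_0}=0$ (``$J$ and $f_i$ are billiard invariants hence constant along the relevant flows'') is not valid: invariance under the \emph{discrete} map $\phi$ does not imply Poisson commutation. The paper instead uses the algebraic identity $J^2=\sum_i \tau_i a_i^{-1} f_i$ (equation \eqref{veza}), from which $\{J,f_j\}_{M_0}=0$ follows immediately via $\{f_i,f_j\}_{M_0}=0$. Finally, your dimension count is off by one relation: on $\bar M$ the integrals $\bar f_i$ satisfy \emph{two} relations, $\sum_i\bar f_i=0$ and $\sum_i\tau_i a_i^{-1}\bar f_i=1$ (the latter again from \eqref{veza}), leaving $n-2$ independent commuting integrals. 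The invariant level sets are then $(2n-3)-(n-2)=n-1$ dimensional pre-Legendrian submanifolds, which matches the statement; your count of $n-1$ independent integrals would give $(n-2)$-dimensional level sets instead.
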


\begin{proof} (i) We have, 
\begin{equation}\label{ker}
\mbox{ker}\,\pi_{\ast}=\mbox{span}\,\{Z\}.
\end{equation}

As a consequence of $g_\lambda^*\tilde\beta=\lambda\tilde\beta$
and $g_{\lambda}^* J=\lambda J$ we conclude that  $\beta$ is
$\R^*$--invariant, $g_{\lambda}^{\ast}\beta=\beta$. By definition
of $\beta$ it is $\beta(Z)=0$, which in view of \eqref{ker} implies
that $\tilde \beta$ is basic (e.g. see \cite{LM}, Ch.~II) and
there exists a unique 1-form $\bar\beta$ on $\bar M$, such that
$\beta=\pi^{\ast}\bar\beta$.

Further note
\begin{align*}
\beta\wedge(d\beta)^{n-2}&= \frac1J\,\tilde\beta\wedge\Big(\frac1J\, d\tilde\beta-\frac1{J^2}dJ\wedge\tilde\beta\Big)^{n-2}\\
&= \frac1{J^{n-1}}\,\tilde\beta\wedge(d\tilde\beta)^{n-2}\\
&= \frac1{J^{n-1}}\, (i_Z\omega_{M_0})\wedge\omega_{M_0}^{n-2}.
\end{align*}

Taking into account that $i_Z\omega_{M_0}^{n-1}=(n-1)(i_Z\omega_{M_0})\wedge\omega_{M_0}^{n-2}$, we obtain that
\begin{equation}\label{vol}
\beta\wedge(d\beta)^{n-2}=\frac1{(n-1)J^{n-1}}i_Z\omega_{M_0}^{n-1}.
\end{equation}

Let $\bar\gamma_1,\dots,\bar\gamma_{2n-3}\in T_{(x,[y])}\bar M$ be
arbitrary linearly independent tangent vectors. Since $\pi$ is a
submersion, there exist $\gamma_1,\dots,\gamma_{2n-3}\in T_{(x,y)}
M_0$, such that $\pi_{\ast}\gamma_i=\bar\gamma_i$, for all
$i=1,\dots,2n-3$. According to \eqref{ker}, the vectors
$Z,\gamma_1,\dots,\gamma_{2n-3}$ are linearly independent. Because
$\omega_{M_0}^{n-1}$ is a volume form on $M_0$, from \eqref{vol}
we have
\begin{align*}
\bar\beta\wedge(d\bar\beta)^{n-2}(\bar\gamma_1,\dots,\bar\gamma_{2n-3})&=
\beta\wedge(d\beta)^{n-2}(\gamma_1,\dots,\gamma_{2n-3})\\
&=\frac1{(n-1)J^{n-1}}\,\omega_{M_0}^{n-1}(Z,\gamma_1,\dots,\gamma_{2n-3})\neq0.
\end{align*}

Hence, $\bar\beta$ is a contact form on $\bar M$.

Now, let $X_J$ be the Hamiltonian vector field of $J$ on $M_0$. We
have
\[
\tilde\beta(X_J)=\omega_{M_0}(Z,X_J)=dJ(Z)=\sum_{i=1}^na_i^{-1}(x_idy_i+y_idx_i)(Z)=J.
\]
Consequently,
$$
\bar\beta(\bar
R)=\bar\beta(\pi_{\ast}X_J)=\beta(X_J)=\frac1J\,\tilde\beta(X_J)=1
$$
and $\bar R:=\pi_{\ast}X_J$ is the Reeb vector field on $\bar M$.

\medskip

(ii) Evidently, $g_{\lambda}\circ\phi=\phi\circ g_{\lambda}$ for
all $\lambda\in\R^{\ast}$ and $\bar\phi$ is well defined. Taking
derivative in $\lambda=1$, we get $\phi_{\ast}Z=Z$ and
$i_Z\phi^{\ast}\omega_{M_0}=\phi^{\ast}(i_Z\omega_{M_0})$.
According to Theorem \ref{symplteo} the symplectic form
$\omega_{M_0}$ is $\phi$--invariant,
$\phi^{\ast}\omega_{M_0}=\omega_{M_0}$, and consequently,
\[
\phi^{\ast}\tilde\beta=\phi^{\ast}(i_Z\omega_{M_0})=i_Z\phi^{\ast}\omega_{M_0}=i_Z\omega_{M_0}=\tilde\beta.
\]
Dividing the last equation by $J$ and using $\phi^{\ast}J=J$, we get $\phi^{\ast}\beta=\beta$. This implies that
\[
\pi^{\ast}(\bar\phi)^{\ast}\bar\beta=(\bar\phi\circ\pi)^{\ast}\bar\beta=(\pi\circ\phi)^{\ast}\bar\beta=
\phi^{\ast}\pi^{\ast}\bar\beta=\phi^{\ast}\beta=\beta=\pi^{\ast}\bar\beta.
\]
Using the fact that $\pi$ is a submersion, we finally obtain
$(\bar\phi)^{\ast}\bar\beta=\bar\beta$.

\medskip

(iii) The Jacobi brackets $[\bar f_i,\bar f_j]$  are given by
$$
[\bar f_i,\bar f_j]=\bar Y_{\bar f_i}\bar f_j-\bar f_j\,\bar R
\bar f_i, \qquad i,j=1,\dots,n,
$$
where $\bar R$ is the Reeb vector field on $(\bar M,\bar\beta)$,
$\bar\beta(\bar R)=1$, $i_{\bar R} d\bar\beta=0$, and
$$
\bar Y_{\bar f_i}=\bar f_i\bar R+\bar H_i, \qquad i=1,\dots,n,
$$
is the contact Hamiltonian vector field of $\bar f_i$. Here,
$\bar H_i$ are the horizontal vector fields, $\bar\beta(\bar
H_i)=0$, satisfying
\begin{equation}\label{barH}
d\bar\beta(\bar H_i,\bar X)=-\big(d\bar f_i(\bar X)-\bar R\bar f_i\,\bar\beta(\bar X)\big), \qquad
i=1,\dots,n,
\end{equation}
for all tangent vectors $\bar X$ on $\bar M$.

In addition, having in mind that each tangent vector $\bar X$ on
$\bar M$ has the form $\bar X=\pi_{\ast}X$ for some vector field
$X$ on $M_0$, we have
\begin{align*}
d\bar\beta(\bar X,\bar R)&= d\bar\beta(\pi_{\ast}X,\pi_{\ast}X_J)= d\beta(X,X_J)\\
&=\frac1J\, \omega_{M_0}(X,X_J)-\frac1{J^2}\,( dJ\wedge\tilde\beta)(X,X_J)\\
&= \frac1J\big[dJ(X)-\frac1J\, dJ(X)\tilde\beta(X_J)+\frac1J\, dJ(X_J)\tilde\beta(X)\big]\\
&= \frac1J\big[dJ(X)-\frac1J\, dJ(X)\,J\big]=0.
\end{align*}

Next, we prove that $\bar f_i$ are integrals of the Reeb vector field $\bar R$. As the first step we need the assertion
\begin{equation}\label{Jkom}
\{J,f_i\}_{M_0}=0,
\end{equation}
for all integrals $f_i$, which, for example, follows from
\eqref{veza}. Using this, from the definition
${f_i}/{J^2}=\pi^{\ast}\bar f_i$, we have
 \begin{align}
 \bar R\bar f_i &= d\bar f_i(\pi_{\ast}X_J)\nonumber\\
 &= d\big(\frac{f_i}{J^2}\big)(X_J)\nonumber\\
 &= \frac1{J^2}\, df_i(X_J)-\frac{2f_i}{J^3}\, dJ(X_J)\label{Rfi}\\
 &= \frac1{J^2}\,\{J,f_i\}_{M_0}= 0.\nonumber
 \end{align}

There exist, at least locally, vector fields $H_i$ that project
to horizontal vector fields $\bar H_i$: $\pi_{\ast}H_i=\bar H_i$.
If we substitute $\bar X=\pi_{\ast}X_{f_j}$ in \eqref{barH}, we
obtain
\begin{equation}\label{H}
d\beta(H_i, X_{f_j})=-d\Big(\frac{f_i}{J^2}\Big)(X_{f_j}).
\end{equation}
 Our aim is to prove that
\begin{equation}\label{rel}
df_j(H_i)=\frac{2f_j}{J}\, dJ(H_i).
\end{equation}
Due to
\begin{align*}
d\Big(\frac{f_i}{J^2}\Big)(X_{f_j})&=\frac1{J^2}\, df_i(X_{f_j})-
\frac{2f_i}{J^3}\, dJ(X_{f_j})\\
&=\frac1{J^2}\,
\{f_i,f_j\}_{M_0}-\frac{2f_i}{J^3}\,\{J,f_j\}_{M_0}=0,
\end{align*}
the relation \eqref{H} becomes $d\beta(H_i,X_{f_j})=0$, or equivalently,
\begin{equation}\label{equa}
d\tilde\beta(H_i,X_{f_j})=\frac1J\,(dJ\wedge\tilde\beta)(H_i,X_{f_j}).
\end{equation}

Owing to
\[
\tilde\beta(X_{f_j})=\omega_{M_0}(Z,X_{f_j})=df_j(Z)=2f_j,
\]
and using \eqref{Jkom}, we obtain
\[
\frac1J\,(dJ\wedge\tilde\beta)(H_i,X_{f_j})=
\frac1J\,\big[dJ(H_i)\tilde\beta(X_{f_j})
-dJ(X_{f_j})\tilde\beta(H_i)\big]=\frac{2f_j}{J}\, dJ(H_i).
\]

On the other hand
\[
d\tilde\beta(H_i,X_{f_j})=\omega_{M_0}(H_i,X_{f_j})=df_j(H_i),
\]
which together with \eqref{equa} yields \eqref{rel}. In the end,
thanks to \eqref{Rfi}, \eqref{rel} we have
\begin{align*}
[\bar f_i,\bar f_j]&= \bar Y_{\bar f_i}\bar f_j-\bar f_j\,\bar R \bar f_i\\
&= d\bar f_j(\bar H_i)\\
&= d\Big(\frac{f_j}{J^{2}}\Big)(H_i)\\
&= \frac1{J^2}\, df_j(H_i)-\frac{2f_j}{J^3}\, dJ(H_i)=0.
\end{align*}

Finally note that the integrals $f_i$ and $J$ on $M_h$ are related
by
\begin{equation}\label{veza}
J^2=\sum_{i=1}^n \tau_ia_i^{-1} f_i,
\end{equation}
which together with \eqref{relacija} imply that among the
integrals $\bar f_i$ we have two relations,
$$
\bar f_1+\dots+\bar f_n=0, \qquad \tau_1a_1^{-1} \bar
f_1+\dots+\tau_n a_n^{-1} \bar f_n=1,
$$
and that the number of the independent ones is $n-2$. According to
the  theorem on contact integrability, their invariant level-sets
almost everywhere define $(n-1)$--dimensional pre-Legendrian
manifolds, which have an additional $(n-2)$--dimensional
Legendrian foliation (see \cite{KT2, Jov}).
\end{proof}

\section{Noncommutative integrability and symmetric quadrics}

\subsection{Discrete noncommutative integrability.}

Recall that a Hamiltonian flow on a $2n$-dimensional symplectic
manifold $(M^{2n},\omega)$ (respectively, a contact flow on a
$2n+1$-dimensional contact manifold $(M^{2n+1},\beta)$) is {\it
noncommutatively integrable}, if it has a {\it complete set of
integrals} $\mathcal F$. The set $\mathcal F$ closed under the
Poisson bracket (respectively, the Jacobi bracket) is {\it
complete}, if one can find $2n-r$ almost everywhere independent
integrals $F_1,F_2,\dots,F_{2n-r}\in\mathcal F$, such that
$F_1,\dots,F_r$ Poisson commute with all integrals \cite{N,MF}
(respectively, $F_1,\dots,F_r$ commute with respect to the Jacobi
bracket with all integrals, and the functions in $\mathcal F$ are
integrals of the Reeb flow, as well \cite{Jov}).

Regular compact connected invariant manifolds of the system are
$r$-dimensional isotropic tori generated by the Hamiltonian flows
of $F_1,\dots,F_r$, i.e., $r+1$-dimensional pre-isotropic tori
generated by the Reeb vector field and the contact Hamiltonian
flows of $F_1,\dots,F_r$. Here, a submanifold $N\subset M^{2n+1}$
is {\it pre-isotropic}, if it transversal to the contact
distribution $\mathcal H=\ker\beta$ and if $\mathcal G_x=T_x N
\cap\mathcal H_x$ is an isotropic subspace of the symplectic
linear space $(\mathcal H_x,d\beta)$, for all $x\in N$. The last
condition is equivalent to the condition that distribution
$\mathcal G=\bigcup_x G_x$ defines a foliation \cite{Jov}.

In a neighborhood of a regular torus there exist canonical {\it
generalized action--angle coordinates} \cite{N} ({\it generalized
contact action---angle coordinates} \cite{Jov}), such that
integrals $F_i$, $i=1,\dots,r$ depend only on the actions and the flow
is a translation in the angle coordinates. If $r=n$ we have the usual
Liouville integrability described in the Arnold-Liouville theorem
\cite{Ar}, i.e., contact integrability described in \cite{BM,
KT2}.

If instead of the continuous flow we consider the symplectic
mapping $\Phi: M^{2n}\to M^{2n}$, $\Phi^*\omega=\omega$ (the
contact mapping $\Phi: M^{2n+1}\to M^{2n+1}$, $\Phi^*\beta=\beta$)
having the complete set of integrals $\mathcal F$, as above,
compact connected components of an invariant regular level set
\begin{equation}\label{Mc}
M_c=\{ F_1=c_1, \, F_2=c_2,\, \dots\, ,\,F_{2n-r}=c_{2n-r}\,\}
\end{equation}
are $r$-dimensional isotropic tori ($r+1$-dimensional
pre-isotropic tori) and in their neighborhoods there exist canonical
generalized (contact) action--angle coordinates.

By the same argumentation as given by Veselov \cite{Ves3, Ves4}
for the Liouville integrable symplectic correspondences, we have
the following description of the dynamics.

\begin{theorem}\label{DIS}
Let $M_c=T_1 \cup T_2 \cup \dots \cup T_p$ be a compact regular
level set \eqref{Mc}. If the torus $T_i \cong \mathbb
R^{r(+1)}/\Lambda_i$ is $\Phi$--invariant, then the restriction of
the mapping $\Phi$ to $T_i$ is the shift by a constant vector
$a_i\in \mathbb R^{r(+1)}$
$$
\Phi([x])\equiv x+a_i, \qquad [x]\in T_i.
$$
Otherwise, if
$$
\Phi(T_{i_k})=T_{i_{k+1}}, \quad k=1,\dots,q\le p,\quad
i=i_1=i_{q+1}, \quad T_{i_k} \cong \mathbb
R^{r(+1)}/\Lambda_{i_k},
$$
define tori $T_{i_k i_{k+1}}=\mathbb R^{r(+1)}/\Lambda_{i_k
i_{k+1}}$ by the lattices
$$
\Lambda_{i_k i_{k+1}}=\{ b\in \mathbb R^{r(+1)}\,\vert\,
\Phi([x+b])\equiv \Phi([x])\}=\{ b\in \mathbb R^{r(+1)}\,\vert\,
\Phi([x])\equiv \Phi([x])+b\},
$$
$[x]\in T_{i_k}, \Phi([x])\in T_{i_{k+1}}$, containing
$\Lambda_{i_k}$ and $\Lambda_{i_{k+1}}$ as sublattices. Then we
have the following commutative diagrams
\begin{equation*} \begin{CD}
 T_{i_k}  @ > \Phi >> T_{i_{k+1}} \\
 @ V  \pi_{i_k}  VV   @ VV \pi_{i_{k+1}}   V \\
 T_{i_k i_{k+1}}  @ >  \tau_{a_{i_k i_{k+1}}}  >>  T_{i_k i_{k+1}} \end{CD}
\end{equation*}
where $\tau_{a_{i_k i_{k+1}}}$ are the shifts by constant vectors
$a_{i_k i_{k+1}}\in\mathbb R^{r(+1)}$. The $q$-th iteration of
$\Phi$ is given by
$$
\Phi^q([x])\equiv x+a_{i_k},\qquad [x]\in T_{i_k},
$$
for some vectors $a_{i_k}\in \mathbb R^{r(+1)}$. In particular, if
a point $[x]\in T_{i_k}$ is periodic with a period $mq$, then all
points of $T_{i_1}\cup T_{i_2} \cup \dots \cup T_{i_q}$ are
periodic with the same period.
\end{theorem}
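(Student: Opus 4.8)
The plan is to adapt Veselov's analysis of Liouville integrable symplectic correspondences \cite{Ves3, Ves4} to the present noncommutative and contact setting, using the generalized (resp.\ contact) action--angle coordinates near a regular torus recalled above \cite{N, MF, Jov}. The starting observation is that on the compact regular level set $M_c$ the commuting integrals $F_1,\dots,F_r$ (together with the Reeb field $R$ in the contact case) generate a locally free $\mathbb R^{r(+1)}$--action whose orbits are exactly the connected components $T_i$; this is what endows each $T_i$ with the affine structure $T_i\cong\mathbb R^{r(+1)}/\Lambda_i$ in which the angle coordinates are the flow parameters. First I would show that $\Phi$ commutes with this action: since $\Phi^{\ast}\omega=\omega$ (resp.\ $\Phi^{\ast}\beta=\beta$) and $\Phi^{\ast}F_j=F_j$ for every integral, the (contact) Hamiltonian vector fields are $\Phi$--related to themselves, $\Phi_{\ast}X_{F_j}=X_{F_j}$, and likewise $\Phi_{\ast}R=R$; hence $\Phi$ intertwines the generating flows and is $\mathbb R^{r(+1)}$--equivariant on $M_c$.

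Given equivariance, the restriction of $\Phi$ to any component is determined by the image of a single point. In the invariant case $\Phi(T_i)=T_i$, lifting to the universal cover $\mathbb R^{r(+1)}$ and writing $a_i$ for the lift of the image of the base point, equivariance gives $\tilde\Phi(x)=\tilde\Phi(0+x)=\tilde\Phi(0)+x=x+a_i$; this is precisely the elementary fact that a diffeomorphism of a homogeneous torus commuting with the transitive translation action is itself a translation. Well-definedness on the quotient is then automatic, so $\Phi([x])\equiv x+a_i$ on $T_i$.

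For the permuted case I would treat each equivariant diffeomorphism $\Phi\colon T_{i_k}\to T_{i_{k+1}}$ by the same lifting: the lift is again a translation $x\mapsto x+a_{i_k i_{k+1}}$ of $\mathbb R^{r(+1)}$. Since the lattices $\Lambda_{i_k}$ and $\Lambda_{i_{k+1}}$ are both contained in the lattice $\Lambda_{i_k i_{k+1}}$ defined in the statement, both $T_{i_k}$ and $T_{i_{k+1}}$ cover $T_{i_k i_{k+1}}=\mathbb R^{r(+1)}/\Lambda_{i_k i_{k+1}}$, and the translation descends to the genuine shift $\tau_{a_{i_k i_{k+1}}}$ making the asserted commutative diagram commute. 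Composing the $q$ lifts around the cycle gives a lift $x\mapsto x+(a_{i_1 i_2}+\dots+a_{i_q i_1})$ of $\Phi^q$, which is $T_{i_k}$--invariant and hence descends to the shift $\Phi^q([x])\equiv x+a_{i_k}$ on $T_{i_k}$, with $a_{i_k}$ the sum of the partial translations. The periodicity corollary follows at once: if some $[x_0]\in T_{i_k}$ has period $mq$, then $(\Phi^q)^m([x_0])=[x_0+m\,a_{i_k}]=[x_0]$ forces $m\,a_{i_k}\in\Lambda_{i_k}$, a condition independent of the base point, so $\Phi^{mq}([x])\equiv[x+m\,a_{i_k}]=[x]$ holds for every $[x]$, and all points of $T_{i_1}\cup\dots\cup T_{i_q}$ share the period $mq$.

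The torus bookkeeping is routine; the main obstacle is justifying the affine (translation) structure uniformly in the two settings, i.e.\ that a regular compact component really is an $r$-- (resp.\ $(r+1)$--) dimensional torus on which the flows of $F_1,\dots,F_r$ (and the Reeb field) act by translations in suitable coordinates. In the symplectic noncommutative case this is the generalized Arnold--Liouville theorem of Nekhoroshev and Mishchenko--Fomenko \cite{N, MF}, while in the contact case one must invoke the pre-isotropic analog and the contact action--angle coordinates \cite{Jov}; once this is in place, the equivariance argument and the covering--torus construction carry over verbatim from Veselov's symplectic argument.
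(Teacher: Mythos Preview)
Your proposal is correct and follows exactly the route the paper indicates: the paper does not supply a proof of this theorem at all, but simply prefaces the statement with ``By the same argumentation as given by Veselov \cite{Ves3, Ves4} for the Liouville integrable symplectic correspondences, we have the following description of the dynamics.'' Your write-up is precisely that argumentation, carried out in the noncommutative symplectic and contact settings using the generalized action--angle coordinates of \cite{N, MF} and \cite{Jov} that the paper has already invoked in the paragraphs preceding the theorem.
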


\subsection{Symmetric quadrics.}
We turn back to the virtual billiard dynamics and consider the
case when the quadric $\mathbb Q^{n-1}$ is symmetric. Define the
sets of indices $I_s\subset\{1,\dots,n\}\enspace (s=1,\dots r)$ by
the conditions
%
%
\begin{equation}\label{sym}
\begin{array}{l}
1^{\circ}\enspace \tau_i a_i=\tau_j a_j=\alpha_s\enspace \mbox{for}\enspace i,j\in I_s\enspace\mbox{and for all}\enspace s\in\{1,\dots,r\},\\[1ex]
2^{\circ}\enspace \alpha_s\neq \alpha_t\enspace \mbox{for}\enspace
s\neq t.
\end{array}
\end{equation}

Let
$$
\mathbb E^{k,l}=\mathbb E^{k_1,l_1} \oplus \dots\oplus \mathbb E^{k_r,l_r}
$$
be the associated decomposition of $\mathbb E^{k,l}$, where $\mathbb E^{k_s,l_s}$
are pseudo--Euclidean subspaces of the signature $(k_s,l_s)$ with
\[
k_s=\vert \{\tau_i\,\vert\,\tau_i=1,\,i\in I_s\}\vert, \quad
l_s=\vert \{\tau_i\,\vert\,\tau_i=-1,\,i\in I_s\}\vert, \quad
k_s+l_s=\vert I_s\vert.
\]

By $\langle \cdot,\cdot\rangle_s$ we denote the restriction of the
scalar product to the subspace $\mathbb E^{k_s,l_s}$:\footnote{To simplify
the notation, we omitted the projection operator $\pi_{s}:
\mathbb E^{k,l}\to \mathbb E^{k_s,l_s}$ at the left hand side of \eqref{scalar}.}
\begin{equation}\label{scalar}
\langle x,x\rangle_s=\sum_{i\in I_s}\tau_i x_i^2, \qquad x\in
\mathbb E^{k,l}.
\end{equation}

Let $SO(k_s,l_s)$ be the special orthogonal group of
$\mathbb E^{k_s,l_s}$. The quadric, as well as the virtual billiard flow,
is $ SO(k_1,l_1)\times\dots\times SO(k_r,l_r)$--invariant. The
integrals
\begin{equation}\label{moment-map}
\Phi_{s,ij}:=y_ix_j-x_iy_j,\qquad i,j\in I_s
\end{equation}
are proportional to the components of the corresponding momentum
mapping
$$
\Phi\colon M_h \longrightarrow so(k_1,l_1)^*\times \dots\times
so(k_r,s_r)^*.
$$

On the other hand, the determinant $\mbox{det}\,{\mathcal
L_{x,y}}(\lambda)$ is an invariant of the flow, and by expanding
it in terms of $1/(\lambda-\alpha_s),1/(\lambda-\alpha_s)^2$, we
get
\begin{align}\label{SymL}
\det\mathcal L_{x,y}(\lambda) &=
(1+q_{\lambda}(x,x))q_\lambda(y,y)-q_\lambda(x,y)^2\\
\nonumber &=\sum_{s=1}^r \frac{F_s}{\lambda-\alpha_s}+\frac{P_s}{(\lambda - \alpha_s)^2},
\end{align}
where the integrals $F_s, P_s$ are given by: \footnote{In
\cite{JJ} the term $\tau_i\tau_j$ is
omitted in the formula for $P_s$. This misprint, however, does not affect the results in \cite{JJ}. }
%
\begin{align*}
\label{integrals}F_s  =&\ds\sum_{i\in I_s}\Big(\tau_i y_i^2+\sum_{j\notin I_s}\frac{(x_iy_j-x_jy_i)^2}{\tau_j a_i-\tau_i a_j}\Big),\\
\nonumber P_s  =&\sum_{i,j\in I_s,i<j}\tau_i\tau_j\Phi^2_{s,ij}\quad
\text{for}\quad |I_s|\ge2 \qquad (P_s\equiv 0, \quad
\text{for}\quad |I_s|=1).
\end{align*}
%

The Hamiltonian is equal to the sum $H=\frac12\sum_{s=1}^r F_s$,
that is, among integrals $F_s$ we have the relation $\sum_s
F_s=2h$ on $M_h$.

For $h=0$, by $\bar F_s, \bar P_s, \bar \Phi_{s,ij}$ we denote the
functions on $\bar M$ obtained from $\R^*$--invariant integrals
$F_s/J^2, P_s/J^2, \Phi_{s,ij}/J$.

\begin{theorem}\label{integraliT}
{\rm (i)} The virtual billiard flow within symmetric quadric
\eqref{elipsoid}, \eqref{sym} is completely integrable in a
noncommutative sense by means of integrals $\mathcal F=\{F_s,
\Phi_{s,ij}\}$. The functions $F_s$,
$P_s=\sum_{i<j}\tau_i\tau_j\Phi_{s,ij}^2$ are central within the
algebra of integrals generated by $\mathcal F$:
\begin{eqnarray*}
&\{F_s,F_t\}_{M_h}=0,\quad \{F_s,P_t\}_{M_h}=0,\quad
\{P_s,P_t\}_{M_h}=0,\\
& \{F_s,\Phi_{t,ij}\}_{M_h}=0,\quad \{P_s,\Phi_{t,ij}\}_{M_h}=0,
\end{eqnarray*}
and their Hamiltonian vector fields generate $N-1$--dimensional
isotropic manifolds, regular level sets of the integrals $\mathcal
F$, where
$$
N=r+|\{s\in\{1,\dots,r\}\,:\, \vert I_s\vert\ge 2\}|.
$$

{\rm (ii)} In the light--like case, the mapping $\bar\phi$ is contact
completely integrable in a noncommutative sense by means of
integrals $\bar{\mathcal F}=\{\bar F_s, \bar{\Phi}_{s,ij}\}$. The
integrals are invariant with respect to the Reeb flow
$$
[1,\bar F_s]=0, \quad [1,\bar P_s]=0, \quad [1,\bar\Phi_{s,ij}]=0,
$$
and the functions $\bar F_s$, $\bar P_s$ are central within the
algebra of integrals generated by $\bar{\mathcal F}$:
\begin{eqnarray*}
& [\bar F_s,\bar F_t]=0,\quad [\bar F_s,\bar P_t]=0,\quad [\bar
P_s,\bar P_t]=0,\\
& [\bar F_s,\bar{\Phi}_{t,ij}]=0,\quad [\bar
P_s,\bar{\Phi}_{t,ij}]=0.
\end{eqnarray*}
Among central functions $\bar F_s$, $\bar P_s$ there are
$(N-2)$--independent ones and their contact Hamiltonian vector
fields, together with the Reeb vector field $\bar R$, generate
$N-1$--dimensional pseudo--isotropic manifolds  --  regular levels
sets of the integrals $\bar{\mathcal F}$.
\end{theorem}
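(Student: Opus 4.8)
The plan is to prove the symplectic statement (i) first and then to obtain the contact statement (ii) by pushing the integrals down through the reduction $\pi\colon M_0\to\bar M=M_0/\R^*$ of Theorem \ref{velika}. The heart of (i) is the commutation table among the functions $F_s$, $P_s$ and $\Phi_{s,ij}$, and I would organize it around two separate mechanisms: most of the relations come from the symmetry of the quadric, and only $\{F_s,F_t\}_{M_h}=0$ genuinely uses integrability.

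First I would record that each $F_s$ is invariant under the whole group $SO(k_1,l_1)\times\dots\times SO(k_r,l_r)$. For a fixed $i\in I_s$ and $j\in I_t$ with $t\ne s$ the symmetry $1^\circ$ of \eqref{sym} gives $\tau_j a_i-\tau_i a_j=\tau_i\tau_j(\alpha_s-\alpha_t)$, so the corresponding block of cross terms in the defining formula for $F_s$ becomes $\frac{\tau_i}{\alpha_s-\alpha_t}\sum_{j\in I_t}\tau_j(x_iy_j-x_jy_i)^2$; the inner sum is a polynomial in the three $SO(k_t,l_t)$-invariants $\sum_{j\in I_t}\tau_j x_j^2$, $\sum_{j\in I_t}\tau_j x_jy_j$, $\sum_{j\in I_t}\tau_j y_j^2$, and summing over $i\in I_s$ the same computation exhibits $SO(k_s,l_s)$-invariance. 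Since the $\Phi_{t,ij}$ are the components of the momentum map generating the $SO(k_t,l_t)$-action \eqref{moment-map}, Noether's identity yields $\{F_s,\Phi_{t,ij}\}_{M_h}=0$ for all $s,t$. Because $P_t=\sum_{i<j}\tau_i\tau_j\Phi^2_{t,ij}$ is the quadratic Casimir of $so(k_t,l_t)$ it commutes with every $\Phi_{t,kl}$ of its own block, and trivially (the blocks being built on disjoint indices) with every $\Phi_{s,kl}$, $s\ne t$; hence $\{P_s,\Phi_{t,ij}\}_{M_h}=0$. As $P_t$ is a function of the $\Phi_{t,ij}$ alone, the remaining relations $\{F_s,P_t\}_{M_h}=0$ and $\{P_s,P_t\}_{M_h}=0$ follow at once.

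The one genuinely spectral relation is $\{F_s,F_t\}_{M_h}=0$, and here I would argue by continuity from the non-symmetric case. For a fixed symmetry pattern deform the parameters by $a_i(\ep)=\tau_i(\alpha_s+\ep c_i)$ on each $I_s$ with distinct $c_i$, so that the quadric is non-symmetric for $\ep\ne0$ and Theorem \ref{symplteo}(ii) gives $\{f_i,f_j\}_{M_h}=0$ there. The intra-block cross terms of $\sum_{i\in I_s}f_i$ cancel in pairs by the antisymmetry of $1/(\tau_j a_i-\tau_i a_j)$, so $\sum_{i\in I_s}f_i(\cdot\,;\ep)$ is finite and converges in $C^\infty$ to $F_s$ as $\ep\to0$; since the Dirac bracket \eqref{symplbracket} also depends continuously on $\ep$, passing to the limit in $\{\sum_{i\in I_s}f_i,\sum_{j\in I_t}f_j\}_{M_h}=0$ gives $\{F_s,F_t\}_{M_h}=0$. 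Verifying this controlled cancellation and convergence is the main obstacle; everything else is either symmetry or bookkeeping. With the full table in hand, $\{F_1,\dots,F_r\}\cup\{P_s:\vert I_s\vert\ge2\}$ is a family of $N$ central functions subject to the single relation $\sum_sF_s=2h$ on $M_h$, leaving $N-1$ independent central integrals; counting the independent elements of $\mathcal F$ against $\dim M_h=2n-2$ and invoking the noncommutative integrability framework of Section 3 identifies the regular common level sets of $\mathcal F$ with the $(N-1)$-dimensional isotropic invariant manifolds, proving (i).

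For (ii) I would descend through $\pi$. The $F_s,P_s$ are homogeneous of degree two and the $\Phi_{s,ij}$ of degree one in $y$, so $F_s/J^2$, $P_s/J^2$, $\Phi_{s,ij}/J$ are $\R^*$-invariant and define $\bar F_s,\bar P_s,\bar\Phi_{s,ij}$ on $\bar M$. The Reeb invariance $\bar R(\bar F_s)=0$ follows from the computation \eqref{Rfi}, i.e. from $\{J,F_s\}_{M_0}=0$, which in turn follows from the symmetric form $J^2=\sum_t\alpha_t^{-1}F_t$ of \eqref{veza} together with $\{F_s,F_t\}=0$, since $2J\{J,F_s\}=\{J^2,F_s\}=\sum_t\alpha_t^{-1}\{F_t,F_s\}=0$; the same argument gives $\{J,P_s\}_{M_0}=0$, and $\{J,\Phi_{s,ij}\}_{M_0}=0$ because $J=\sum_t\alpha_t^{-1}\sum_{i\in I_t}\tau_i x_iy_i$ is itself $SO(k_t,l_t)$-invariant. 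The Jacobi-bracket relations $[\bar F_s,\bar F_t]=0$, $[\bar F_s,\bar\Phi_{t,ij}]=0$, and the rest are then produced by repeating verbatim the reduction computation of Theorem \ref{velika}(iii), which converts vanishing Poisson brackets upstairs, together with commutativity with $J$, into vanishing Jacobi brackets downstairs. Finally the two relations $\sum_s\bar F_s=0$ (from \eqref{relacija}) and $\sum_s\alpha_s^{-1}\bar F_s=1$ (from \eqref{veza}) leave $N-2$ independent central functions, whose contact Hamiltonian fields together with the Reeb field $\bar R$ span the $(N-1)$-dimensional pseudo--isotropic invariant manifolds claimed in the statement.
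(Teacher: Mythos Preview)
Your overall strategy matches the paper's: for (i) establish the commutation table and invoke the noncommutative framework of Section~3, and for (ii) push everything down through $\pi$ exactly as in the proof of Theorem~\ref{velika}. The paper does not actually prove (i) in detail---it refers to the analogous computations in \cite{BozaRos} and \cite{JJ}---so your self-contained argument (symmetry for everything involving $\Phi_{t,ij}$, plus a deformation from the non-symmetric case for $\{F_s,F_t\}_{M_h}=0$) is a legitimate and somewhat more explicit route to the same end.

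There is one slip you should fix. In the symmetric case the relation \eqref{veza} does \emph{not} read $J^2=\sum_t\alpha_t^{-1}F_t$; evaluating \eqref{SymL} at $\lambda=0$ gives
\[
J^2=\sum_{s=1}^r\bigl(\alpha_s^{-1}F_s-\alpha_s^{-2}P_s\bigr),
\]
and correspondingly the second relation on $\bar M$ is $\sum_s(\alpha_s^{-1}\bar F_s-\alpha_s^{-2}\bar P_s)=1$, not $\sum_s\alpha_s^{-1}\bar F_s=1$. This is exactly what the paper records after the theorem. Fortunately it does no damage to your argument: since you have already established $\{F_s,F_t\}_{M_h}=\{F_s,P_t\}_{M_h}=\{P_s,P_t\}_{M_h}=0$, the corrected identity still gives $2J\{J,F_s\}_{M_0}=\{J^2,F_s\}_{M_0}=0$ and likewise for $P_s$, so the Reeb-invariance and the Jacobi-bracket computations go through unchanged. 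Just replace the two incorrect formulas and the proof stands.
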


The first statement is an analog of Theorems 5.1, 5.2 for the the
Jacobi-Rosochatius problem \cite{BozaRos} and Theorem 4.1 for
geodesic flows on quadrics in pseudo--Euclidean spaces \cite{JJ},
where the Dirac construction is applied for the constraints
$$
(A^{-1}x,x)=1, \qquad (A^{-1}x,y)=0.
$$

The second statement follows from the same considerations as in
the proof of Theorem \ref{velika}.
For example, similarly as in
\eqref{Rfi}, we have
\begin{align*}
 \bar R\bar{\Phi}_{s,ij} &= d\bar{\Phi}_{s,ij}(\pi_{\ast}X_J) =d\big(\frac{\Phi_{s,ij}}{J}\big)(X_J)      \\
 &= \frac1{J}\, d\Phi_{s,ij}(X_J)-\frac{\Phi_{s,ij}}{J^2}\, dJ(X_J)=\frac1{J}\,\{J,\Phi_{s,ij}\}_{M_0}= 0.
 \end{align*}
The last equality follows from the commuting relations
$\{J,\phi_2\}=0$, $\{\Phi_{s,ij},\phi_2\}=0$, and
$\{J,\Phi_{s,ij}\}=0$.

Note that the relation \eqref{SymL} for $\lambda=0$ implies $
J^2=\sum_s (\alpha_s^{-1} F_s-\alpha_s^{-2} P_s)$, whence the relations
$$
\sum_s \bar F_s=0, \qquad \sum_s (\alpha_s^{-1} \bar
F_s-\alpha_s^{-2} \bar P_s) = 1
$$
among the integrals $\bar F_s, \bar P_s$ on $\bar M$.

\begin{remark}{
An example of noncommutatively integrable multi-valued symplectic correspondence
is a recently constructed discrete Neumann system on a Stiefel variety  \cite{FeJo}.
Another example of a discrete integrable contact system is the Heisenberg model in pseudo--Euclidean
spaces \cite{Jov2}. We shall discus relationship between the Heisenberg model and virtual billiard dynamics in a forthcoming paper.
}\end{remark}

\section{The Chasles and Poncelet theorems for symmetric quadrics}

\subsection{Pseudo--confocal quadrics.}
There is a nice geometric manifestation of integrability of
elliptical billiards in pseudo--Euclidean spaces given by Khesin
and Tabachnikov \cite{KT}. Consider the following
"pseudo--confocal" family of quadrics in $\mathbb E^{k,l}$
\begin{equation}\label{confocal}
\mathcal Q_\lambda: \quad  ((A-\lambda E)^{-1} x,
x)=\sum_{i=1}^n\frac{x_i^2}{a_i-\tau_i\lambda}=1, \quad \lambda
\ne \tau_i a_i, \quad i=1,\dots,n.
\end{equation}
For a nonsymmetric ellipsoid, the lines $l_k$, $k\in\mathbb Z$
determined by a generic space--like or time--like (respectively
light--like) billiard trajectory are tangent to $n-1$
(respectively $n-2$) fixed quadrics from the pseudo--confocal
family \eqref{confocal} ({\it pseudo--Euclidean version of the
Chasles theorem}, see Theorem 4.9 in \cite{KT} and Theorem 5.1 in
\cite{DR}). A related geometric structure of the set of singular
points for the pencil \eqref{confocal} is described in \cite{DR,
DR2014}.

Here we consider the case of symmetric quadrics and further
develop the analysis given in \cite{JJ}, where $A$ had been positive
definite.

Without loss of generality we assume in the section that
\begin{equation}\label{alfici}
\alpha_1>\alpha_2>\dots>\alpha_r.
\end{equation}
The equation \eqref{confocal}
has $r$ solutions in the complex plane for a generic $x$. The
following lemma estimates the number of real solutions in certain
cases.

\begin{lemma}\label{jacobi}
{\rm (i)} Through points $x\in \mathbb E^{k,l}$ that satisfy
\begin{eqnarray}
&& \mathrm{sign} \langle x,x \rangle_s=\kappa_1\ne 0, \quad
s=1,\dots,g,\nonumber\\
&& \mathrm{sign} \langle x,x \rangle_{s}=\kappa_2\ne 0, \quad
s=g+1,\dots,r,\label{znak1}
\end{eqnarray}
for some index $g$ pass either $r$ quadrics (when $\kappa_1=-1,
\kappa_2=+1$, $\kappa_1=\kappa_2=+1$ or $\kappa_1=\kappa_2=-1$),
or $r$ resp. $r-2$ quadrics (when $\kappa_1=+1$, $\kappa_2=-1$)
from the pseudo--confocal family \eqref{confocal}. Similarly, if
\begin{eqnarray}
&&\mathrm{sign} \langle x,x \rangle_s=\kappa_1, \quad s=1,\dots,g_1,g_2,\dots,r,\nonumber \\
&& \mathrm{sign} \langle x,x \rangle_{s}=\kappa_2,\quad
s=g_1+1,\dots,g_2-1,\quad \kappa_1\cdot\kappa_2=-1,\label{ZNAK}
\end{eqnarray}
for some indexes $g_1,g_2$, $g_1<g_2$, through $x$ pass
either $r$ or $r-2$ quadrics from the pseudo--confocal family
\eqref{confocal}.

{\rm (ii)} The quadrics passing through arbitrary point $x$ are
mutually orthogonal at $x$.
\end{lemma}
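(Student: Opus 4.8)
The plan is to reduce \eqref{confocal} to a one--variable root--counting problem that I settle by the intermediate value theorem together with a degree (parity) bound, while part (ii) is a short confocal computation. First I would exploit the symmetric structure: for $i\in I_s$ we have $a_i=\tau_i\alpha_s$ and $\tau_i^2=1$, so $x_i^2/(a_i-\tau_i\lambda)=\tau_i x_i^2/(\alpha_s-\lambda)$, and summing over each block turns \eqref{confocal} into $L(\lambda)=1$, where
$$
L(\lambda):=\sum_{s=1}^r\frac{\langle x,x\rangle_s}{\alpha_s-\lambda}.
$$
Under \eqref{znak1} or \eqref{ZNAK} each $\langle x,x\rangle_s\neq0$, so $L$ has a genuine simple pole at every $\alpha_s$, and $L(\lambda)\to0$ as $\lambda\to\pm\infty$. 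Since distinct admissible $\lambda$ give distinct quadrics, the number of quadrics through $x$ equals the number of real solutions of $L(\lambda)=1$.

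Next I would set up the count. Clearing denominators, $L(\lambda)=1$ is equivalent to the vanishing of
$$
P(\lambda)=\prod_{s=1}^r(\alpha_s-\lambda)-\sum_{s=1}^r\langle x,x\rangle_s\prod_{t\neq s}(\alpha_t-\lambda),
$$
a polynomial of degree $r$ with leading term $(-1)^r\lambda^r$; since $P(\alpha_s)=-\langle x,x\rangle_s\prod_{t\neq s}(\alpha_t-\alpha_s)\neq0$, the poles are not roots, so the solutions are exactly the real zeros of $P$. Hence their number is at most $r$ and congruent to $r$ modulo $2$ (non--real zeros come in conjugate pairs). For the lower bound I would run the intermediate value theorem on the $r+1$ intervals cut out by $\alpha_1>\dots>\alpha_r$. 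On $(\alpha_{s+1},\alpha_s)$ the limits of $L-1$ at the two ends have signs $-\mathrm{sign}\langle x,x\rangle_{s+1}$ and $+\mathrm{sign}\langle x,x\rangle_s$; these are opposite — forcing an odd, hence positive, number of roots — exactly when $\langle x,x\rangle_s,\langle x,x\rangle_{s+1}$ share a sign, and equal — allowing an even number — exactly at a sign change between the two blocks. On the outer intervals $L\to0$ at infinity, so $(-\infty,\alpha_r)$ contributes an odd number of roots iff $\mathrm{sign}\langle x,x\rangle_r=+1$, and $(\alpha_1,+\infty)$ iff $\mathrm{sign}\langle x,x\rangle_1=-1$.

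Then I would assemble the tally. Under \eqref{znak1} there is exactly one interior sign change (none when $\kappa_1=\kappa_2$). Counting the guaranteed odd intervals gives a lower bound of $r$ in the cases $\kappa_1=\kappa_2$ and $(\kappa_1,\kappa_2)=(-1,+1)$, which together with the upper bound $r$ forces exactly $r$ roots, automatically distinct since they lie in distinct intervals; for $(\kappa_1,\kappa_2)=(+1,-1)$ the three intervals $(-\infty,\alpha_r)$, $(\alpha_1,+\infty)$ and the sign--change interval are all even, leaving only the lower bound $r-2$, so with the parity constraint the number is $r$ or $r-2$. Under \eqref{ZNAK} there are two interior sign changes, hence two even middle intervals, and the same bookkeeping gives a lower bound $r-2$; combined with the parity bound the count is again $r$ or $r-2$. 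I expect the main obstacle to be precisely this bookkeeping — correctly reading off the end--limit signs at each pole and combining the interval--wise lower bounds with the degree/parity upper bound so as to separate the ``exactly $r$'' cases from the ``$r$ or $r-2$'' ones.

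Finally, for (ii) I would take two quadrics $\mathcal Q_\lambda,\mathcal Q_\mu$ with $\lambda\neq\mu$ through $x$, subtract their defining equations, and use $\tfrac{1}{a_i-\tau_i\lambda}-\tfrac{1}{a_i-\tau_i\mu}=\tfrac{\tau_i(\lambda-\mu)}{(a_i-\tau_i\lambda)(a_i-\tau_i\mu)}$ to obtain $\sum_i\tfrac{\tau_i x_i^2}{(a_i-\tau_i\lambda)(a_i-\tau_i\mu)}=0$. Writing the pseudo--Euclidean normal as $N_\lambda=E(A-\lambda E)^{-1}x$, the left--hand side is exactly $\langle N_\lambda,N_\mu\rangle$, so the normals, and hence the quadrics, are orthogonal at $x$.
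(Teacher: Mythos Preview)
Your proposal is correct and follows essentially the same route as the paper: both reduce \eqref{confocal} to the one--variable equation $S(\lambda)=\sum_s \langle x,x\rangle_s/(\alpha_s-\lambda)=1$, read off the one--sided limits at each pole $\alpha_s$ and at $\pm\infty$, and apply the intermediate value theorem on the $r+1$ subintervals of $\mathbb R\setminus\{\alpha_1,\dots,\alpha_r\}$. Your version is a bit more explicit in two places --- you clear denominators to a degree--$r$ polynomial to make the upper bound and the mod--$2$ parity constraint visible, and you write out the confocal orthogonality computation for (ii) rather than citing \cite{KT} --- but these are elaborations of the same argument, not a different one.
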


\begin{proof} (i) We slightly modify the proof of the
corresponding Khesin and Tabachnikov statement given for
non-symmetric ellipsoids (Theorem 4.5 \cite{KT}). Consider the
function
$$
S(\lambda)=((A-E\lambda)^{-1} x, x)=\sum_{s=1}^r \frac{\langle
x,x\rangle_s}{\alpha_s-\lambda}.
$$

We have
\begin{eqnarray*}
&& S(\lambda) \, \sim \, -1/\lambda\langle x,x\rangle , \qquad
\lambda\rightarrow \pm\infty,\\
&& S(\lambda)\, \sim\, \frac{\langle
x,x\rangle_{s}}{\alpha_{s}-\lambda}, \qquad \lambda\rightarrow
\alpha_{s}, \qquad s=1,\dots, r,
\end{eqnarray*}
implying
\begin{equation*}
  \lim_{\lambda\to\pm\infty}S=0\,\quad \lim_{\lambda\to\alpha_{s}-}S=\mathrm{sign} \langle x,x
\rangle_s\cdot \infty\,
 ,\quad
\lim_{\lambda\to\alpha_{s}+}S=-\mathrm{sign} \langle x,x
\rangle_s\cdot\infty \, .
\end{equation*}

Therefore, if \eqref{znak1} holds, the equation $S(\lambda)=1$ has
real solutions in the $r-2$ intervals $(\alpha_{s+1},\alpha_{s})$,
$s=1,\dots,r-1$, $s\ne g$. In addition, we also have 2 real
solutions for $\kappa_1=-1$, $\kappa_2=+1$ (in the intervals
$(-\infty,\alpha_r)$, $(\alpha_1,\infty)$) and in the case when
all signs are equal (in the intervals $(-\infty,\alpha_r)$,
$(\alpha_{g+1},\alpha_{g})$, for $\kappa_1=\kappa_2=+1$, and in
the intervals $(\alpha_1,\infty)$, $(\alpha_{g+1},\alpha_{g})$,
for $\kappa_1=\kappa_2=-1$).

In the case when \eqref{ZNAK} holds, the equation $S(\lambda)=1$
always has real solutions in the $r-3$ intervals
$(\alpha_{s+1},\alpha_{s})$, $s=1,\dots,r-1$, $s\ne g_1,g_2-1$,
and an additional solution in the interval $(\alpha_1,\infty)$
for $\kappa_1=-1$, $\kappa_2=+1$, i.e., in the interval
$(-\infty,\alpha_r)$ for $\kappa_1=+1$, $\kappa_2=-1$.

\medskip

(ii) The second statement has the same proof as in the case
when $A$ is positive definite (Theorem 4.5 \cite{KT}).
\end{proof}

\begin{exm}{
From Lemma \ref{jacobi} it follows that in the Euclidean space
$\mathbb E^{n,0}$ through a generic point pass $r$ quadrics, while
through a generic point in the Lorentz--Poincar\'e--Minkowski
space $\mathbb E^{n-1,1}$ pass $r$ or $r-2$ quadrics from the
pseudo--confocal family \eqref{confocal} for arbitrary symmetric
quadric $\mathbb Q^{n-1}$ (Figures 2 and 3).
}\end{exm}

\begin{figure}[ht]
\includegraphics[height=55mm]{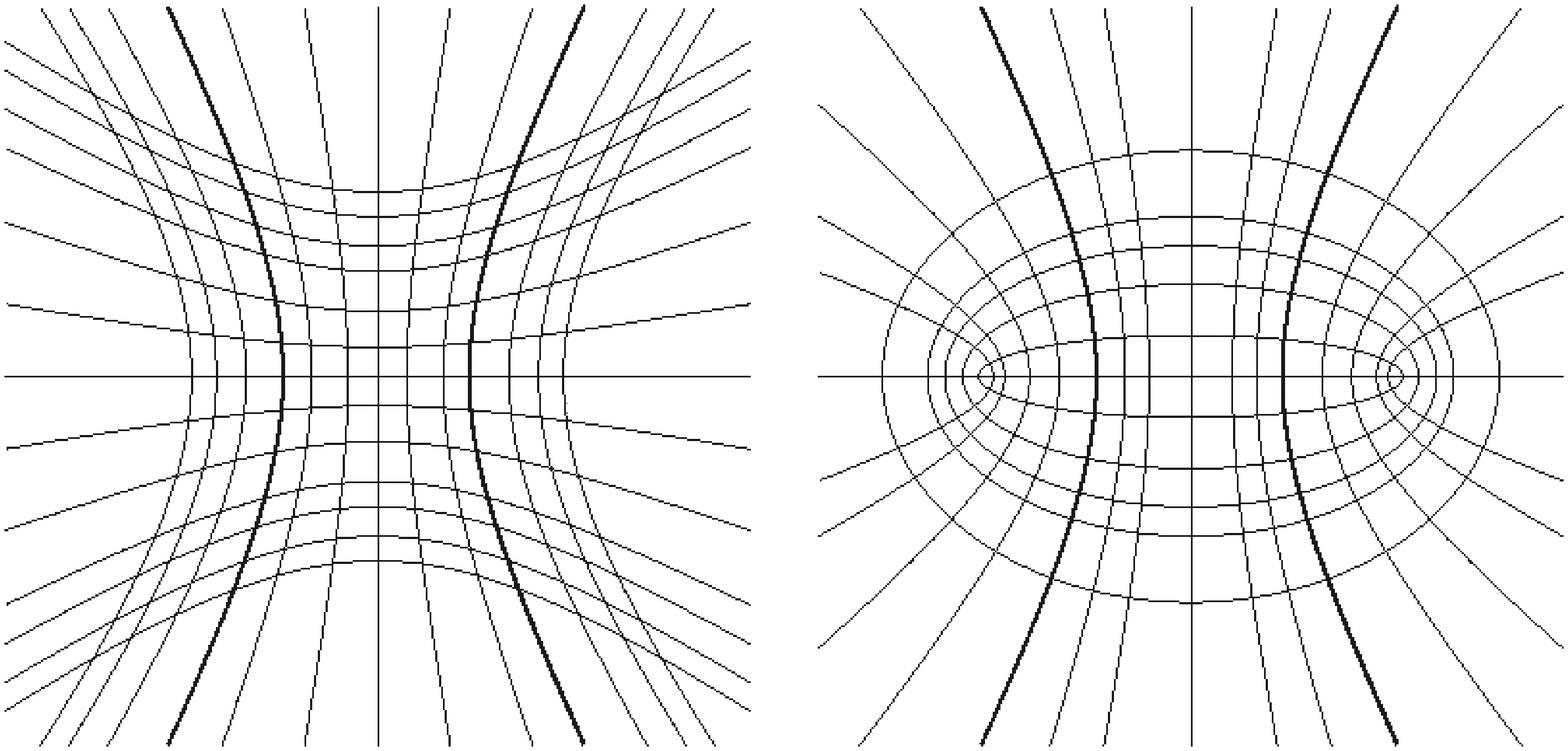}
\caption{Families of pseudo-confocal quadrics for $a_1>0,a_2<0$
in $\mathbb E^{1,1}$ (with $\alpha_1=-a_2>\alpha_2=a_1$) and $\mathbb E^{2,0}$,
respectively.}
\end{figure}

\begin{figure}[ht]
\includegraphics[height=55mm]{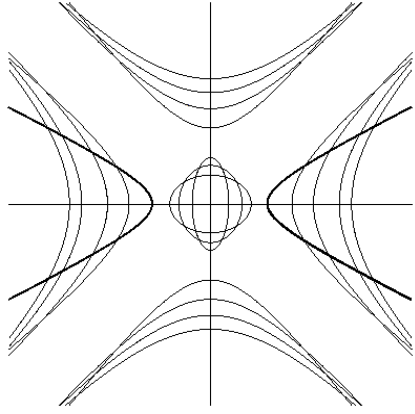}
\caption{Family of pseudo-confocal quadrics for $a_1>0,a_2<0$
in $\mathbb E^{1,1}$, where $\alpha_1=a_1>\alpha_2=-a_2$. }
\end{figure}

\begin{exm}{
If $A$ is positive definite, then
$\alpha_1>\dots>\alpha_g>0>\alpha_{g+1}>\dots>\alpha_r$ for some
index $g$. At a generic point $x\in \mathbb E^{k,l}$ we have $\langle
x,x\rangle_s>0$, $s=1,\dots,g$, $\langle x,x\rangle_s<0$,
$s=g+1,\dots,r$. Therefore, through a generic point $x\in \mathbb E^{k,l}$
pass either $r$ or $r-2$ quadrics from the pseudo--confocal
family \eqref{confocal} (see \cite{KT, DR}).
}\end{exm}

\begin{exm}{\label{MM}
Suppose that
\begin{equation}\label{MinMax}
\max\{a_1,\dots,a_k\}<\min\{-a_{k+1},\dots,-a_n\}.
\end{equation}
Then there is an index $g$, such that
\begin{equation}\label{NegPoz}
\mathbb E^{k_s,l_s}=\mathbb E^{0,l_s}, \quad s=1,\dots,g, \quad
\mathbb E^{k_s,l_s}=\mathbb E^{k_s,0}, \quad s=g+1,\dots,r,
\end{equation}
and through a generic point $x\in \mathbb E^{k,l}$ pass $r$ quadrics
from the confocal family \eqref{confocal} (Figure 2). On the other
hand, if
\begin{equation}\label{MaxMin}
\max\{-a_{k+1},\dots,-a_n\}<\min\{a_1,\dots,a_k\},
\end{equation}
then there is an index $g$, such that
\begin{equation}\label{PozNeg}
\mathbb E^{k_s,l_s}=\mathbb E^{k_s,0}, \quad s=1,\dots,g, \quad
\mathbb E^{k_s,l_s}=\mathbb E^{0,l_s}, \quad s=g+1,\dots,r,
\end{equation}
and through a generic point $x\in \mathbb E^{k,l}$ pass $r$ or $r-2$
quadrics.
}\end{exm}

\subsection{Geometrical interpretation of integrals.} The condition
\begin{equation}\label{jednacina}
\det \mathcal
L_{x,y}(\lambda)=q_\lambda(y,y)(1+q_\lambda(x,x))-q_\lambda(x,y)^2=0
\end{equation}
is equivalent to the geometrical property that the line
$$
l_{x,y}=\{x+sy \,\vert\, s\in\R\}
$$
is tangent to the quadric $\mathcal Q_{\lambda}$ (see \cite{Moser,
DR}).

Therefore, if the line $l_k$ determined by the segment $x_{k}x_{k+1}$
of the virtual billiard trajectory within $\mathbb Q^{n-1}$ is
tangent to a quadric $\mathcal Q_{\lambda^*}$, then $\det\mathcal
L_{x_k,y_k}(\lambda^*)=0$, implying $\det\mathcal
L_{x_k,y_k}(\lambda^*)=0$ for all $k$. Also note that
$\det\mathcal L_{x,y}(\lambda)$ is $SO(k_1,l_1)\times\dots\times
SO(k_r,l_r)$--invariant function.

As a result we have:

\begin{theorem}\label{ch1}\cite{JJ}\,
If a line $l_k$ determined by the segment $x_{k}x_{k+1}$ of the
virtual billiard trajectory within $\mathbb Q^{n-1}$ is tangent to
a quadric $\mathcal Q_{\lambda^*}$ from the pseudo--confocal
family \eqref{confocal}, then it is tangent to $\mathcal
Q_{\lambda^*}$ for all $k\in\mathbb Z$. In addition, $\mathbf
R(x_k)$ is a virtual billiard trajectory tangent to the same
quadric $\mathcal Q_{\lambda^*}$ for all $\mathbf R\in
SO(k_1,l_1)\times\dots\times SO(k_r,l_r)$.
\end{theorem}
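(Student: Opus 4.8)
The plan is to reduce both assertions to the algebraic equivalence \eqref{jednacina}, which identifies tangency of the line $l_{x,y}=\{x+sy\}$ to $\mathcal Q_\lambda$ with the vanishing of the rational function $\det\mathcal L_{x,y}(\lambda)$. Once tangency is recast this way, the first claim becomes a conservation statement along the trajectory and the second becomes an equivariance statement under the symmetry group; both are read off directly from structures already established in the excerpt.

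For the first assertion I would invoke the Lax representation \eqref{LAbil}. Since $\mathcal L_{x_{j+1},y_{j+1}}(\lambda)$ is conjugate to $\mathcal L_{x_j,y_j}(\lambda)$ via $\mathcal A_{x_j,y_j}(\lambda)$, its determinant is independent of $j$, so $\det\mathcal L_{x_k,y_k}(\lambda)=\det\mathcal L_{x_0,y_0}(\lambda)$ for every $k$ and every $\lambda$. By hypothesis $l_0$ is tangent to $\mathcal Q_{\lambda^*}$, hence $\det\mathcal L_{x_0,y_0}(\lambda^*)=0$ by \eqref{jednacina}; therefore $\det\mathcal L_{x_k,y_k}(\lambda^*)=0$ for all $k$, and reading \eqref{jednacina} in the reverse direction yields that each $l_k$ is tangent to $\mathcal Q_{\lambda^*}$.

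For the second assertion I would first record that in the symmetric case the bilinear form $q_\lambda$ of \eqref{bilin} is invariant under $\mathbf R=(\mathbf R_1,\dots,\mathbf R_r)\in SO(k_1,l_1)\times\dots\times SO(k_r,l_r)$. Indeed, on the block $I_s$ the symmetry condition \eqref{sym} gives $a_i=\tau_i\alpha_s$, so the diagonal operator $(\lambda E-A)^{-1}$ restricts there to $\tfrac{1}{\lambda-\alpha_s}E_s$, where $E_s$ is the restriction of $E$, and thus $q_\lambda(x,y)=\sum_s\tfrac{1}{\lambda-\alpha_s}\langle x,y\rangle_s$. Since each $\mathbf R_s$ preserves $\langle\cdot,\cdot\rangle_s$, one obtains $q_\lambda(\mathbf R x,\mathbf R y)=q_\lambda(x,y)$, and consequently $\det\mathcal L_{x,y}(\lambda)$, being a polynomial in $q_\lambda(x,x),q_\lambda(x,y),q_\lambda(y,y)$, is $SO(k_1,l_1)\times\dots\times SO(k_r,l_r)$--invariant, as already noted before the statement. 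Because the virtual billiard map commutes with this group action, $\mathbf R(x_k,y_k)=(\mathbf R x_k,\mathbf R y_k)$ is again a virtual billiard trajectory, whose $k$-th segment lies on the line $\mathbf R(l_k)=l_{\mathbf R x_k,\mathbf R y_k}$. Then $\det\mathcal L_{\mathbf R x_k,\mathbf R y_k}(\lambda^*)=\det\mathcal L_{x_k,y_k}(\lambda^*)=0$, so by \eqref{jednacina} every line $\mathbf R(l_k)$ is tangent to the same quadric $\mathcal Q_{\lambda^*}$.

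I expect no essential obstacle here, since the entire content is a translation of the tangency condition into the conserved, equivariant quantity $\det\mathcal L$. The only point requiring any computation is the $SO$-invariance of $q_\lambda$, but this is immediate from the block structure forced by \eqref{sym}; everything else is a direct application of the Lax equation \eqref{LAbil} and the equivalence \eqref{jednacina}, both of which may be taken as given.
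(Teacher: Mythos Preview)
Your proposal is correct and follows essentially the same route as the paper: the discussion immediately preceding the theorem reduces tangency to the vanishing of $\det\mathcal L_{x,y}(\lambda)$ via \eqref{jednacina}, invokes the Lax representation \eqref{LAbil} for its conservation along the trajectory, and notes the $SO(k_1,l_1)\times\dots\times SO(k_r,l_r)$--invariance of $\det\mathcal L_{x,y}(\lambda)$. Your added justification of this invariance through the block form $q_\lambda(x,y)=\sum_s \langle x,y\rangle_s/(\lambda-\alpha_s)$ is a welcome clarification, but otherwise the argument is the same.
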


From \eqref{SymL} follows that for a symmetric quadric \eqref{sym}  we have
\begin{align}
P(\lambda) &=
 (\lambda-\alpha_1)^{\delta_1} \cdots
(\lambda-\alpha_r)^{\delta_r}\det\mathcal L_{x,
y}(\lambda)  \label{symP2} \\
 &= \sum_{s=1}^r
\left((\lambda-\alpha_s)^{\delta_s-1}\prod_{i\ne s}
(\lambda-\alpha_i)^{\delta_i}F_s+\prod_{i\ne s}
(\lambda-\alpha_i)^{\delta_i}P_s\right) \nonumber \\
&= \lambda^{N-1}K_{N-1}+\dots+\lambda K_1+K_0,\nonumber
\end{align}
where
$$
\delta_s=2 \,\,\, \text{for} \,\,\, \vert I_s \vert \ge 2, \quad
\delta_s=1 \,\,\, \text{for} \,\,\, \vert I_s\vert=1, \quad
N=\delta_1+\dots+\delta_r.
$$

In particular, $K_{N-1}=2H=\langle y,y\rangle$. Thus, the degree
of $P(\lambda)$ is $N-1$ for a space--like or time--like vector
$y$, or $N-2$ for a light--like $y$, and for a general point
$(x,y)\in M_h$, the equation $\det\mathcal L_{x,y}(\lambda)=0$ has
either $N-1$ ($h\ne 0$) or $N-2$ ($h=0$) complex solutions. As in
the lemma above, the number of real solutions can be estimated in
certain cases. In \cite{JJ} we proved:

\begin{theorem}{\cite{JJ}\,}\label{posledica}
Suppose that $A$ is positive definite or the signature of the
space is $(n,0)$.  The lines determined by space--like or
time--like (respectively light--like) billiard trajectories
passing through generic points $(x,y)\in M_h$ are tangent to $N-1$
(respectively $N-2$) fixed quadrics from the pseudo--confocal
family \eqref{confocal}.
\end{theorem}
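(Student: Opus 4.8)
The plan is to translate the geometric claim into a root--counting problem for the rational function $\det\mathcal{L}_{x,y}(\lambda)$ and then estimate its number of real zeros, in the spirit of Lemma~\ref{jacobi}. By \eqref{jednacina} the line $l_{x,y}$ is tangent to $\mathcal Q_{\lambda^*}$ precisely when $\det\mathcal{L}_{x,y}(\lambda^*)=0$, and by \eqref{symP2} the zeros of $\det\mathcal{L}_{x,y}$ different from the poles $\alpha_1,\dots,\alpha_r$ are exactly the roots of the polynomial $P(\lambda)$, of degree $N-1$ when $y$ is not light--like (since $K_{N-1}=\langle y,y\rangle=h$) and of degree $N-2$ when $h=0$. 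Hence there are at most $N-1$ (resp. $N-2$) confocal tangents, and the whole content of the theorem is that for generic $(x,y)\in M_h$ all of these roots are real and simple. First I would note that, generically, no root sits at a pole (one checks $P(\alpha_s)\neq0$), so counting real zeros of $\det\mathcal{L}_{x,y}$ is the same as counting confocal tangents.

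The key inputs are the behaviour of $\det\mathcal L_{x,y}(\lambda)$ at the poles and at a few distinguished points. Under either hypothesis every block $I_s$ is $\langle\cdot,\cdot\rangle$--definite (all $\tau_i$, $i\in I_s$, share one sign), so by the Cauchy--Schwarz inequality in $\mathbb E^{k_s,l_s}$ one has $P_s=\langle x,x\rangle_s\langle y,y\rangle_s-\langle x,y\rangle_s^2\ge0$, with $P_s>0$ for $|I_s|\ge2$ at generic points. Consequently, near a double pole the leading term $P_s/(\lambda-\alpha_s)^2$ dominates and $\det\mathcal L_{x,y}(\lambda)\to+\infty$ from both sides of $\alpha_s$, whereas across a simple pole the function merely changes sign. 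At $\lambda=0$ I would use $q_0(x,x)=-(A^{-1}x,x)=-1$ on $\mathbb Q^{n-1}$ and $q_0(x,y)=-J$ to obtain the universal anchor $\det\mathcal L_{x,y}(0)=-J^2<0$; similarly, at each real zero $v$ of $q_\lambda(y,y)$ one has $\det\mathcal L_{x,y}(v)=-q_v(x,y)^2<0$. These provide explicit negative values interlacing the positive spikes at the double poles.

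With these data the count proceeds as in Lemma~\ref{jacobi}. Order the poles $\alpha_1>\dots>\alpha_r$. The hypothesis fixes the global sign pattern: for positive definite $A$ the block signs $\mathrm{sign}\langle y,y\rangle_s$ are $+,\dots,+,-,\dots,-$ and $0$ lies in the central gap $(\alpha_{g+1},\alpha_g)$, while in the Euclidean case $h=\langle y,y\rangle>0$ and $\det\mathcal L_{x,y}\to0^{\pm}$ as $\lambda\to\pm\infty$. Tracking the sign of $\det\mathcal L_{x,y}$ across the successive intervals, the $+\infty$ spikes at the $m=|\{s:|I_s|\ge2\}|$ double poles, together with the negative anchors at $0$ and at the zeros of $q_\lambda(y,y)$, force a sign change in each relevant interval; summing these and accounting for the one unbounded interval that contributes yields exactly $N-1=r+m-1$ real zeros when $h\neq0$, and when $h=0$ the vanishing of $K_{N-1}=h$ drops the degree to $N-2$ and removes one boundary crossing, giving $N-2$. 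In the purely Euclidean case this becomes transparent: writing $T(\lambda)=q_\lambda(x,y)^2/q_\lambda(y,y)-q_\lambda(x,x)$, so that $\det\mathcal L_{x,y}=0\Leftrightarrow T=1$, a direct computation gives $T'(\lambda)=\bigl(\sum_i\tau_i m_i^2\,(b x_i-a y_i)^2\bigr)/q_\lambda(y,y)^2$ with $m_i=1/(\lambda\tau_i-a_i)$, $a=q_\lambda(x,y)$, $b=q_\lambda(y,y)$; for $\tau_i\equiv1$ this is a sum of squares, so $T$ is strictly increasing between its $N-1$ poles and attains the level $1$ exactly $N-1$ times.

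The step I expect to be the main obstacle is the one already delicate for non--symmetric quadrics: controlling the signs at the simple poles, equivalently guaranteeing that $\det\mathcal L_{x,y}$ genuinely dips below zero on each interval bounded by two double poles, where both endpoints give $+\infty$. The residues $F_s$ are not individually sign--definite, so no crude endpoint comparison suffices; this is the ``subtle estimate'' advertised in the introduction. I would resolve it by interlacing the tangency parameters with the confocal coordinates of the point $x$ furnished by Lemma~\ref{jacobi}: the definiteness of each block is exactly what makes the auxiliary functions $q_\lambda(y,y)$ and $1+q_\lambda(x,x)$ exhibit the zero/pole interlacing needed to trap one tangency value in each gap and the extra ones beside each double pole, forcing all $N-1$ (resp. $N-2$) roots to be real. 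The Euclidean monotonicity of $T$ is the clean incarnation of this mechanism; for positive definite $A$ with indefinite $E$ the same interlacing persists but must be read off from the signs of the $\langle y,y\rangle_s$ rather than from a single sum of squares.
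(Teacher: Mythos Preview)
Your proposal is correct and is essentially the approach the paper itself takes. The paper does not reprove Theorem~\ref{posledica} here (it is quoted from \cite{JJ}), but the proof of the companion Theorem~\ref{NOVA}, which the paper explicitly calls a modification of the \cite{JJ} argument, uses exactly your ingredients: write $q_\lambda(y,y)=-R(\lambda)/\prod_s(\alpha_s-\lambda)$ and locate its real zeros $\zeta_s$; note that $\det\mathcal L_{x,y}(\zeta_s)=-q_{\zeta_s}(x,y)^2<0$ and $\det\mathcal L_{x,y}(0)=-J^2<0$; use the generic positivity of $P_s$ to force $\det\mathcal L_{x,y}\to+\infty$ on both sides of each double pole (your relation \eqref{bb}); read off the asymptotics $\det\mathcal L_{x,y}(\lambda)\sim h/\lambda$; and count. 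The paper carries out the interval-by-interval bookkeeping with the $\zeta_s$ as endpoints, which is precisely the mechanism you describe in your third paragraph.

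Two remarks. First, your monotonicity argument for the purely Euclidean case via $T(\lambda)=q_\lambda(x,y)^2/q_\lambda(y,y)-q_\lambda(x,x)$ and $T'\ge 0$ is a clean alternative (it is the classical Moser--Audin device) that the present paper does not use; it buys you the Euclidean count in one stroke without tracking signs across intervals. Second, your final paragraph is slightly off: you have already identified the resolution of the ``obstacle'' two paragraphs earlier. The negative anchors are the zeros of $q_\lambda(y,y)$ together with $\lambda=0$ (which is indeed the root of $1+q_\lambda(x,x)=0$ on $\mathbb Q^{n-1}$), and in the paper these alone suffice to separate every pair of $+\infty$ spikes; the full set of confocal coordinates of $x$ from Lemma~\ref{jacobi} is never invoked, and trying to weave them in would duplicate the role already played by $\lambda=0$.
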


We proceed with the cases mentioned in the Example \ref{MM}.

\begin{theorem}\label{NOVA}
{\rm (i)} Suppose that the condition \eqref{MinMax} is satisfied. If
$EA$ is positive or negative definite, that is $\alpha_r>0$ or
$\alpha_1<0$, the lines determined by space--like or time--like
(respectively light--like) billiard trajectories passing through
generic points $(x,y)\in M_h$ are tangent to $N-1$ (respectively
$N-2$) fixed quadrics from the pseudo--confocal family
\eqref{confocal}.

{\rm (ii)} In the case when the condition \eqref{MaxMin} is satisfied
and $EA$ is positive (negative) definite, the lines determined by
generic time--like, light--like, space--like billiard trajectories
are tangent to at least $N-1$ ($N-3$), $N-2$, $N-3$
($N-1$) quadrics from the pseudo--confocal family
\eqref{confocal}, respectively.
\end{theorem}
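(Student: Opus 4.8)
The goal is to count the real zeros in $\lambda$ of the rational function $\det\mathcal L_{x,y}(\lambda)$ appearing in \eqref{SymL}, equivalently the real roots of the polynomial $P(\lambda)$ of \eqref{symP2}, whose degree is $N-1$ when $y$ is space-- or time--like and $N-2$ when $y$ is light--like. Exactly as in the proof of Lemma \ref{jacobi}, the count will come from a sign analysis of $\det\mathcal L_{x,y}(\lambda)$ across the real axis, cut by the poles $\alpha_1>\dots>\alpha_r$ and the points $\pm\infty$, followed by the intermediate value theorem on each resulting interval. I would lean on three robust sign facts. First, at a double pole ($|I_s|\ge2$) the leading term of \eqref{SymL} is $P_s/(\lambda-\alpha_s)^2$, and under \eqref{NegPoz} (respectively \eqref{PozNeg}) every block is purely space-- or purely time--like, so $\tau_i\tau_j=+1$ throughout and $P_s=\sum_{i<j}\tau_i\tau_j\Phi_{s,ij}^2>0$ for generic $(x,y)$; hence $\det\mathcal L_{x,y}(\lambda)\to+\infty$ from both sides of $\alpha_s$. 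Second, whenever $1+q_\lambda(x,x)=0$ or $q_\lambda(y,y)=0$ the expression in \eqref{jednacina} collapses to $\det\mathcal L_{x,y}(\lambda)=-q_\lambda(x,y)^2\le0$, producing two families of nonpositive test points; in particular, since $x\in\mathbb Q^{n-1}$ forces $1+q_0(x,x)=1-(A^{-1}x,x)=0$, the value $\lambda=0$ is always such a point, with $\det\mathcal L_{x,y}(0)=-J^2<0$. Third, $\det\mathcal L_{x,y}(\lambda)\sim\langle y,y\rangle/\lambda$ as $\lambda\to\pm\infty$ fixes the sign at the two ends by the type of the trajectory.

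The zeros of $1+q_\lambda(x,x)$ are precisely the parameters of the quadrics \eqref{confocal} through $x$, so Lemma \ref{jacobi} both counts them and places them in the intervals determined by the sign pattern of $\langle x,x\rangle_s$; the zeros of $q_\lambda(y,y)$ are located by the identical argument applied to the signs of $\langle y,y\rangle_s$. Under \eqref{NegPoz} and \eqref{PozNeg} these block signs are rigid ($+$ on space--like blocks, $-$ on time--like blocks), so the positions of both families are governed only by the index $g$ separating the two types. The hypothesis that $EA$ is positive or negative definite places all poles $\alpha_s$ on one side of the origin, so that $\lambda=0$ falls into the extreme infinite interval $(-\infty,\alpha_r)$ (for $\alpha_r>0$) or $(\alpha_1,+\infty)$ (for $\alpha_1<0$); this is exactly where the forced negative test point of the previous paragraph is felt.

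I would then interlace the $+\infty$ spikes at the double poles with the nonpositive test points and read off the forced zeros: a test point trapped between two $+\infty$ spikes forces a pair of zeros, while between a spike and an adjacent test point the intermediate value theorem gives one. Summing over the finite gaps, and treating the two infinite intervals separately with the asymptotic sign $\langle y,y\rangle/\lambda$, gives the total. Under \eqref{MinMax} the sign pattern supplies test points in both infinite intervals, so every trajectory type attains the maximal count, $N-1$ for space-- or time--like and $N-2$ for light--like, which is statement (i). Under \eqref{MaxMin} the two extreme intervals are no longer symmetric: only one of them carries a test point from the $\langle x,x\rangle_s$ pattern, the forced point $\lambda=0$ lies in a prescribed one of them, and the asymptotic sign there is dictated by $\mathrm{sign}\langle y,y\rangle$. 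Tracking how these three data combine in the two infinite intervals is what produces the split in statement (ii): for $EA$ positive (negative) definite a time--like (space--like) line still realizes all $N-1$ tangencies, the light--like count remains the full $N-2$, while for the opposite type a pair of zeros is pushed off the real axis, leaving only $N-3$.

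The main obstacle is precisely this last ledger. The delicate points are, on the one hand, the even--order behaviour at the double poles, where a single trapped test point must be shown to create exactly two real zeros rather than none, and, on the other hand, the two infinite intervals, where the forced test point $\lambda=0$, the presence or absence of a $\langle x,x\rangle_s$--test point, and the asymptotic sign $\mathrm{sign}\langle y,y\rangle$ interact. One must check that for each combination of trajectory type and definiteness of $EA$ these interactions yield exactly the advertised number of sign changes, and in particular certify that under \eqref{MaxMin} precisely two zeros become complex for the unfavourable type while all the others, and the entire complement for the favourable and the light--like cases, remain real. Comparing the resulting lower bound with the degree $N-1$ (respectively $N-2$) of $P(\lambda)$ then closes the argument and, in case (i), forces every root to be real.
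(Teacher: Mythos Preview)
Your approach is essentially the paper's own: a sign analysis of $\det\mathcal{L}_{x,y}(\lambda)$ across the real line using the $+\infty$ blow-ups at double poles (from $P_s>0$), the negative test point $\lambda=0$, the zeros of $q_\lambda(y,y)$ as further nonpositive test points, and the asymptotic $\langle y,y\rangle/\lambda$ at infinity. The paper carries out exactly the case-by-case ledger you flag as the ``main obstacle'', but it does so using only the zeros $\zeta_s$ of $q_\lambda(y,y)$ together with the single point $\lambda=0$, rather than the full family of zeros of $1+q_\lambda(x,x)$ that you propose to import from Lemma~\ref{jacobi}; your extra test points are harmless but not needed. One small slip: in part (ii) the theorem asserts only lower bounds, so ``precisely two zeros become complex'' overstates what is claimed or proved---read it as ``at most two''.
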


\begin{proof} The proof is a modification of the idea used
in \cite{Au, DR} and \cite{JJ} for an analogous assertion in the
case of nonsymmetric ellipsoids and symmetric ellipsoids,
respectively. We have
\begin{equation}\label{jed1}
q_\lambda(y,y)=-\ds\sum_{i=1}^n\frac{y_i^2}{a_i-\tau_i\lambda}
=\ds-\sum_{s=1}^r\frac{\langle
y,y\rangle_s}{\alpha_s-\lambda}=\ds-\frac{R(\lambda)}{\ds\prod_{s=1}^r(\alpha_s-\lambda)},
\end{equation}
where
\[
R(\lambda)=\ds\sum_{s=1}^r \langle y,y\rangle_s\prod_{t\neq
s}(\alpha_t-\lambda)= (-1)^{r-1}\cdot\sum_{s=1}^r\langle
y,y\rangle_s\prod_{t\neq s}(\lambda-\alpha_t).
\]

From the definition of $R(\lambda)$ we obtain
\begin{equation}
\mbox{sign}\, R(\alpha_s)=\mbox{sign}\, \langle
y,y\rangle_s\,(-1)^{s+r}, \qquad s=1,\dots,r \label{znak3}
\end{equation}
and for a space--like or a time--like vector $y$:
\begin{align*}
&\mbox{sign}\, R(-\infty)=\mbox{sign}\langle y,y\rangle=\mbox{sign}\, h,\nonumber \\
&\mbox{sign}\, R(\infty)=(-1)^{r-1}\,\mbox{sign}\langle
y,y\rangle=(-1)^{r-1}\mbox{sign}\,h.
\end{align*}

Thus, for a space--like or a time--like vector $y$, we have
\begin{align}
&\mbox{sign}\, R(-\infty)\mbox{sign}\,
R(\alpha_r)=\mbox{sign}\,h\,\mbox{sign}\,  \langle y,y\rangle_r\nonumber\\
&\mbox{sign}\, R(\infty)\mbox{sign}\,
R(\alpha_1)=\mbox{sign}\,h\,\mbox{sign}\, \langle
y,y\rangle_1.\label{znak4}
\end{align}

Assume the relation $\alpha_r>0$. The proof for the case
$0>\alpha_1$ is the same.

\medskip

(i)  From \eqref{MinMax}, for a generic $(x,y)\in M_h$ we have
\begin{align}
&\mbox{sign} \langle y,y \rangle_1=\dots=\mbox{sign} \langle
y,y \rangle_g=-1, \nonumber \\
& \mbox{sign} \langle y,y \rangle_{g+1}=\dots=\mbox{sign} \langle
y,y \rangle_r=1,  \label{znak2}
\end{align}
for a certain index $g$.

From the relations  \eqref{znak3}, \eqref{znak4}, \eqref{znak2},
we obtain that the equation $R(\lambda)=0$ has $r-2$ solutions
$\zeta_s\in(\alpha_{s+1},\alpha_s)$ for
$s\in\{1,\dots,r-1\}\backslash\{g\}$ and another solution
$\zeta_r\in(-\infty,\alpha_r)$ (if $h<0$) or
$\zeta_0\in(\alpha_1,\infty)$ (if $h>0$).

Further, since $(x,y)\in M_h$, it follows that
$1+q_0(x,x)=1-(A^{-1}x,x)=0$, $q_0(x,y)=-(A^{-1}x,y)\ne 0$.
Whence,
$$
\det \mathcal L_{x,y}(0)=-q_0(x,y)^2<0.
$$

Thus, the left hand side of
\begin{equation}\label{qP}
\det \mathcal
L_{x,y}(\lambda)=q_\lambda(y,y)(1+q_\lambda(x,x))-q_\lambda(x,y)^2=\frac{P(\lambda)}{\prod_{s=1}^r(\lambda-\alpha_s)^{\delta_s}}
\end{equation}
takes negative values at the ends of each of the $r-2$ intervals
$$
(0,\zeta_{r-1}),(\zeta_{r-1},\zeta_{r-2}),\dots,(\zeta_{g+2},\zeta_{g+1}),
\,(\zeta_{g+1},\zeta_{g-1}),\,(\zeta_{g-1}, \zeta_{
g-2})\dots,(\zeta_2,\zeta_1),
$$
and
\begin{eqnarray*}
&& \alpha_r\in (0,\zeta_{r-1}),\, \alpha_{r-1}\in
(\zeta_{r-1},\zeta_{r-2}),\dots,\alpha_{g+2}\in(\zeta_{g+2},\zeta_{g+1}),
\\ && \alpha_{g+1},\alpha_g\in (\zeta_{g+1},\zeta_{g-1}),\,\alpha_{g-1}\in (\zeta_{g-1}, \zeta_{
g-2}),\dots,\alpha_2\in(\zeta_2,\zeta_1).
\end{eqnarray*}

From \eqref{MinMax}, we have that $\tau_i a_i=\tau_j a_j$ only if
$a_i=a_j$ and $\tau_i=\tau_j$. Hence, generically $P_s>0$ for
$\delta_s=2$. Now, from
\begin{equation}\label{bb}
\lim_{\lambda
\rightarrow\alpha_s-}\frac{F_s}{\lambda-\alpha_s}+\frac{P_s}{(\lambda
- \alpha_s)^2}=\infty,\quad \lim_{\lambda
\rightarrow\alpha_s+}\frac{F_s}{\lambda-\alpha_s}+\frac{P_s}{(\lambda
- \alpha_s)^2}=\infty,
\end{equation}
and \eqref{SymL}, it follows that in the interval containing
$\alpha_s$, $s\in\{2,3,\dots,r\}\backslash\{g,g+1\}$, there are at
least two zeros of $\det \mathcal L_{x,y}(\lambda)$ for
$\delta_s=2$ or at least one zero in the case $\delta_s=1$.
Similarly, in $(\zeta_{g+1},\zeta_{g-1})$ there are at least
$\delta_g+\delta_{g+1}-2$ zeros of $\det \mathcal
L_{x,y}(\lambda)$.

As a result, we get that in $(0,\zeta_1)$ there are
$$
\delta_2+\dots+\delta_{r-1}+\delta_r-2=N-\delta_1-2
$$
roots of $P(\lambda)$.

In the space--like case $h>0$, due to \eqref{znak4}, we have a root
$\zeta_0\in(\alpha_1,\infty)$ of $R(\lambda)$ and so there are
additional $\delta_1$ roots of $P(\lambda)$ in
$(\zeta_1,\zeta_0)$. Also, according to
\begin{equation}\label{beskonacno}
\det \mathcal
L_{x,y}(\lambda)=\frac{P(\lambda)}{\prod_{s=1}^r(\lambda-\alpha_s)^{\delta_s}}
\sim \frac{\langle y,y\rangle}{\lambda}, \qquad \lambda \to
\pm\infty,
\end{equation}
we have a zero of $\det\mathcal L_{x,y}(\lambda)$ in
$(\zeta_0,\infty)$ as well. Therefore, the number of real roots of
$P(\lambda)$ is $N-1$.

If $h<0$, thanks to \eqref{beskonacno}, there is a zero of
$\det\mathcal L_{x,y}(\lambda)$ in $(-\infty,0)$. Consequently, for $\delta_1=1$
we have at least $N-\delta_1-2 +1=N-2$ real roots of $P(\lambda)$. However,
since the polynomial $P(\lambda)$ is of degree $N-1$, it must have $N-1$ real roots.
By a similar argument there are $N-2$ real roots for $\delta_1=1$ and $h=0$.
If $\delta_1=2$ and $h<0$ or $h=0$, there is an additional zero of
$\det\mathcal L_{x,y}(\lambda)$ in $(\zeta_1,\alpha_1)$ and we can proceed as in the $\delta_1=1$ case.

\medskip

(ii) From \eqref{MaxMin}, for a generic $(x,y)\in M_h$ we have
\begin{eqnarray}
&&\mbox{sign} \langle y,y \rangle_1=\dots=\mbox{sign} \langle
y,y \rangle_g=1, \nonumber \\
&& \mbox{sign} \langle y,y \rangle_{g+1}=\dots=\mbox{sign} \langle
y,y \rangle_r=-1,  \label{znak2*}
\end{eqnarray}
for a certain index $g$. As above, we obtain that in $(0,\zeta_1)$
there are $\delta_2+\dots+\delta_{r-1}+\delta_r-2=N-\delta_1-2$
roots of $P(\lambda)$.

From \eqref{znak4}, \eqref{znak2*}, we have a root
$\zeta_0\in(\alpha_1,\infty)$ of $R(\lambda)$ for $h<0$. Hence
additional $\delta_1$ roots of $P(\lambda)$ in
$(\zeta_1,\zeta_0)$. Also, according to \eqref{beskonacno}, we
have a zero of $\det\mathcal L_{x,y}(\lambda)$ in $(-\infty,0)$
as well. Therefore, the number of real roots of $P(\lambda)$ is
$N-1$.

On the other hand, the analysis above in the space--like case
$h>0$ implies at least $N-3$ real roots of $P(\lambda)$. The
analysis for the light--like case $h=0$ is the same as in the
proof of (i).
\end{proof}

\begin{remark}{
In the previous proof we considered the case when $1<g<r$. The
borderline cases $g=1$ and $g=r$ have similar analysis. Moreover,
we have better estimates of the number of quadrics for the
assumptions \eqref{MaxMin} and $\delta_g=1$: if $EA$ is positive
(negative) definite and $g=1$ ($g=r$), then the signature of the
space is $(1,n-1)$ (respectively $(n-1,1)$) and there are $N-1$
caustics for billiard trajectories with $h\ne 0$ and $N-2$
caustics for $h=0$. This situation appears in Theorem \ref{posledica 2}.
}\end{remark}
\begin{exm}{
 Let us consider $\mathbb E^{1,1}$ and a nonsymmetric conic
defined by $A=\diag(a_1,a_2)$, $a_1>0>a_2$, $-a_2>a_1$ (see Figure
4). Then $\delta_1=\delta_2=1$, $\alpha_1=-a_2 >\alpha_2=a_1>0$
and from Lemma \ref{jacobi}, through the points $x=(x_1,x_2)$
outside the coordinate axes ($x_1\cdot x_2 \ne 0$) pass 2
quadrics from the family \eqref{confocal}.
In the non light--like case ($h=F_1+F_2\ne 0$), the
polynomial
$$
P(\lambda)=(\lambda-\alpha_1)(\lambda-\alpha_2)\left(\frac{F_1}{\lambda-\alpha_1}+\frac{F_2}{\lambda-\alpha_2}\right)=\lambda
h-\alpha_1\alpha_2 J^2
$$
has the real root $\lambda=(\alpha_1\alpha_2 J^2)/h$. This is a
root also in the case $\alpha_1=a_1>\alpha_2=-a_2>0$, as well (see
Figure 5).
}\end{exm}

\begin{figure}[ht]
\includegraphics[height=55mm]{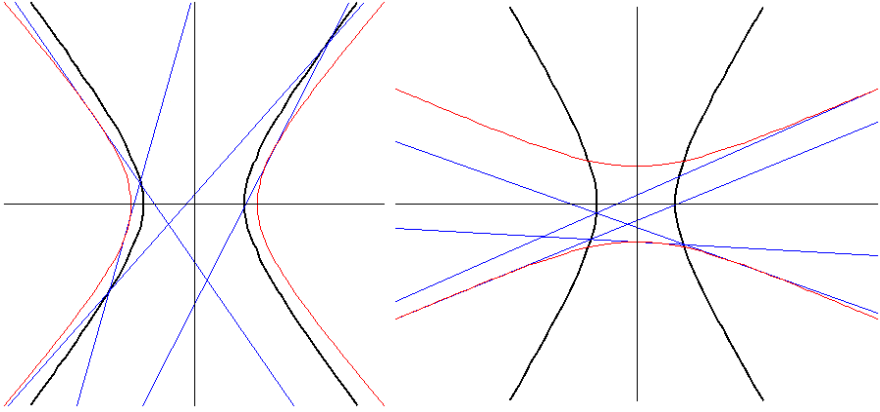}
\caption{The segments of time--like and space--like billiard
trajectories for $a_1>0,a_2<0$, $\alpha_1=-a_2>\alpha_2=a_1$ in
$\mathbb E^{1,1}$. The caustics are hyperbolas.}
\end{figure}

\begin{figure}[ht]
\includegraphics[height=28mm]{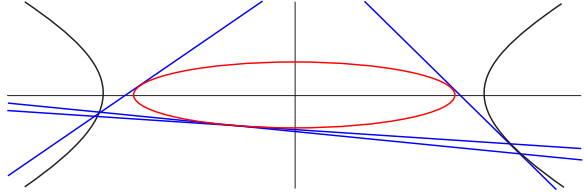} \caption{The segment of a space--like
billiard trajectory for $a_1>0,a_2<0$,
$\alpha_1=a_1>\alpha_2=-a_2$ in $\mathbb E^{1,1}$. The caustic is an
ellipse.}
\end{figure}

\begin{exm}{\rm
Next, we take $\mathbb E^{2,1}$ and a nonsymmetric quadric defined by
$A=\diag(a_1,a_2,a_3)$, $\alpha_1=-a_3>
\alpha_2=a_2>\alpha_3=a_1>0$. According to Lemma \ref{jacobi},
through the points $x=(x_1,x_2,x_3)$ outside the coordinate planes
($x_1\cdot x_2 \cdot x_3 \ne 0$) pass 3 quadrics from the
pseudo--confocal family \eqref{confocal}.
The discriminant of the polynomial
\begin{align*}
P(\lambda) &=
(\lambda-\alpha_1)(\lambda-\alpha_2)(\lambda-\alpha_3)
\left(\frac{F_1}{\lambda-\alpha_1}+\frac{F_2}{\lambda-\alpha_2}+\frac{F_3}{\lambda-\alpha_3}\right)\\
&= \lambda^2 h
-\lambda\left((\alpha_2+\alpha_3)F_1+(\alpha_3+\alpha_1)F_2+(\alpha_1+\alpha_2)F_3\right)
+\alpha_1\alpha_2\alpha_3 J^2
\end{align*}
equals
$D=((\alpha_2+\alpha_3)F_1+(\alpha_3+\alpha_1)F_2+(\alpha_1+\alpha_2)F_3)^2-4\alpha_1\alpha_2\alpha_3
h J^2. $ It is obvious that in the time--like case the
discriminant is positive and we always have two real roots. From
Theorem \ref{NOVA} (i) follows that $D>0$ in the
space--like case, too. In the light--like case, the real root
is
$$
\alpha_1\alpha_2\alpha_3
J^2/((\alpha_2+\alpha_3)F_1+(\alpha_3+\alpha_1)F_2+(\alpha_1+\alpha_2)F_3).
$$
}\end{exm}

Let us consider the signature $(n-1,1)$
in general situation. Suppose \eqref{alfici} and let
$g\in\{1,\dots,r\}$ be the index, such that $n\in I_g$. In order
to simplify the formulation of the theorem we additionally assume
that $\delta_g=1$, i.e., $I_g=\{n\}$.

\begin{theorem}\label{posledica 2}
Consider the lines determined by billiard trajectories in the
Lorentz--Poincar\'e--Minkowski space $\mathbb E^{n-1,1}$ passing through
generic points $(x,y)\in M_h$.
\begin{itemize}
\item[(i)] If $1<g<r$, the number of their caustics from the
pseudo--confocal family \eqref{confocal} is at least $N-3$
($h>0$), $N-1$ ($h<0$) or $N-4$ ($h=0$). If $\alpha_1<0$ and
$h>0$, the number of caustics is $N-1$.

\item[(ii)] Assuming $g=1$, there are $N-1$ quadrics ($h\neq 0$)
or at least $N-4$ quadrics ($h=0$). In addition, if $\alpha_r>0$
or $\alpha_1<0$, there are $N-2$ tangent quadrics for $h=0$.

\item[(iii)] In the case $g=r$ the minimal number of quadrics is
$N-3$, $N-1$ and $N-2$ for $h>0$, $h<0$ and $h=0$,
respectively. If we suppose $0\in(\alpha_r,\alpha_{r-1})$, then
there are $N-1$ quadrics for $h>0$, as well. If
$\alpha_r>0$ ($\alpha_1<0$) and $h>0$, $h<0$, $h=0$, the number of
caustics is at least $N-3$ ($N-1$), $N-1$ ($N-1$), $N-2$ ($N-2$), respectively.
\end{itemize}
\end{theorem}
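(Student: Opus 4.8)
The plan is to count the real zeros of $\det\mathcal L_{x,y}(\lambda)$ directly, exactly as in the proofs of Theorems \ref{NOVA} and \ref{posledica}, by tracking the sign of the rational function $\det\mathcal L_{x,y}(\lambda)=P(\lambda)/\prod_s(\lambda-\alpha_s)^{\delta_s}$ along the real axis and invoking the intermediate value theorem. Recall from \eqref{symP2} that $P(\lambda)$ has degree $N-1$ with leading coefficient $K_{N-1}=\langle y,y\rangle=2h$, so $\deg P=N-1$ when $h\ne0$ and $\deg P\le N-2$ when $h=0$; the goal is to exhibit as many sign changes of $\det\mathcal L_{x,y}$ as the stated bounds require. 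Two families of reference points drive the count. First, at every real root $\zeta$ of $R(\lambda)$ from \eqref{jed1} one has $q_\zeta(y,y)=0$, hence $\det\mathcal L_{x,y}(\zeta)=-q_\zeta(x,y)^2<0$ generically, and the same holds at $\lambda=0$ because $1+q_0(x,x)=0$ there; these are the \emph{negative} reference points. Second, near each $\alpha_s$ with $\delta_s=2$ the double pole forces $\det\mathcal L_{x,y}\to+\infty$ on both sides, since in signature $(n-1,1)$ every block $I_s$ with $s\ne g$ consists of space--like directions and $P_s=\sum_{i<j}\Phi_{s,ij}^2>0$ generically; these are the \emph{positive} regions. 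Every $\alpha_s$ trapped between two consecutive negative reference points then contributes at least $\delta_s$ real zeros (two for a double pole, one for a simple pole, by the elementary sign analysis near the pole), and summing these contributions yields the lower bounds.

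The first task is to locate the negative reference points, i.e. the real roots of $R(\lambda)$, for the signature $(n-1,1)$. Here the only time--like block is $I_g\ni n$, so $\langle y,y\rangle_g<0$ while $\langle y,y\rangle_s>0$ for all $s\ne g$; this single sign flip is the whole source of the defect. Writing $\mathrm{sign}\,R(\alpha_s)=(-1)^{s+r}\mathrm{sign}\langle y,y\rangle_s$ from \eqref{znak3}, the sequence $\mathrm{sign}\,R(\alpha_1),\dots,\mathrm{sign}\,R(\alpha_r)$ alternates everywhere except at $s=g$, where the flip makes $\mathrm{sign}\,R(\alpha_{g-1})=\mathrm{sign}\,R(\alpha_g)=\mathrm{sign}\,R(\alpha_{g+1})$. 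Consequently $R$ has a guaranteed root in each of the $r-3$ bounded intervals $(\alpha_{s+1},\alpha_s)$ with $s\ne g-1,g$, while the two intervals flanking $\alpha_g$ carry no guaranteed root. The two remaining potential roots live in the unbounded intervals, and their presence is decided by comparing $\mathrm{sign}\,R(\alpha_1),\mathrm{sign}\,R(\alpha_r)$ with $\mathrm{sign}\,R(\pm\infty)$ as in \eqref{znak4}: for $1<g<r$ and $h<0$ both $(-\infty,\alpha_r)$ and $(\alpha_1,\infty)$ acquire a root, so all $r-1$ roots of $R$ are real, whereas for $h>0$ neither does. This is precisely why the bound jumps from $N-1$ for $h<0$ down to $N-3$ for $h>0$.

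With the negative reference points placed, I would assemble the count in the three cases. For $1<g<r$ the coverage by sandwiched poles accounts for all the $\alpha_s$ away from the defect triple $\alpha_{g-1},\alpha_g,\alpha_{g+1}$, i.e. for all but the degrees $\delta_{g-1}+1+\delta_{g+1}$ attached to that triple; the position of $\lambda=0$ then supplies the decisive extra negative point. When $\alpha_r>0$ ($EA$ positive definite) the point $0$ lies in $(-\infty,\alpha_r)$, when $\alpha_1<0$ ($EA$ negative definite) it lies in $(\alpha_1,\infty)$, and the hypothesis $0\in(\alpha_r,\alpha_{r-1})$ in (iii) pins it just above $\alpha_r$; in each situation $0$ plays the role of a missing unbounded--interval root and recovers extra sign changes, which is the mechanism behind the sharper counts (e.g. $N-1$ for $\alpha_1<0$, $h>0$ in (i)). Whenever this bookkeeping secures $N-2$ real zeros with $h\ne0$, the parity argument of Theorem \ref{NOVA} applies: since $\deg P=N-1$ and non--real roots occur in conjugate pairs, the last root is forced to be real and the count is upgraded to $N-1$. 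The cases $g=1$ and $g=r$ are treated identically but separately, because there one of the two defect intervals degenerates into an unbounded interval; the sign comparison at that end then reverses its dependence on $h$ (for $g=1$ a root of $R$ appears in $(\alpha_1,\infty)$ exactly when $h>0$, and for $g=r$ in $(-\infty,\alpha_r)$ exactly when $h>0$), which reshuffles the bounds into the forms listed in (ii) and (iii).

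I expect the genuine obstacle to be the control of $\det\mathcal L_{x,y}$ across the defect triple $\alpha_{g-1},\alpha_g,\alpha_{g+1}$. There $\alpha_g$ is a simple pole whose residue coincides with the single--index integral $f_n$ of \eqref{intgeo}, namely $F_g=\tau_n y_n^2+\sum_{j\ne n}(x_ny_j-x_jy_n)^2/(\tau_j a_n-\tau_n a_j)$ with $\tau_n=-1$, a sum of terms of both signs and hence of no definite sign; the flanking poles need not behave uniformly either. This is exactly where sign changes can genuinely be lost, and it is the reason the theorem asserts only ``at least $N-3$'', ``$N-4$'', etc., rather than the full $N-1$ in the unfavorable combinations of $h$ and the position of $0$. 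The analytic content everywhere is merely the intermediate value theorem; the real work, and the only delicate part, is the purely combinatorial organization of the sign data over the product of three dichotomies --- the location of $g$ among $\{1\}$, $\{r\}$, $\{2,\dots,r-1\}$, the sign of $h$, and the position of $0$ relative to the $\alpha_s$ --- together with the careful handling of the degree drop at $h=0$, which caps the attainable count at $N-2$.
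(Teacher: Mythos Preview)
Your plan is correct and follows essentially the same route as the paper: locate the real roots $\zeta_s$ of $R(\lambda)$ via the alternation pattern \eqref{znak3}, identify the defect at the triple $\alpha_{g-1},\alpha_g,\alpha_{g+1}$ caused by the single time--like block, use $P_s>0$ near the double poles together with the negative values at the $\zeta_s$ and at $\lambda=0$ to count sign changes, then supplement with the asymptotics \eqref{beskonacno} and the parity of $\deg P$. The paper proves only item (i) in detail and declares (ii), (iii) ``similar''; your case breakdown by the position of $g$, the sign of $h$, and the location of $0$ relative to the $\alpha_s$ matches this exactly, and your identification of the defect triple as the place where the two ``missing'' zeros can genuinely be lost is the correct diagnosis.
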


\begin{proof} Let us prove the item (i). The proof of the
other statements is similar.

Since generically $\langle y,y\rangle_s>0$ for all $s\neq g$ and
$\langle y,y\rangle_g<0$, from \eqref{znak3} we have that there
exist $r-3$ solutions $\zeta_s\in(\alpha_{s+1},\alpha_s)$,
$s\in\{1,\dots,r-1\}\backslash\{g-1, g\}$ of the equation
$R(\lambda)=0$. Note that generically also $P_s>0$ for
$\delta_s=2$. Therefore, there are at least
$$
\delta_{r-1}+\dots
+\delta_{g+2}+\delta_{g-2}+\dots
+\delta_2=N-\delta_1-\delta_{g-1}-\delta_g-\delta_{g+1}-\delta_r
$$
zeros of $\det \mathcal L_{x,y}(\lambda)$ in the union
$(\zeta_{r-1},\zeta_{g+1}) \cup (\zeta_{g-2},\zeta_1)$. By
considering all the cases when $\delta_{g+1},
\delta_{g-1}\in\{1,2\}$, one concludes that the interval
$(\zeta_{g+1},\zeta_{g-2})$ contains at least
$\delta_{g+1}+\delta_g+\delta_{g-1}-2$ zeros, hence there are at
least $N-2-\delta_1-\delta_r$ zeros of  $\det \mathcal
L_{x,y}(\lambda)=0$ within the interval $(\zeta_{r-1},\zeta_1)$.

In the space-like case $h>0$, from \eqref{beskonacno} it follows
that there exists $\zeta_r<\alpha_r$,  such that $\det \mathcal
L_{x,y}(\zeta_r)<0$, whence additional $\delta_r$ zeros in
$(\zeta_r,\zeta_{r-1})$. On the other hand, in $(\zeta_1,\infty)$
lie at least $\delta_1-1$ zeros. In particular, if $\alpha_1<0$,
we have $\delta_1$ zeros in $(\zeta_1,0)$ and, thanks to
\eqref{beskonacno}, an additional zero in $(0,\infty)$.

If $h<0$, due to \eqref{znak4}, there are roots $\zeta_0>\alpha_1$
and $\zeta_r<\alpha_r$ of $R(\lambda)$ and, consequently,
$(\zeta_r,\zeta_0)$ has at least $N-2$ zeros of $\det \mathcal
L_{x,y}(\lambda)=0$. Further, from \eqref{beskonacno} it follows
that $(-\infty,\zeta_r)$ also has an additional zero of $\det
\mathcal L_{x,y}(\lambda)=0$.

Finally, for the light-like trajectories, by considering all the
cases when $\delta_r,\delta_1\in\{1,2\}$, the intervals
$(-\infty,\zeta_{r-1})$ and $(\zeta_1,\infty)$ have at least
$\delta_r-1$ and $\delta_1-1$ zeros, respectively.
\end{proof}

\begin{remark}{
Note that, if $g=r$, $\delta_g=1$ and
$0\in(\alpha_r,\alpha_{r-1})$, then $A$ is positive definite. On
the other hand, if $g=1$ and $0\in(\alpha_2,\alpha_1)$, then in
the case $\delta_1=1$, it is $\mathbb{Q}^{n-1}=\emptyset$, since
$a_i<0$ for all $i$.
}\end{remark}
\subsection{The Poncelet porism.}
Here, we suppose that one of the following conditions holds:

\begin{itemize}

\item[(i)] The signature is arbitrary, $A$ is positive definite.

\item[(ii)] The signature is $(n,0)$, $A$ is arbitrary.

\item[(iii)] The signature is arbitrary, $EA$ is positive or
negative definite and the assumption \eqref{MinMax} is satisfied.

\end{itemize}

Then $\tau_i a_i=\tau_j a_j$ only if $a_i=a_j$, $\tau_i=\tau_j$,
and the symmetry group is
\begin{equation}\label{grupa}
G=SO(\vert I_1\vert)\times \dots\times SO(\vert I_r\vert).
\end{equation}


From Theorems \ref{posledica}, \ref{NOVA} we get that, in the
space-like and the time--like cases, given a point $(x,y)\in M_h$
in a generic position, we have $N-1$ caustics
\begin{equation}\label{kaustike1}
\mathcal Q_{\lambda_1},\dots,\mathcal Q_{\lambda_{N-1}}
\end{equation}
determined by the
real zeros $\lambda_1,\dots,\lambda_{N-1}$ of $\det\mathcal
L_{x,y}(\lambda)$. They uniquely define the values of the
commuting integrals $F_s,P_s$ on $M_h$. Similarly, in a
light--like case, caustics
\begin{equation}\label{kaustike2}
\mathcal Q_{\lambda_1},\dots,\mathcal
Q_{\lambda_{N-2}}
\end{equation}
determined by the real zeros
$\lambda_1,\dots,\lambda_{N-2}$ of $\det\mathcal
L_{x,y}(\lambda)$, uniquely define the values of the commuting
integrals $\bar F_s,\bar P_s$ on $\bar M=M_0/\mathbb R$,  for a
generic $(x,[y])\in M_0$. Furthermore, all invariant isotropic
tori in $M_h$ with the same values of $F_s,P_s$, i.e., all
invariant pre-isotropic tori in $\bar M$ with the same values of
$\bar F_s,\bar P_s$, are related by the action of the groups of
symmetries $\mathbb Z^n_2$ (see \eqref{group}) and $G$ (see
\eqref{grupa}).

Therefore, by combining Theorems \ref{DIS}, \ref{integraliT},
\ref{posledica}, and \ref{NOVA}, we obtain:

\begin{theorem}\label{poncelet}
If a billiard trajectory $(x_k)$ is periodic with a period $m$ and
if the the lines $l_k$ determined by the segments $x_kx_{k+1}$ are
tangent to $N-1$ quadrics \eqref{kaustike1} (in the space--like or the time--like case) or
to $N-2$ quadrics \eqref{kaustike2} (in the light--like case), then any other
billiard trajectory within $\mathbb Q^{n-1}$ with the same
caustics is also periodic with the same period $m$.
\end{theorem}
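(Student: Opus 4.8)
The plan is to deduce the closure property from the noncommutative integrable structure established in Theorem \ref{integraliT} together with the description of the dynamics on invariant tori in Theorem \ref{DIS}, in the spirit of Veselov. The starting observation is that prescribing the caustics is the same as prescribing the values of the central integrals. Indeed, by Theorems \ref{posledica} and \ref{NOVA}, for a generic space--like or time--like trajectory the real zeros $\lambda_1,\dots,\lambda_{N-1}$ of $\det\mathcal L_{x,y}(\lambda)$ are exactly $N-1$ in number; since $\det\mathcal L_{x,y}(\lambda)=P(\lambda)/\prod_s(\lambda-\alpha_s)^{\delta_s}$ with $\deg P=N-1$ and leading coefficient $K_{N-1}=2H=h$ (see \eqref{symP2}), these zeros reconstruct $P(\lambda)=h\prod_i(\lambda-\lambda_i)$, and hence all central integrals $F_s,P_s$ through the expansion \eqref{SymL}. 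Thus fixing the $N-1$ caustics \eqref{kaustike1} is equivalent to fixing a regular common level set of the central integrals $F_s,P_s$ on $M_h$.

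Next I would invoke Theorem \ref{integraliT}(i): on $M_h$ the central functions $F_s,P_s$ are in involution with the whole algebra generated by $\mathcal F$, and their Hamiltonian vector fields foliate a regular level set into $(N-1)$--dimensional invariant isotropic tori. The billiard map $\phi$ preserves these integrals, hence maps the union $M_c=T_1\cup\dots\cup T_p$ of such tori to itself. By Theorem \ref{DIS}, either $\phi$ leaves a torus $T_i$ invariant and acts there as a shift by a constant vector $a_i$, or it cyclically permutes a chain $T_{i_1}\to\dots\to T_{i_q}\to T_{i_1}$, in which case $\phi^q$ is a shift on each torus of the chain.

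The Poncelet closure is then a property of the shift, not of the individual orbit. If the trajectory $(x_k)$ is $m$--periodic and lies on a single invariant torus $T_i$, then $\phi^m$ restricted to $T_i$ is the identity shift, which forces every point of $T_i$ to be $m$--periodic; if $(x_k)$ runs through a chain of length $q$, the last assertion of Theorem \ref{DIS} gives that all points of $T_{i_1}\cup\dots\cup T_{i_q}$ are periodic with the same period. It remains to pass from one torus (or chain) to every trajectory sharing the same caustics. Such trajectories may lie on tori carrying different values of the non--central momentum integrals $\Phi_{s,ij}$, but, as noted before the theorem, all invariant tori with the same values of $F_s,P_s$ are related by the symmetry groups $\mathbb Z_2^n$ \eqref{group} and $G=SO(\vert I_1\vert)\times\dots\times SO(\vert I_r\vert)$ \eqref{grupa}. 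Since $\phi$ commutes with these symmetries and they preserve the caustics, conjugation carries a shift to a shift of the same finite order; periodicity with period $m$ is therefore transferred to every torus with the prescribed caustics, proving the claim.

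In the light--like case $h=0$ I would run the identical argument on $\bar M=M_0/\mathbb R^*$, replacing the symplectic data by the contact integrable structure of Theorem \ref{integraliT}(ii): the $N-2$ caustics \eqref{kaustike2} fix the values of $\bar F_s,\bar P_s$, the invariant level sets are pre--Legendrian pre--isotropic tori, and the contact version of Theorem \ref{DIS} again reduces the dynamics to shifts, to which the same periodicity and symmetry argument applies. The main obstacle I anticipate is not the translation mechanism itself but the first reduction step: one must ensure that fixing only the caustics---equivalently only the central integrals $F_s,P_s$, and not the full moment map---still pins the trajectory down to a single symmetry orbit of tori, so that the group action in \eqref{group}, \eqref{grupa} genuinely sweeps out all trajectories with those caustics. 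A secondary point requiring care is the bookkeeping of the cycle length $q$ and the resulting period when $\phi$ permutes several tori rather than fixing one.
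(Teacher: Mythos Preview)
Your proposal is correct and follows essentially the same route as the paper: the paper's own argument is simply the one-line observation that the theorem is obtained ``by combining Theorems \ref{DIS}, \ref{integraliT}, \ref{posledica}, and \ref{NOVA}'', together with the preparatory remarks (which you reproduce) that the caustics determine the central integrals $F_s,P_s$ (resp.\ $\bar F_s,\bar P_s$) and that all invariant tori with these values are related by the $\mathbb Z_2^n\times G$ symmetry. Your write-up in fact supplies more detail than the paper does; the two concerns you flag at the end are precisely the points the paper addresses in the paragraph immediately preceding the theorem.
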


Similarly, Theorem \ref{poncelet} applies also in all cases
described in Theorem \ref{NOVA} (ii) and Theorem
\ref{posledica 2} with maximal number of caustics.

\section{Pseudo--Euclidean billiards in projective spaces}

\subsection{Billiards on sphere and Lobachevsky space.}
It is well--known that the billiards within an ellipsoid $\mathbb E^{n-2}$ on the sphere $\mathbb S^{n-1}$ and the Lobachevsky space
$\mathbb H^{n-1}$ are completely integrable \cite{Bo, Ves2, Ta,
DGJ}. The ellipsoid $\mathbb E^{n-2}$ can be defined as a
intersection of a cone
\begin{equation}\label{cone}
\mathbb{K}^{n-1}: \qquad  (A^{-1}x,x)=0,
\end{equation}
where
\begin{equation}\label{EA}
A=\diag(a_1,\dots,a_n), \quad 0<a_1,a_2,\dots,a_{n-2},a_{n-1} <
-a_{n},
\end{equation}
with the Euclidean sphere
\begin{equation}\label{SPHERE}
\mathbb S^{n-1}=\{\langle x,x\rangle =1\}\subset \mathbb E^{n,0},
\end{equation}
or a connected component of a pseudosphere in the
Lorentz--Poincare--Minkowski space $\mathbb E^{n-1,1}$
\begin{equation}\label{lob}
\mathbb H^{n-1}=\{\langle x,x\rangle =-1, \,\, x_n>0\}\subset
\mathbb E^{n-1,1},
\end{equation}
respectively. The induced metrics on $\mathbb S^{n-1}$ and
$\mathbb H^{n-1}$ (a model of the Lobachevsky space) are
Riemannian with constant curvatures $+1$ and $-1$, while geodesic
lines are simply intersections of $\mathbb S^{n-1}$ and $\mathbb
H^{n-1}$ with two--dimensional planes through the origin.

Together with billiards on $\mathbb S^{n-1}$ and
$\mathbb H^{n-1}$, let us consider the following virtual billiard dynamic:
\begin{align}
x_{j+1}&=x_j+\mu_j y_j,\label{1bilijar*}\\
y_{j+1}&=y_j+\nu_j EA^{-1}x_{j+1}, \label{2bilijar*}
\end{align}
where the multipliers
$$
\mu_j=-2\,\frac{(\A x_j,y_j)}{(\A y_j,y_j)}, \qquad
\nu_j=2\,\frac{(\A
x_{j+1},y_{j+1})}{(EA^{-2} x_{j+1},x_{j+1})}
$$
are now determined from the conditions
$$
(\A x_{j+1},x_{j+1})=(\A x_j,x_j)=0, \qquad \langle y_{j+1},y_{j+1}\rangle=\langle y_j,y_j\rangle,
$$
that is, the impact points $x_j$ belong to the cone
\eqref{cone}. Again, the dynamics is defined outside the singular set
\begin{equation}\label{SINGULAR}
\Sigma=\{(x,y)\in T\R^n\,\,\vert\,\,(EA^{-2}x,x)=0 \,\, \vee
(A^{-1}x,y)=0\,\,\vee\,\,(A^{-1} y,y)=0\}.
\end{equation}

As a slight modification of Veselov's description of billiard
dynamics within $\mathbb E^{n-2}$ \cite{Ves2} we have the
following Lemma.

\begin{lemma}\label{opis}
Assume that the signature of the pseudo--Euclidean space $\mathbb E^{k,l}$
is $(n,0)$ or $(n-1,1)$, respectively.  Let $(x_j,y_j)$ be a
trajectory of the billiard mapping $\phi$ given by
\eqref{1bilijar*}, \eqref{2bilijar*}, where $A$ is given by
\eqref{EA}.
Then the intersections $z_j$ of the sequence of the lines
$\Span\{x_j\} $ with the ellipsoid $\mathbb E^{n-2}$ determine the
billiard trajectory within $\mathbb E^{n-2}$ on the sphere
$\mathbb S^{n-1}$ and the Lobachevsky space $\mathbb H^{n-1}$,
respectively.
\end{lemma}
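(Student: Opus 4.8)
The plan is to follow Veselov's central--projection argument \cite{Ves2}, adapted to the cone \eqref{cone}. Write $x_j=\sigma_j z_j$, where $z_j$ is the intersection of $\Span\{x_j\}$ with $\mathbb S^{n-1}$ (resp. the sheet $x_n>0$ of $\mathbb H^{n-1}$) and $\sigma_j$ is the corresponding scaling. First I would verify that the normalization is legitimate and that $z_j\in\mathbb E^{n-2}$. Since the cone condition $(A^{-1}x,x)=0$ is homogeneous of degree two, $z_j$ inherits it from $x_j$, so $z_j$ lies on $\mathbb K^{n-1}$ and, by construction, on $\mathbb S^{n-1}$ resp. $\mathbb H^{n-1}$; hence $z_j\in\mathbb E^{n-2}$. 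In the spherical case $\langle x,x\rangle=(x,x)>0$ automatically. In the Lobachevsky case one must check $\langle x,x\rangle<0$ on the cone: from \eqref{EA}, a point with $(A^{-1}x,x)=0$ satisfies $\langle x,x\rangle=\sum_{i<n}x_i^2(1-|a_n|/a_i)<0$ because $a_i<|a_n|$, which both makes the normalization $z_j=x_j/\sqrt{-\langle x_j,x_j\rangle}$ possible and guarantees that $\mathbb E^{n-2}$ is a genuine ellipsoid.

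Next I would show that consecutive projected points are joined by geodesic arcs of the billiard. The segment $x_jx_{j+1}$ lies on the line $x_j+sy_j$, so $z_j$ and $z_{j+1}$ both lie in the plane $\Span\{x_j,x_{j+1}\}=\Span\{x_{j+1},y_j\}$ through the origin, whose intersection with $\mathbb S^{n-1}$ resp. $\mathbb H^{n-1}$ is a geodesic. Thus $(z_j)$ is a geodesic polygon with vertices on $\mathbb E^{n-2}$. The incoming tangent of this polygon at $z_{j+1}$ is the $\langle\cdot,\cdot\rangle$--orthogonal projection $P(y_j)$ of $y_j$ onto $T_{z_{j+1}}(\mathbb S^{n-1}$ resp. $\mathbb H^{n-1})$, and the outgoing tangent is $P(y_{j+1})$, where for $z=z_{j+1}$ one sets $P(v)=v-\tfrac{\langle v,z\rangle}{\langle z,z\rangle}z$.

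The heart of the matter, and the step I expect to be the main obstacle, is matching the reflection laws. The crucial geometric observation is that the ambient normal $EA^{-1}z$ to the cone is \emph{already tangent} to $\mathbb S^{n-1}$ resp. $\mathbb H^{n-1}$ at $z\in\mathbb E^{n-2}$, because $\langle EA^{-1}z,z\rangle=(A^{-1}z,z)=0$ on the cone, and that it is precisely the normal to $\mathbb E^{n-2}$ inside the tangent space, since $\langle EA^{-1}z,v\rangle=(A^{-1}z,v)=0$ for every $v\in T_z\mathbb E^{n-2}$. Substituting $y_{j+1}=y_j+\nu_j EA^{-1}x_{j+1}=y_j+\nu_j\sigma_{j+1}EA^{-1}z_{j+1}$ into $P$, and using that $P$ fixes the tangent vector $EA^{-1}z_{j+1}$ together with $\langle y_{j+1},z_{j+1}\rangle=\langle y_j,z_{j+1}\rangle$ (the added normal term being $\langle\cdot,\cdot\rangle$--orthogonal to $z_{j+1}$), gives
\[
P(y_{j+1})=P(y_j)+\nu_j\sigma_{j+1}\,EA^{-1}z_{j+1}.
\]
Hence the outgoing and incoming geodesic velocities differ only by a multiple of the normal $EA^{-1}z_{j+1}$, so their tangential components along $\mathbb E^{n-2}$ coincide.

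Finally I would confirm that $P$ preserves the induced length, so that the displayed identity is a genuine reflection rather than a shear. A direct computation yields $\langle P(v),P(v)\rangle=\langle v,v\rangle-\langle v,z\rangle^2/\langle z,z\rangle$; combining the invariant $\langle y_{j+1},y_{j+1}\rangle=\langle y_j,y_j\rangle$ with $\langle y_{j+1},z_{j+1}\rangle=\langle y_j,z_{j+1}\rangle$ then shows $\langle P(y_{j+1}),P(y_{j+1})\rangle=\langle P(y_j),P(y_j)\rangle$. Equal tangential parts and equal lengths force the normal components of $P(y_j)$ and $P(y_{j+1})$ to be opposite whenever $\nu_j\neq0$, which is exactly the billiard reflection off $\mathbb E^{n-2}$ in the induced Riemannian metric. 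The spherical and hyperbolic cases run through the same formulas, the only difference being the values $\langle z,z\rangle=+1$ and $\langle z,z\rangle=-1$ and the choice of sheet; this identifies $(z_j)$ with a billiard trajectory within $\mathbb E^{n-2}$ and completes the proof.
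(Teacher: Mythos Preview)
Your argument is correct, but it follows a genuinely different route from the paper's. The paper proceeds projectively: it first checks that the map $\phi$ induces a well-defined dynamics on the $2$--planes $\pi_j=\Span\{x_j,y_j\}$ by showing that any change of representatives $x_j'=\alpha x_j$, $y_j'=\beta x_j+\gamma y_j$ leads to the same plane $\pi_{j+1}$ after applying $\phi$; it then identifies this plane dynamics with the $\mathbb E^{n-2}$--billiard by starting from three consecutive billiard points $z_j,z_{j+1},z_{j+2}$ and observing that the choice $x_j=z_j$, $y_j=z_{j+1}-z_j$ gives $\Span\{x_{j+1},y_{j+1}\}=\Span\{z_{j+1},z_{j+2}\}$. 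No metric computation on the curved space is performed, and the same argument is recycled verbatim in the subsequent projective generalization (Section~5.2), where no global normalization $z_j=x_j/\sigma_j$ exists.

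Your approach instead carries out the metric computation directly: you normalize to $z_j\in\mathbb S^{n-1}$ (resp.\ $\mathbb H^{n-1}$), identify incoming and outgoing geodesic velocities with the orthogonal projections $P(y_j),P(y_{j+1})$, and verify the reflection law from the two observations $\langle EA^{-1}z,z\rangle=(A^{-1}z,z)=0$ (so the ambient cone normal is already tangent to the model space and serves as the intrinsic normal to $\mathbb E^{n-2}$) and $\langle P(y_{j+1}),P(y_{j+1})\rangle=\langle P(y_j),P(y_j)\rangle$. This is more explicit for the Riemannian cases at hand and makes the reflection law visible, at the cost of being tied to the normalization; adapting it to the general pseudo--Euclidean signatures of Section~5.2 would require extra care where $\langle x,x\rangle$ may vanish. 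Note also that the caveat ``whenever $\nu_j\neq0$'' is automatic here, since $J=(A^{-1}x,y)$ is a nonzero invariant on trajectories avoiding the singular set \eqref{SINGULAR}.
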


\begin{proof}
Firstly,
we prove that the virtual billiard mapping $\phi$ defines the
dynamics of the lines $\Span\{x_j\}$, i.e, the dynamics of the
2-planes $\pi_j=\Span\{ x_j,y_j\}$ through the origin.

Consider
the transformation
\begin{equation}\label{promena}
x_j'=\alpha x_j, \quad y_j'=\beta x_j+\gamma y_j,\quad
\alpha,\beta,\gamma\in\R, \quad \alpha,\gamma\ne 0.
\end{equation}

Let $(x_{j+1},y_{j+1})$ and $(x_{j+1}',y_{j+1}')$ be respectively
the images of $(x_j,y_j)$ and $(x_j', y_j')$ with respect to the
mapping $\phi$. Since $x_j, y_j$ and $x_j', y_j'$ determine the
same 2-plane $\pi_j$, it follows that $x_{j+1}$ and $x_{j+1}'$ are
proportional and belong to $\pi_j \cap\mathbb K^{n-1}$. Thus, the
tangent planes $T_{x_{j+1}} \mathbb K^{n-1}$ and $T_{x_{j+1}'}
\mathbb K^{n-1}$ are equal and the corresponding billiard
reflections coincide.

Further, the incoming velocities $y_j$ and $y_j'$ also can be
related by $ y_j'=\beta' x_{j+1}+\gamma' y_j$, for certain
$\beta',\gamma'\in\mathbb R$. Since $x_{j+1}$ belongs to the
tangent plane $T_{x_{j+1}} \mathbb K^{n-1}$, after the reflections
$$
y_j \mapsto y_{j+1}, \qquad  y_j'\mapsto y_{j+1}'=\beta'
x_{j+1}+\gamma' y_{j+1},
$$
we get the unique 2-plane
\begin{equation*}\label{ravan}
\pi_{j+1}=\Span\{x_{j+1},y_{j+1}\}=\Span\{x_{j+1}',y_{j+1}'\}.
\end{equation*}

Concerning the singular set \eqref{SINGULAR},
note that the equation $(A^{-1}x,y)=0$
is invariant of the  mapping $\phi$ and, under condition \eqref{EA}, the only
solution of the equations
$(EA^{-2}x,x)=0$, $(A^{-1} x,x)=0$
is $x=0$. Also, if $(A^{-1} y_j,y_j)=0$, then we can apply
the transformation \eqref{promena} to obtain
$(A^{-1}y'_j,y'_j)=\beta\gamma (A^{-1}x_j,y_j)\ne 0$.

On the other hand, let $z_j,z_{j+1},z_{j+2}\in \mathbb \mathbb E^{n-2}$ be
3 successive points of the billiard trajectory within $\mathbb E^{n-2}$ and let $x_j=z_j, y_j=z_{j+1}-z_j$. Then
$$
\Span\{z_{j+1},z_{j+2}\}=\Span\{x_{j+1},y_{j+1}\},
$$
where $(x_{j+1},y_{j+1})=\phi(x_j,y_j)$, which completes the
proof.
\end{proof}

In \cite{DGJ}, Cayley's type conditions for periodical
trajectories of the ellipsoidal billiard on the Lobachevsky space
$\mathbb H^{n-1}$ are derived using the "big" $n\times n$-matrix
representation obtained by Veselov \cite{Ves2}. Here, as a simple
modification of the Lax representation \eqref{LAbil}, we obtain
the following "small" $2\times 2$--matrix representation of
billiards within $\mathbb E^{n-2}$. Note that the relationship
between the projective equivalence of the Euclidean space with the
Beltrami-Klein model of the Lobachevsky space and integrability of
the corresponding ellipsoidal billiards is obtained independently
in \cite{Ta} and \cite{DGJ}.

\begin{theorem}\label{billiardLA2}
The trajectories of the mapping \eqref{1bilijar*},
\eqref{2bilijar*} satisfy the matrix equation
\begin{equation}\label{billLA2}
\hat{\mathcal{L}}_{x_{j+1},y_{j+1}}(\lambda)=\hat{\mathcal{A}}_{x_j,y_j}(\lambda)\hat{\mathcal{L}}_{x_j,y_j}(\lambda)\hat{\mathcal{A}}_{x_j,y_j}^{-1}(\lambda),
\end{equation}
with $2\times2$ matrices depending on the parameter $\lambda$,
\begin{align*}
\hat{\mathcal{L}}_{x_j,y_j}(\lambda)&=\left(\begin{array}{cc}
q_{\lambda}(x_j,y_j) & q_{\lambda}(y_j,y_j)\\
-q_{\lambda}(x_j,x_j) & -q_{\lambda}(x_j,y_j)
\end{array}\right),\\
\hat{\mathcal{A}}_{x_j,y_j}(\lambda)&= \left(\begin{array}{cc}
I_j\lambda+2J_j\nu_j &-I_j\nu_j\\
-2J_j\lambda& I_j\lambda
\end{array}\right),
\end{align*}
where $q_{\lambda}$ is given by \eqref{bilin} and $J_j,I_j,\nu_j$
by \eqref{POMOC}.
\end{theorem}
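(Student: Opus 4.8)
The plan is to follow verbatim the derivation of the Lax representation \eqref{LAbil} carried out in \cite{JJ}, isolating the single place where the defining constraint enters and checking that passing from the quadric $(\A x,x)=1$ to the cone $(\A x,x)=0$ of \eqref{cone} removes exactly the constant term responsible for the $-1$ in the lower--left entry of $\mathcal L_{x,y}(\lambda)$. Indeed, the proposed matrix $\hat{\mathcal A}_{x_j,y_j}$ is \emph{identical} to $\mathcal A_{x_j,y_j}$, and $\hat{\mathcal L}_{x_j,y_j}$ differs from $\mathcal L_{x_j,y_j}$ only in the $(2,1)$--entry, where $-1-q_{\lambda}(x_j,x_j)$ is replaced by $-q_{\lambda}(x_j,x_j)$. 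Thus it suffices to re-run the computation while tracking the terms carrying the factor $(\A x_{j+1},x_{j+1})$, which now vanishes.

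First I would record the algebraic identity governing the reflection direction. Since $(\lambda E-A)^{-1}$, $E$ and $A$ are diagonal, a one--line partial--fraction computation gives
\begin{equation*}
q_{\lambda}(EA^{-1}x,y)=\frac1{\lambda}\bigl(q_{\lambda}(x,y)+(\A x,y)\bigr),
\end{equation*}
which is the analytic engine behind the whole representation. Combined with the bilinearity and symmetry of $q_{\lambda}$ and with the recurrences \eqref{1bilijar*}, \eqref{2bilijar*}, it expresses the three entries $q_{\lambda}(x_{j+1},x_{j+1})$, $q_{\lambda}(x_{j+1},y_{j+1})$, $q_{\lambda}(y_{j+1},y_{j+1})$ at step $j+1$ in terms of the step--$j$ data $q_{\lambda}(x_j,x_j)$, $q_{\lambda}(x_j,y_j)$, $q_{\lambda}(y_j,y_j)$ together with the scalars $\mu_j,\nu_j,I_j,J_j$ and the invariant $J_{j+1}=(\A x_{j+1},y_{j+1})=(\A x_j,y_j)=J_j$ (the latter is checked exactly as in \cite{JJ} and never uses the value of $(\A x,x)$). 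Crucially, whenever the displayed identity is applied with $x=x_{j+1}$, the term $(\A x_{j+1},x_{j+1})$, which in the quadric case contributes the constant $1$, is now $0$ because $x_{j+1}\in\mathbb K^{n-1}$; this is precisely the discrepancy between $\hat{\mathcal L}$ and $\mathcal L$.

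With these expressions in hand I would verify the equivalent polynomial identity
\begin{equation*}
\hat{\mathcal L}_{x_{j+1},y_{j+1}}(\lambda)\,\hat{\mathcal A}_{x_j,y_j}(\lambda)=\hat{\mathcal A}_{x_j,y_j}(\lambda)\,\hat{\mathcal L}_{x_j,y_j}(\lambda)
\end{equation*}
entry by entry, comparing coefficients of the powers of $\lambda$ after clearing the denominator $\lambda$ introduced by the key identity; the invertibility of $\hat{\mathcal A}_{x_j,y_j}(\lambda)$ for generic $\lambda$ then yields \eqref{billLA2}. The bulk of the work -- and the only genuine obstacle -- is this bookkeeping: one must match the $\mu_j,\nu_j,I_j,J_j$--dependent terms on the two sides, repeatedly using $\mu_j=2J_j/I_j$ and $\nu_j(EA^{-2}x_{j+1},x_{j+1})=2J_j$ to simplify, exactly as in the proof of \eqref{LAbil}. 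Since every line of that proof is either insensitive to the value of $(\A x,x)$ or uses it only through the term singled out above, the computation closes verbatim with $1$ replaced by $0$, producing the stated matrices. I expect no new conceptual difficulty: the entire content is the observation that the cone constraint annihilates precisely the inhomogeneous term of the quadric Lax pair.
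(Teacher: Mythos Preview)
Your proposal is correct and is precisely the ``simple modification of the Lax representation \eqref{LAbil}'' that the paper invokes without further detail: you correctly isolate that $\hat{\mathcal A}=\mathcal A$ and that the only change in $\hat{\mathcal L}$ is the $(2,1)$--entry, governed by the value of $(\A x,x)$, so re-running the computation from \cite{JJ} with the cone constraint $(\A x,x)=0$ in place of $(\A x,x)=1$ yields \eqref{billLA2}. Your argument is in fact more explicit than the paper's one--line reference.
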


\subsection{Billiards in projective spaces.}
Next, we consider the mapping \eqref{1bilijar*}, \eqref{2bilijar*} in the pseudo--Euclidean
spaces $\mathbb E^{k,l}$ of arbitrary signature and without the assumption
\eqref{EA}.
We also suppose the symmetries \eqref{sym}. Note that
Theorem \ref{billiardLA2} still applies and from the expression
\begin{equation}
\label{SymL2} \det\hat{\mathcal L}_{x,y}(\lambda)=\sum_{s=1}^r
\frac{\hat F_s}{\lambda-\alpha_s}+\frac{\hat P_s}{(\lambda - \alpha_s)^2},
\end{equation}
we get the integrals:
\begin{align*}
\label{integrals2} \hat F_s  &=\sum_{i\in I_s}\sum_{j\notin I_s}\frac{(x_iy_j-x_jy_i)^2}{\tau_j a_i-\tau_i a_j},\\
\nonumber          \hat P_s  &=\sum_{i,j\in I_s,i<j}\tau_i\tau_j(x_iy_j-x_jy_i)^2.
\end{align*}

They satisfy the relation
\begin{equation}\label{nula}
\hat F_1+\dots+\hat F_r=0.
\end{equation}

Further, as in the proof of Lemma \ref{opis}, if
$(x_j',y_j')$ is the image of $(x_j,y_j)$
by the transformation \eqref{promena} and
$(x_{j+1},y_{j+1})=\phi(x_j,y_j)$,
$(x_{j+1}',y_{j+1}')=\phi(x_j', y_j')$, then
the 2--planes spanned by $x_{j+1},y_{j+1}$ and $x_{j+1}',y_{j+1}'$ coincides.
Also, the part of
the singular set $\{(EA^{-2}x,x)=0\} \cup\{(A^{-1}x,y)=0\}$ in
\eqref{SINGULAR} is invariant with respect to the transformation
\eqref{promena}. If $(A^{-1} y_j,y_j)=0$, then we can apply
the transformation \eqref{promena} to obtain
$(A^{-1}y'_j,y'_j)=\beta\gamma (A^{-1}x_j,y_j)\ne 0$.
Thus, if necessary, we can
replace $y_j$ by $y_j'$ in order to determine $x_{j+1}$.

Therefore, the dynamics \eqref{1bilijar*}, \eqref{2bilijar*} induces
a well defined dynamics of the lines $\Span\{x_j\}$, i.e., the
points of the $(n-1)$--dimensional projective space $\mathbb
P(\mathbb E^{k,l})$
$$
z_j=[x_j]\in \mathbb Q^{n-2}
$$
outside the singular set
$$
\Xi=\{[x]\in \mathbb P(\mathbb E^{k,l})\,\,\vert\,\,
(EA^{-2}x,x)=0\},
$$
where $\mathbb Q^{n-2}$ is the projectivisation of the cone
\eqref{cone} within $\mathbb P(\mathbb E^{k,l})$.

\begin{definition} We refer to a sequence of the points $(z_j)$ as a
{\it billiard trajectory} within the quadric $\mathbb Q^{n-2}$ in
the projective space $\mathbb P(\mathbb E^{k,l})$ with respect to the
metric induced from the pseudo--Euclidean space $\mathbb E^{k,l}$.
\end{definition}

In particular, for signatures $(n,0)$ and $(n-1,1)$ with the
condition \eqref{EA} we obtain ellipsoidal billiards on the sphere
\eqref{SPHERE} and the Lobachevsky space \eqref{lob},
respectively.

Now we consider the following pseudo--confocal family of cones
(see \cite{Ves2})
\begin{equation}\label{confocal3}
\mathcal K_\lambda: \quad  ((A-E\lambda)^{-1} x,
x)=\sum_{i=1}^n\frac{x_i^2}{a_i-\tau_i\lambda}=0, \quad \lambda
\ne \tau_i a_i, \quad i=1,\dots,n,
\end{equation}
and the corresponding projectivisation, the pseudo--confocal
family of quadrics $\mathcal P_\lambda$.

\begin{theorem}\label{sfera}
Let $(z_k)$ be a sequence of the points of a billiard trajectory
within quadric $\mathbb Q^{n-2}$ in the projective space $\mathbb
P(\mathbb E^{k,l})$. If a projective line
$$
l_k=z_k z_{k+1}
$$
is tangent to a quadric $\mathcal P_{\lambda^*}$ then it is
tangent to $\mathcal P_{\lambda^*}$ for all $k\in\mathbb Z$.
\end{theorem}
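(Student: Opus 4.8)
The plan is to reduce the statement to the invariance of $\det\hat{\mathcal L}_{x,y}(\lambda)$ under the billiard map, exactly as in the proof of Theorem \ref{ch1}, but now with tangency interpreted projectively and with respect to the quadratic form of the cone. The two ingredients are: (a) the projective line $l_k=z_kz_{k+1}$ is tangent to the quadric $\mathcal P_{\lambda^*}$ if and only if $\det\hat{\mathcal L}_{x_k,y_k}(\lambda^*)=0$; and (b) the quantity $\det\hat{\mathcal L}_{x_k,y_k}(\lambda)$ does not depend on $k$.

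First I would record the geometric meaning of the billiard line. By \eqref{1bilijar*} we have $x_{k+1}=x_k+\mu_k y_k$, so $z_k=[x_k]$ and $z_{k+1}=[x_{k+1}]$ both lie in the projectivization of the $2$-plane $\pi_k=\Span\{x_k,y_k\}$ through the origin, and the projective line through them is precisely $l_k=\mathbb P(\pi_k)$. Its points are thus $[x_k+sy_k]$, $s\in\R\cup\{\infty\}$, and by the discussion following \eqref{promena} this line is independent of the choice of representative pair $(x_k',y_k')$ produced by \eqref{promena}.

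Next I would establish (a). The quadric $\mathcal P_{\lambda^*}$ is the projectivization of the cone $\mathcal K_{\lambda^*}$ from \eqref{confocal3}, whose defining quadratic form is $-q_{\lambda^*}(x,x)$ by \eqref{bilin}, since $A-E\lambda=-(\lambda E-A)$. Restricting this form to $\pi_k$ and intersecting $l_k$ with $\mathcal P_{\lambda^*}$ leads to the quadratic equation in $s$
\[
q_{\lambda^*}(x_k,x_k)+2s\,q_{\lambda^*}(x_k,y_k)+s^2 q_{\lambda^*}(y_k,y_k)=0,
\]
and tangency of $l_k$ to $\mathcal P_{\lambda^*}$ is the condition that this equation has a double root, i.e. that its discriminant vanishes:
\[
q_{\lambda^*}(x_k,y_k)^2-q_{\lambda^*}(x_k,x_k)\,q_{\lambda^*}(y_k,y_k)=0.
\]
Comparing with the matrix $\hat{\mathcal L}$ in Theorem \ref{billiardLA2}, the left-hand side equals $-\det\hat{\mathcal L}_{x_k,y_k}(\lambda^*)$, so tangency is equivalent to $\det\hat{\mathcal L}_{x_k,y_k}(\lambda^*)=0$.

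Finally, (b) is immediate from the Lax representation \eqref{billLA2}: since $\hat{\mathcal L}_{x_{k+1},y_{k+1}}(\lambda)$ is conjugate to $\hat{\mathcal L}_{x_k,y_k}(\lambda)$, their determinants coincide for every $\lambda$, hence $\det\hat{\mathcal L}_{x_k,y_k}(\lambda^*)$ is independent of $k$. Combining (a) and (b): if $l_{k_0}$ is tangent to $\mathcal P_{\lambda^*}$ for some $k_0$, then $\det\hat{\mathcal L}_{x_{k_0},y_{k_0}}(\lambda^*)=0$, whence $\det\hat{\mathcal L}_{x_k,y_k}(\lambda^*)=0$ and $l_k$ is tangent to $\mathcal P_{\lambda^*}$ for all $k\in\Z$. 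The main point requiring care is step (a): one must check that the projective notion of tangency (a double intersection point, or the line lying on the quadric) is faithfully captured by the discriminant of the restricted quadratic form, including the degenerate configurations where $q_{\lambda^*}(y_k,y_k)=0$; and one must ensure the representatives $(x_k,y_k)$ stay outside the singular set \eqref{SINGULAR} so that $\det\hat{\mathcal L}_{x_k,y_k}(\lambda)$ and the reflection remain well defined, which is guaranteed by the invariance of $(A^{-1}x,y)$ and the remarks preceding the definition of the projective billiard trajectory.
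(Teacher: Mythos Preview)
Your proof is correct and follows the same overall architecture as the paper's: both establish (a) that tangency of $l_k$ to $\mathcal P_{\lambda^*}$ is equivalent to $\det\hat{\mathcal L}_{x_k,y_k}(\lambda^*)=0$, and (b) that this determinant is preserved along the trajectory via the Lax representation \eqref{billLA2}. The only difference is in the derivation of (a). You restrict the quadratic form $q_{\lambda^*}$ to the $2$-plane $\pi_k$, write the intersection of $l_k$ with $\mathcal P_{\lambda^*}$ as a quadratic in the affine parameter $s$, and read off the tangency condition as the vanishing of its discriminant. The paper instead invokes Fedorov's criterion \eqref{cone0} in Pl\"ucker coordinates for a $2$-plane to be tangent to a cone, and then expands the resulting sum $\sum_{i<j}\pi_{ij}^2/((a_i-\lambda^*\tau_i)(a_j-\lambda^*\tau_j))$ to recover $q_{\lambda^*}(x,x)q_{\lambda^*}(y,y)-q_{\lambda^*}(x,y)^2$. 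Your argument is more elementary and self-contained; the paper's route is slightly more general (it applies verbatim to $k$-planes) and makes explicit the link to the Pl\"ucker embedding used elsewhere in the literature on confocal quadrics. Your caveat about the degenerate case $q_{\lambda^*}(y_k,y_k)=0$ is handled cleanly if you parametrize $l_k$ homogeneously by $[u:v]\mapsto[ux_k+vy_k]$, so that tangency becomes the vanishing of the discriminant of a binary quadratic form; this is implicit in the Pl\"ucker formulation as well.
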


\begin{proof} Let $\pi_I$, $I=(i_1,\dots, i_k)$, $1\le
i_1<i_2<\dots<i_k\le n$ be the Pl\"ucker coordinates of a
$k$-plane $\pi$ passing through the origin in ${\mathbb R}^{n}$.
Then  $\pi$ is tangent to the nondegenerate cone $\{\langle x,B
x\rangle =0\}$, $B={\rm diag}(b_1,\dots b_n)$ if and only if (see
Fedorov \cite{fe})
\begin{equation} \label{cone0}
\sum_I |B|_I^I \,\pi_I^2 =0, \qquad |B|_I^I =b_{i_1}\cdots
b_{i_k}.
\end{equation}

Now, let $z_k=[x_k]$,  $z_{k+1}=[x_{k+1}]$ and define
$y_k=x_{k+1}-x_k$, $ \pi_k=\Span\{x_k,y_k\}$. The condition that
the plane $\pi_k$ is tangent to the cone $\mathcal K_{\lambda^*}$
from the confocal family \eqref{confocal3} is given by the similar
invariant expression as in the case of virtual billiards within
quadric $\mathbb Q^{n-1}$,
\begin{equation}\label{jednacina2}
\det \hat{\mathcal L}_{x_k,y_k}(\lambda^*)=q_{\lambda^*}(y_k,y_k)q_{\lambda^*}(x_k,x_k)-q_{\lambda^*}(x_k,y_k)^2=0.
\end{equation}
Further, if $\det \hat{\mathcal L}_{x_k,y_k}(\lambda^*)=0$ for a given
$(x_k,y_k)$, it will be zero for all $k\in\mathbb Z$ under the
mapping $\phi$ (Theorem \ref{billiardLA2}), while from the
description of the billiard dynamics, the projectivisation of
$\pi_k=\Span\{x_k,y_k\}$ equals $l_k$ for all $k\in\mathbb Z$.

To obtain \eqref{jednacina2} we set
$$
B=\diag \left(\frac{1}{a_1-
\lambda^* \tau_1},\dots, \frac{1}{a_n- \lambda^* \tau_n}\right).
$$

Then, in view of \eqref{cone0}, the set of the $2$-planes
$\pi=\Span\{x,y\}$ that are tangent to $\mathcal K_{\lambda^*}$ is
described by the following quadratic equation in terms of the
Pl\"ucker coordinates $\pi_{i,j}=x_iy_j-x_jy_i$, $1\le i<j\le n$
of $\pi$
\begin{align*}
0 &=\sum_{1\le i<j\le n} \frac{ 1 }{ (a_i- \lambda^* \tau_i)
(a_j-
\lambda^* \tau_j ) } \, (x_iy_j-x_jy_i)^2  \\
&= \sum_{1\le i,j\le n} \frac{ 1 }{ (a_i- \lambda^*
\tau_i) (a_j- \lambda^* \tau_j ) } \, (x_i^2y_j^2-x_ix_jy_iy_j)\\
&=\sum \frac{ x_i^2 }{ (a_i- \lambda^* \tau_i)}\sum \frac{ y_i^2
}{ (a_i- \lambda^* \tau_i)}-\left(\sum \frac{ x_iy_i }{ (a_i-
\lambda^* \tau_i)}\right)^2=\det \hat{\mathcal L}_{x,y}(\lambda^*)\,.
\end{align*}
\end{proof}

In order to determine the number of caustics one should provide an
additional  analysis. The following situation leads to the
statement analogous to Theorems \ref{posledica} and \ref{NOVA}.

As in the case of the ellipsoidal billiards on a sphere $\mathbb
S^{n-1}$ and a Lobachevsky space $\mathbb H^{n-1}$, we assume the
relation \eqref{EA}. Then $\tau_i a_i=\tau_j a_j$ only if
$a_i=a_j$, $\tau_i=\tau_j$, $i,j<n$. As above, let $\delta_s=2$
for $\vert I_s \vert \ge 2$, $\delta_s=1$ for $\vert I_s\vert=1$,
and
$$
N=\delta_1+\dots+\delta_r.
$$

\begin{theorem}\label{ch-pr}
The lines $l_k=z_k z_{k+1}$ determined by a generic billiard
trajectory within $\mathbb Q^{n-2}$ are tangent to $N-2$ fixed
quadrics from the projectivisation of the confocal family
\eqref{confocal3}.
In particular, the trajectories of billiards within ellipsoid $\mathbb E^{n-2}$, with the above symmetry,
on the sphere
\eqref{SPHERE} and the Lobachevsky space \eqref{lob} are tangent to $N-2$ fixed
cones from the  confocal family
\eqref{confocal3}.
\end{theorem}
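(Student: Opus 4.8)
The plan is to reduce the statement to counting the real zeros in $\lambda$ of the conserved rational function $\det\hat{\mathcal L}_{x,y}(\lambda)$, and then to repeat, with the modifications forced by the cone geometry, the sign--counting argument of Theorems \ref{posledica} and \ref{NOVA}. First I note that by Theorem \ref{billiardLA2} the function $\det\hat{\mathcal L}_{x_k,y_k}(\lambda)$ does not depend on $k$, while the Pl\"ucker computation in the proof of Theorem \ref{sfera} (see \eqref{jednacina2}, \eqref{cone0}) shows that the projective line $l_k$ is tangent to $\mathcal P_{\lambda^*}$ exactly when $\det\hat{\mathcal L}_{x_k,y_k}(\lambda^*)=0$. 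Hence the caustics of the trajectory are the projectivised cones $\mathcal P_{\lambda^*}$ indexed by the real zeros of $\lambda\mapsto\det\hat{\mathcal L}_{x,y}(\lambda)$, and it suffices to prove that for a generic $(x,y)$ with $x$ on the cone \eqref{cone} this function has exactly $N-2$ real zeros.

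Next I would clear denominators in \eqref{SymL2}, setting $\hat P(\lambda)=\prod_{s=1}^r(\lambda-\alpha_s)^{\delta_s}\det\hat{\mathcal L}_{x,y}(\lambda)$, the exact analog of \eqref{symP2}. The crucial difference from the affine case is that $\det\hat{\mathcal L}$ carries no additive constant: expanding $q_\lambda$ at infinity gives $\det\hat{\mathcal L}_{x,y}(\lambda)=(\langle x,x\rangle\langle y,y\rangle-\langle x,y\rangle^2)/\lambda^2+O(\lambda^{-3})$, so $\hat P$ has degree $N-2$ with leading coefficient the pseudo--Euclidean Gram determinant of $x,y$, which is nonzero for generic $(x,y)$. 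Thus $\det\hat{\mathcal L}_{x,y}(\lambda)=0$ has exactly $N-2$ complex roots and the whole content is to show that they are all real. This already explains why one caustic fewer appears than in the affine Theorems \ref{posledica}, \ref{NOVA}: the faster $1/\lambda^2$ decay, as opposed to the $\langle y,y\rangle/\lambda$ behaviour in \eqref{beskonacno}, removes the sign change at infinity.

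For the sign count I would use three inputs, exactly as in the proof of Theorem \ref{NOVA}(i). Since $x$ lies on the cone, $q_0(x,x)=-(A^{-1}x,x)=0$, whence $\det\hat{\mathcal L}_{x,y}(0)=-q_0(x,y)^2=-(A^{-1}x,y)^2<0$ off the singular set, so that $\lambda=0$ plays the role it did in \eqref{qP}. Under \eqref{EA} every block $I_s$ with $|I_s|\ge2$ consists of indices $i<n$ with $\tau_i=1$, so its residue $\hat P_s=\sum_{i<j\in I_s}(x_iy_j-x_jy_i)^2$ is generically positive and, as in \eqref{bb}, $\det\hat{\mathcal L}\to+\infty$ at the double pole $\alpha_s$. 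Finally, writing $q_\lambda(y,y)$ as in \eqref{jed1}, the zeros $\zeta_s$ of $R(\lambda)$ interlace the $\alpha_s$ by the sign rule \eqref{znak3}, and at each such zero $\det\hat{\mathcal L}(\zeta_s)=-q_{\zeta_s}(x,y)^2\le0$. Combining these, every interval bounded by two consecutive points where $\det\hat{\mathcal L}<0$ (chosen among the $\zeta_s$ and $\lambda=0$) and containing a pole $\alpha_s$ yields at least $\delta_s$ real zeros, via $+\infty$--excursions at the double poles and a single crossing at the simple poles. Summing the $\delta_s$ over the enclosed poles produces at least $N-2$ real zeros; since $\deg\hat P=N-2$, all zeros are real, giving exactly $N-2$ caustics. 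The ``in particular'' assertion is the specialisation in which the induced metric on $\mathbb S^{n-1}$, resp.\ $\mathbb H^{n-1}$, is Riemannian, so $y$ is space--like and the $\langle y,y\rangle_s$ have a definite global sign.

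I expect the main obstacle to be the bookkeeping around the exceptional singleton block $I=\{n\}$ and near the endpoints $\lambda=0,\pm\infty$. Because $a_n<0$ while $a_i>0$ for $i<n$, the value $\alpha$ attached to $\{n\}$ is extremal --- the smallest for the sphere signature ($\tau_n=+1$, $\alpha=a_n<0$) and the largest for the Lobachevsky signature ($\tau_n=-1$, $\alpha=-a_n>0$) --- and the sign of $\langle y,y\rangle$ on that block flips accordingly. Verifying that in each of these cases the ``negative'' reference points $\zeta_s,0$ and their interlacing with the poles still enclose the full budget of $N-2$ sign changes is the delicate step; this is precisely the type of case distinction carried out in the proofs of Theorems \ref{NOVA}(i) and \ref{posledica 2}, and once it is reproduced for the determinant $\det\hat{\mathcal L}$ the conclusion follows uniformly.
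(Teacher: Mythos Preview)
Your approach is essentially the paper's: reduce to counting real zeros of $\det\hat{\mathcal L}_{x,y}(\lambda)$ via Theorems \ref{billiardLA2} and \ref{sfera}, observe that $\hat P(\lambda)$ has degree $N-2$ (the paper deduces this from \eqref{nula} rather than from your asymptotic expansion, but these are equivalent), note $\det\hat{\mathcal L}_{x,y}(0)=-q_0(x,y)^2<0$, and then invoke the sign analysis of Theorem \ref{NOVA}.

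The one device you are missing is the paper's choice of representatives: since $\det\hat{\mathcal L}_{x,y}(\lambda)$ is unchanged up to a positive scalar under the transformation \eqref{promena}, one may take $x_k,x_{k+1}$ with last coordinate $1$, so that $x=(x_1,\dots,x_{n-1},1)$ and $y=x_{k+1}-x_k=(y_1,\dots,y_{n-1},0)$. This kills the contribution of the singleton block $I_g=\{n\}$ to $q_\lambda(y,y)$ and $q_\lambda(x,y)$, and since $\tau_i=+1$ for all $i<n$ in both signatures $(n,0)$ and $(n-1,1)$, the remaining $\langle y,y\rangle_s$ are all nonnegative. Thus the ``delicate bookkeeping'' around the exceptional block that you flag as the main obstacle simply disappears, and the interlacing argument of Theorem \ref{NOVA}(i) goes through without the case distinctions of Theorem \ref{posledica 2}.
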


\begin{proof} From \eqref{SymL2},
\eqref{nula},
 we get
\begin{align*}
\hat P(\lambda)&=
 (\lambda-\alpha_1)^{\delta_1} \cdots
(\lambda-\alpha_r)^{\delta_r}\det\hat{\mathcal L}_{x,y}(\lambda)\\
&=\lambda^{N-2}\hat K_{N-2}+\dots+\lambda \hat K_1+\hat K_0.
\end{align*}

In addition, under the assumption \eqref{EA}, we can take
representatives $x_k, x_{k+1}$ of $z_k, z_{k+1}$, such that the
last components are equal to 1. Then, if we denote $x=x_k$ and
$y=x_{k+1}-x_k$, we have
$$
x=(x_1,\dots,x_{n-1},1), \qquad y=(y_1,\dots,y_{n-1},0).
$$

From \eqref{cone} we have $\det\hat{\mathcal L}_{x,y}(0)<0$ and
following the lines of the proof of Theorem \ref{NOVA}, it can be
proved that the equation $\hat P(\lambda)=0$ has
$N-2$ real solutions, for a generic $(x,y)$.
\end{proof}

Theorem \ref{ch-pr} for a nonsymmetric ellipsoid $\mathbb E^{n-2}$ ($N=n$) on the
Lobachevsky space $\mathbb H^{n-1}$ is well known (Theorem 3, \cite{Ves2}).

\subsection*{Acknowledgments}
We are grateful to the referee for pointing out several misprints.
The research of B. J. was supported by the Serbian Ministry of
Science Project 174020, Geometry and Topology of Manifolds,
Classical Mechanics and Integrable Dynamical Systems.


\begin{thebibliography}{99}

\bibitem{AA}
\newblock A. Avila, J. De Simoi, and V. Kaloshin,
\newblock  {An integrable deformation of an ellipse of small eccentricity is an ellipse},
\newblock \emph{Annals of Mathematics} {\bf 184} (2016), 1--32, arXiv:1412.2853.

\bibitem{Ar}
\newblock {\rrm V. \,I. Arnol\m d,}
\newblock {\rit Matematicheskie metody
klassichesko{\ji} mehaniki},
\newblock {\rrm Moskva, Nauka 1974} (Russian).
English translation:

\newblock V.\,I. Arnol'd, \emph{Mathematical Methods of Classical
Mechanics},
\newblock Second edition. Graduate Texts in Mathematics, 60. Springer-Verlag, New York, 1989.

\bibitem{Au} 
\newblock M. Audin,
\newblock {Courbes alg\'ebricues et syst\'emes int\'egrables:
g\'eodesiques des quadriques},
\newblock (French) \emph{Expo. Math.} {\bf 12} (1994), 193--226.

\bibitem{BM} 
\newblock A. Banyaga and P. Molino,
\newblock {G\' eom\' etrie des
formes de contact compl\'etement int\'egrables de type torique},
\newblock S\'eminare Gaston Darboux, Montpellier (1991-92), 1-25 (French).
English translation:
\newblock {Complete integrability in contact
geometry},
\newblock Penn State preprint PM 197, 1996.

\bibitem{BM1}
\newblock M. Bialy and  A. E. Mironov,
\newblock {Angular billiard and algebraic Birkhoff conjecture},
\newblock preprint, arXiv:1601.03196.

\bibitem{BM2}
\newblock  M. Bialy and  A. E. Mironov,
\newblock {Algebraic Birkhoff conjecture for billiards on Sphere and Hyperbolic plane},
\newblock Journal of Geometry and Physics, \textbf{115} (2016), 150–-156, arXiv:1602.05698.

\bibitem{Bo}
\newblock {\rrm S. V. Bolotin},
\newblock {\rrm Integriruemye bil{\m}yard{y} na
poverhnostyah postoyanno{\ji} krivizn{y}},
\newblock {\rit Mat. Zametki}, {\bf 51} (1992), no. 2, 20--28,
(Russian); English translation:

\newblock S. V. Bolotin,
\newblock {Integrable billiards on constant curvature
surfaces},
\newblock \emph{Math. Notes, Math. Notes}, {\bf 51} (1992), no. 1-2,
117–-123.

\bibitem{DGJ} 
\newblock V. Dragovi\' c, B. Jovanovi\' c, and M. Radnovi\'c,
\newblock {On elliptic billiards in the Lobachevsky space and associated
geodesic hierarchies},
\newblock \emph{J. Geom. Phys.} {\bf 47} (2003), 221–-234, arXiv:math-ph/0210019.

\bibitem{DR2006} 
\newblock V. Dragovi\'c and M. Radnovi\'c,
\newblock {Geometry of integrable billiards and pencils of quadrics},
\newblock \emph{J. Math. Pures Appl.} {\bf 85} (2006), No. 6, 758--790, arXiv:math-ph/0512049.

\bibitem{DR2008} 
\newblock V. Dragovi\'c and M. Radnovi\'c,
\newblock Hyperelliptic Jacobians as billiard algebra of pencils
of quadrics: beyond Poncelet porisms,
\newblock \emph{Adv. Math.} {\bf 219} (2008), no. 5, 1577-–1607, arXiv{0710.3656}.

\bibitem{DrRa} 
\newblock {\rrm V. Dragovich i M. Radnovich},
\newblock {\rit Integriruemye billiard{y}, kvadriki i mnogomernye porizmy Ponsele},
\newblock {\rrm Moskva--Izhevsk, Regulyarnaya i hoiticheskaya dinamika, } 2010. (Russian); English
translation:

\newblock V. Dragovi\'c and M. Radnovi\'c,
\newblock \emph{Poncelet Porisms and Beyond. Integrable Billiards, Hyperelliptic Jacobians and Pencils of
Quadrics},
\newblock Frontiers in Mathematics. Birkhauser/Springer Basel AG, Basel, 2011.

\bibitem{DR} 
\newblock V. Dragovi\'c and M. Radnovi\'c,
\newblock {Ellipsoidal billiards in pseudo-euclidean spaces and relativistic quadrics},
\newblock \emph{Adv. Math.} \textbf{231} (2012), 1173--1201, arXiv:1108.4552.

\bibitem{DR2012} 
\newblock V. Dragovi\'c and M. Radnovi\'c,
\newblock  Billiard algebra, integrable line congruences, and double reflection nets,
\newblock \emph{Journal of Nonlinear Mathematical Physics}, {\bf 19} (2012), 1250019, arXiv:1112.5860.


\bibitem{DR2014} 
\newblock V. Dragovi\'c and M. Radnovi\'c,
\newblock {Bicentennial of the great Poncelet theorem (1813-2013): current advances },
\newblock \emph{Bulletin AMS}, \textbf{51} (2014) 373--445, arXiv:1212.6867.

\bibitem{fe} 
\newblock {Yu. N. Fedorov},
\newblock {Integrable systems, Lax representation and
confocal quadrics},
\newblock \emph{Dynamical systems in classical mechanics}, 173–-199,
\newblock Amer. Math. Soc. Transl. (2) Vol. {\bf 168}, (1995).

\bibitem{fedo} 
\newblock {\rrm Yu. N. Fedorov},
\newblock {\rrm Zllipsoidal{\m}ny{\ji} billiard s kvadratichnym potencialom},
\newblock {\rit Funkc. analiz i ego prilozh.} {\bf 35} (2001), no. 3, 48--59,
95--96 (Russian); English translation:

\newblock Yu. N. Fedorov,
\newblock {An ellipsoidal billiard with quadratic potential},
\newblock \emph{Funct. Anal. Appl}. \textbf{35}(3) (2001), 199--208.

\bibitem{FeJo}
\newblock Yu. N. Fedorov and  B. Jovanovi\' c,
\newblock {Continuous and discrete Neumann systems on Stiefel varieties as matrix generalizations of the Jacobi-Mumford systems},
\newblock preprint, arXiv:1503.07053.

\bibitem{Gl} 
\newblock A. Glutsyuk,
\newblock {On quadrilateral orbits in complex algebraic planar
billiards},
\newblock \emph{Moscow Math. J.}, {\bf 14} (2014), No. 2, 239--289, arXiv:1309.1843.

\bibitem{Jov} 
\newblock B. Jovanovi\' c,
\newblock {Noncommutative integrability and action angle variables in contact geometry},
\newblock \emph{Journal of Symplectic Geometry},  \textbf{10} (2012),
535--562, arXiv:1103.3611.

\bibitem{BozaRos} 
\newblock B. Jovanovi\'c,
\newblock {The Jacobi--Rosochatius problem on an ellipsoid: the Lax representations and billiards},
\newblock \emph{Arch. Rational Mech. Anal.} \textbf{210} (2013), 101--131, arXiv:1303.6204.

\bibitem{Jov2} 
\newblock B. Jovanovi\' c,
\newblock {Heisenberg model in pseudo--Euclidean
spaces},
\newblock \emph{Regular and Chaotic Dynamics},  {\bf 19} (2014), No. 2, 245--250,  arXiv:1405.0905.

\bibitem{JJ2} 
\newblock  B. Jovanovi\' c and  V. Jovanovi\' c,
\newblock {Contact flows and integrable systems},
\newblock \emph{Journal of Geometry and Physics}, \textbf{87} (2015),  217–-232, arXiv:1212.2918.

\bibitem{JJ} 
\newblock B. Jovanovi\' c and V. Jovanovi\' c,
\newblock {Geodesic and billiard flows on quadrics in pseudo--Euclidean spaces: L--A pairs
and Chasles theorem},
\newblock \emph{International Mathematics Research Notices},
\textbf{15} (2015), 6618--6638, arXiv:1407.0555.

\bibitem{LM} 
\newblock P. Libermann and  C. Marle,
\newblock \emph{Symplectic Geometry and Analytical Mechanics},
\newblock Riedel, Dordrecht, 1987.

\bibitem{KT} 
\newblock B. Khesin and S. Tabachnikov,
\newblock {Pseudo-Riemannian geodesics and billiards},
\newblock \emph{Adv. Math}.  \textbf{221} (2009), 1364--1396, arXiv:math/0608620.

\bibitem{KT2} 
\newblock B. Khesin and S. Tabachnikov,
\newblock {Contact complete integrability},
\newblock \emph{Regular and Chaotic Dynamics}, \textbf{15} (2010) 504–-520, arXiv:0910.0375.

\bibitem{KoTr} 
\newblock {\rrm V. V. Kozlov i D. V. Treshchev},
\newblock {\rit Billiardy, Geneticheskoe vvedenie v dinamiku sistem s udarami},
\newblock {\rrm  Izd-vo Mosk. un-ta, Moskva, 1991.}
English translation:

\newblock V. V. Kozlov and D. V. Treshchev,
\newblock \emph{Billiards. A Genetic Introduction to the Dynamics of Systems with Impacts},
\newblock Transl. Math. Monogr., 89, Amer. Math. Soc., Providence, RI, 1991.


\bibitem{MF} 
\newblock {\rrm A.\,S. Mishchenko i A.\,T. Fomenko,}
\newblock {\rrm Obobshchenny{\ji} metod Liuvill{\ja} integrirovani{\ja} gamiltonovyh sistem},
\newblock {\rit Funkc. analiz i ego prilozh}. \textbf{12}(2) (1978), 46--56
(Russian); English translation:

\newblock A.\,S. Mishchenko and  A.\,T. Fomenko,
\newblock \emph{Generalized Liouville method of integration of Hamiltonian systems}.
\newblock Funct. Anal. Appl. {\bf 12}, 113--121  (1978)

\bibitem{Moser} 
\newblock {J. Moser},
\newblock {Geometry of quadric and spectral theory}.
\newblock \emph{Chern Symposium 1979}, Berlin--Heidelberg--New York, 147--188, 1980.


\bibitem{MV}  
\newblock J. Moser and A. Veselov,
\newblock Discrete versions of some classical integrable systems and
factorization of matrix polynomials,
\newblock \emph{Comm. Math. Phys.} {\bf 139} (1991) 217--243.

\bibitem{N} 
\newblock{\rrm N.\,N. Nehoroshev,}
\newblock {\rrm Peremennye de{\ji}stvie--ugol i ih
obobshcheni{\ja}}
\newblock {\rit, Tr. Mosk. Mat. O.-va}. \textbf{26} (1972), 181--198 (Russian). English translation:

\newblock N.\, N. Nekhoroshev,
\newblock {Action-angle variables and their
generalization},
\newblock \emph{Trans. Mosc. Math. Soc.} {\bf 26} (1972),  180--198.

\bibitem{R}
\newblock M. Radnovi\'c,
\newblock {Topology of the elliptical billiard with the Hooke's potential},
\newblock \emph{Theoretical and Applied Mehanics}, {\bf 42} (2015), 1--9, arXiv:1508.01025.

\bibitem{Su} 
\newblock Yu. B. Suris,
\newblock  \emph{The Problem of Integrable Discretization: Hamiltonian Approach},
\newblock Progress in Mathematics, {\bf 219}. Birkhäuser Verlag, Basel, 2003.

\bibitem{Ta} 
\newblock S. L. Tabachnikov,
\newblock {Ellipsoids, complete integrability and hyperbolic geometry},
\newblock \emph{Mosc. Math. J.}, {\bf 2} (2002), no. 1, 183-–196.

\bibitem{Ta1} 
\newblock S. Tabachnikov,
\newblock \emph{Geometry and Billiards},
\newblock volume 30 of Student Mathematical Library. American Mathematical Society, Providence, RI; Mathematics Advanced Study Semesters, University
Park, PA, 2005.

\bibitem{Veselov} 
\newblock {\rrm A. P. Veselov},
\newblock {\rrm Integriruemye sistemy s diskretnym vremenem i raznostnye operatory},
\newblock {\rit Funkc. analiz i ego prilozh.} \textbf{22}(2) (1988), 1--13 (Russian); English translation:

\newblock A. P. Veselov,
\newblock {Integrable systems with discrete time, and difference operators},
\newblock \emph{Funct. Anal. Appl}. {\bf 22} (1988), 83--93.

\bibitem{Ves2} 
\newblock A.P. Veselov,
\newblock {Confocal surfaces and integrable billiards on the sphere and in the Lobachevsky space},
\newblock \emph{J. Geom. Phys.} {\bf 7} (1990) 81-–107.

\bibitem{Ves3} 
\newblock {\rrm A. P. Veselov},
\newblock {\rit Integriruemye otobrazheniya},
\newblock {\rit Uspehi Mat. Nauk,} \textbf{46}(5) (1991), 3--45 (Russian); English
translation:

\newblock A. P. Veselov,
\newblock {Integrable mappings},
\newblock \emph{Russ. Math. Surv}. \textbf{46} (5) (1991) 1--51.

\bibitem{Ves4} 
\newblock A.P. Veselov,
\newblock Growth and integrability in the dynamics of mappings,
\newblock \emph{Comm. Math. Phys}. {\bf 145} (1991), 181--193.

\end{thebibliography}
\end{document}